\documentclass[11pt]{article} 
\usepackage{fullpage}
\usepackage{subfigure}
\usepackage{graphicx}


\pagestyle{plain}

\newcommand{\nop}[1]{}
\newcommand{\TMM}{\mbox{$T_{\it mM}$}}

\newtheorem{theorem}{Theorem}
\newtheorem{corollary}{Corollary}
\newtheorem{lemma}{Lemma}

\newtheorem{example}{Example}
\newenvironment{proof}{{\noindent\bf\em Proof.}}{\qed}

\newcommand{\margin}[1]{\marginpar{\tiny #1}}

\makeatletter
\def\cramped {\parskip\@outerparskip\@topsep\parskip
  \@topsepadd2pt\itemsep0pt}
\makeatletter

\setlength{\fboxsep}{0pt}

\begin{document}
\def\lcp{{\mbox{\it lcp}}}
\def\lca{{\mbox{\it lca}}}
\def\deepest{{\mbox{\it deepest\_node}}}
\def\height{{\mbox{\it height}}}
\def\LA{{\mbox{\it LA}}}
\def\depth{{\mbox{\it depth}}}
\def\wdepth{{\mbox{\it wdepth}}}
\def\leaf{{\mbox{\it leaf}}}
\def\parent{{\mbox{\it parent}}}
\def\enclose{{\mbox{\it enclose}}}
\def\alph{\mathcal{A}}
\def\sizealph{\sigma}
\def\argmin{\mathop{\rm argmin}}
\def\argmax{\mathop{\rm argmax}}
\def\mathdef{\stackrel{\rm def}{=}}
\def\rank{{\mbox{\it rank}}}
\def\select{{\mbox{\it select}}}
\def\findclose{{\mbox{\it findclose}}}
\def\findopen{{\mbox{\it findopen}}}
\def\LA{{\mbox{\it level\_ancestor}}}
\def\rmq{{\mbox{\it rmq}}}
\def\RMQ{{\mbox{\it RMQ}}}
\def\rmqi{{\mbox{\it rmqi}}}
\def\RMQI{{\mbox{\it RMQi}}}
\def\degree{{\mbox{\it degree}}}
\def\child{{\mbox{\it child}}}
\def\childrank{{\mbox{\it child\_rank}}}
\def\excess{{\mbox{\it excess}}}
\def\sumw{{\mbox{\it sum}}}
\def\fwd{{\mbox{\it fwd\_search}}}
\def\bwd{{\mbox{\it bwd\_search}}}
\newcommand{\FID}{\textsc{fid}}
\newcommand{\OP}{{\fbox{\tt (}}}
\newcommand{\CP}{{\fbox{\tt )}}}
\newcommand{\CCP}{{\fbox{\tt ))}}}
\newcommand{\COP}{{\fbox{\tt )(}}}
\newcommand{\OCP}{{\fbox{\tt ()}}}
\newcommand{\leafrank}{{\it leaf\_rank}}
\newcommand{\leafselect}{{\it leaf\_select}}
\newcommand{\preorderrank}{{\it pre\_rank}}
\newcommand{\preorderselect}{{\it pre\_select}}
\newcommand{\postorderrank}{{\it post\_rank}}
\newcommand{\postorderselect}{{\it post\_select}}
\newcommand{\inorderrank}{{\it in\_rank}}
\newcommand{\inorderselect}{{\it in\_select}}
\newcommand{\leftmostleaf}{{\it lmost\_leaf}}
\newcommand{\rightmostleaf}{{\it rmost\_leaf}}
\newcommand{\ins}{{\it insert}}
\newcommand{\del}{{\it delete}}
\newcommand{\attach}{{\it attach}}
\newcommand{\detach}{{\it detach}}
\def\rankopen{{\mbox{$\it rank_\OP$}}}
\def\rankclose{{\mbox{$\it rank_\CP$}}}
\def\selectopen{{\mbox{$\it select_\OP$}}}
\def\selectclose{{\mbox{$\it select_\CP$}}}
\def\nextsibling{{\mbox{\it next\_sibling}}}
\def\prevsibling{{\mbox{\it prev\_sibling}}}
\def\firstchild{{\mbox{\it first\_child}}}
\def\lastchild{{\mbox{\it last\_child}}}
\def\subtreesize{{\mbox{\it subtree\_size}}}
\def\deepestnode{{\mbox{\it deepest\_node}}}

\def\levelnext{{\mbox{\it level\_next}}}
\def\levelprev{{\mbox{\it level\_prev}}}
\def\levelleftmost{{\mbox{\it level\_lmost}}}
\def\levelrightmost{{\mbox{\it level\_rmost}}}

\newcommand{\inspect}{{\it inspect}}
\newcommand{\isleaf}{{\it isleaf}}
\newcommand{\isancestor}{{\it isancestor}}

\def\polylog{{\mathop{\mathrm{polylog}}\nolimits}}
\def\poly{{\mathop{\mathrm{poly}}\nolimits}}
\def\lg{{\mathop{\mathrm{lg}}\nolimits}}
\newtheorem{definition}{Definition}
\newtheorem{problem}{Problem}
\def\square{\hbox{\rlap{$\sqcap$}$\sqcup$}}
\newcommand{\qed}{\hspace*{\fill}$\square$\vspace{1.5ex}}

\title{Fully-Functional Static and Dynamic Succinct Trees
\thanks{A preliminary version of this paper appeared in {\em Proc. SODA 2010},
pp. 134--149.}}
\author{%
Gonzalo Navarro\thanks{%
Department of Computer Science, University of Chile.
{\tt gnavarro@dcc.uchile.cl}.
Funded in part by Millennium Institute for Cell Dynamics and
Biotechnology (ICDB), Grant ICM P05-001-F, Mideplan, Chile.}
\and
Kunihiko Sadakane\thanks{%
Principles of Informatics Research Division, National Institute of
Informatics,
2-1-2 Hitotsubashi, Chiyoda-ku, Tokyo 101-8430, Japan.
{\tt sada@nii.ac.jp}.
Work supported in part by the Grant-in-Aid
of the Ministry of Education, Science, Sports and Culture of Japan.}
}	
\date{}

\maketitle

\newcommand{\keyw}[1]{{\bf #1}}
\newcommand{\Order}{\mathcal{O}}
\newcommand{\order}{o}

\begin{abstract}

We propose new succinct representations of ordinal trees,
which have been studied extensively.  It is known that any $n$-node
static tree can be represented in $2n + \order(n)$ bits and a number of
operations on the tree can be supported in constant time under the
word-RAM model.  However the data structures are complicated and 
difficult to dynamize.
We propose a simple and flexible data structure, called the range min-max
tree, that reduces the large number of relevant tree operations considered in
the literature to a few primitives that are carried out in constant time on 
sufficiently small trees. The result is extended to trees of arbitrary size,
achieving $2n + \Order(n /\polylog(n))$ bits of space, which is optimal for 
some operations.
The redundancy is significantly lower than any previous proposal.
For the dynamic case, where insertion/deletion of nodes is allowed, 
the existing data structures support very limited operations.  
Our data structure builds on the range min-max tree to achieve
$2n+\Order(n/\log n)$ bits of space and $\Order(\log n)$ time for all the
operations. We also propose an improved data structure using 
$2n+\Order(n\log\log n/\log n)$ bits and improving the time to the optimal
$\Order(\log n/\log \log n)$ for most operations. We extend our support to
forests, where whole subtrees can be attached to or detached from others, in
time $\Order(\log^{1+\epsilon} n)$ for any $\epsilon>0$.

Our techniques are of independent interest. An immediate derivation gives
improved solution to range minimum/maximum queries where consecutive elements
differ by $\pm 1$, achieving $\Order(n+n/\polylog(n))$ bits of space. A second
one stores an array of numbers supporting operations $sum$ and $search$ and
limited updates, in optimal time $\Order(\log n /\log\log n)$. A third one
allows representing dynamic bitmaps and sequences supporting
rank/select and indels, within zero-order entropy bounds and optimal time
$\Order(\log n / \log\log n)$ for all operations on bitmaps and polylog-sized
alphabets, and $\Order(\log n \log \sigma / (\log\log n)^2)$ on larger 
alphabet sizes $\sigma$. This improves upon the best existing bounds for 
entropy-bounded storage of dynamic sequences, compressed full-text self-indexes,
and compressed-space construction of the Burrows-Wheeler transform.

\end{abstract}

\section{Introduction} 

Trees are one of the most fundamental data structures, needless to say.
A classical representation of a tree with $n$ nodes uses $\Order(n)$ pointers 
or words. Because each pointer must distinguish all the nodes, it requires
$\log n$ bits\footnote{The base of logarithm is $2$ throughout this paper.}
in the worst case. Therefore the tree occupies $\Theta(n \log n)$ bits.
This causes a space problem for storing a large set of items in a tree.
Much research has been devoted to reducing the space to represent static 
trees~\cite{Jacobson89,MunRam01,MunRamRao01,MunRao04,GearyRRR04,GRR04,BenDemMunRamRamRao05,FLMM05,ChiLinLu05,DRR06,LuYeh07,HMR07,BMHR07,GGGRR07,Sada07a,JanSadSun07,FarMun08}
and dynamic trees~\cite{MunroRamanStormSODA01,RamRao03,CHLS07,Arr08},
achieving so-called \emph{succinct data structures} for trees.

A succinct data structure stores objects
using space close to the information-theoretic
lower bound, while simultaneously supporting a number of primitive operations
on the objects in constant time.
Here the information-theoretic lower bound for storing an object from a
universe with cardinality $L$ is $\log L$
bits because in the worst case
this number of bits is necessary to distinguish any two objects.

In this paper we are interested in \emph{ordinal trees}, in which
the children of a node are ordered.  
The information-theoretic lower bound for representing an ordinal tree with 
$n$ nodes is $2n-\Theta(\log n)$ bits because there exist
${{2n-1}\choose{n-1}}/(2n-1) = 2^{2n}/\Theta(n^\frac{3}{2})$ 
such trees~\cite{MunRam01}.
The size of a succinct data structure storing an object from the universe
is typically $(1+\order(1))\log L$ bits.
We assume that the computation model is
the word RAM with word length $\Theta(\log n)$
in which arithmetic and logical operations on $\Theta(\log n)$-bit integers
and $\Theta(\log n)$-bit memory accesses can be done in constant time.

Basically there exist three types of succinct representations
of ordinal trees: the balanced parentheses sequence (BP)~\cite{Jacobson89,MunRam01},
the level-order unary degree sequence (LOUDS)~\cite{Jacobson89,DRR06},
and the depth-first unary degree sequence (DFUDS)~\cite{BenDemMunRamRamRao05,JanSadSun07}.
An example of them is shown in Figure~\ref{fig:tree}.
LOUDS is a simple representation, but it lacks many basic operations, such as
the subtree size of a given node. Both BP and DFUDS build on a sequence of 
balanced parentheses, the former using the intuitive depth-first-search 
representation and the latter using a more sophisticated one. The advantage of
DFUDS is that it supports a more complete set of operations by simple 
primitives, most notably going to the $i$-th child of a node in constant time.
In this paper we focus on 
the BP representation, and achieve constant time for a large set of
operations, including all those handled with DFUDS. Moreover, as we manipulate 
a sequence of balanced parentheses, our data structure can be used to
implement a DFUDS representation as well.

\subsection{Our contributions}

We propose new succinct data structures for ordinal trees encoded with
balanced parentheses, in both static and dynamic scenarios.
\paragraph{\em Static succinct trees.} 
For the static case we obtain the following result.

\begin{theorem}\label{th:main}
For any ordinal tree with $n$ nodes, all operations in
Table~\ref{tab:ops} except $\ins$ and $\del$
are carried out in constant time $\Order(c)$
with a data structure using $2n + \Order(n/\log^c n)$ bits of space
on a $\Theta(\log n)$-bit word RAM, for any constant $c > 0$.
The data structure can be constructed from the balanced parentheses sequence 
of the tree, in $\Order(n)$ time using $\Order(n)$ bits of space.
\end{theorem}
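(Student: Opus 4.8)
The plan is to encode the tree by its balanced-parentheses sequence $P[1..2n]$ and to route every operation through the excess function $\excess(P,i)=\rankopen(P,i)-\rankclose(P,i)$, which tracks the depth profile of $P$. First I would establish (this is the combinatorial heart, carried out independently of the data structure) that each operation of Table~\ref{tab:ops} other than $\ins$ and $\del$ can be expressed as a constant number of calls to a short list of primitives on $P$: the forward and backward excess searches $\fwd(i,d)$ and $\bwd(i,d)$ (the smallest $j>i$, resp.\ largest $j<i$, with $\excess(P,j)=\excess(P,i)+d$), the range minimum and maximum of the excess together with their positions ($\rmq,\RMQ,\rmqi,\RMQI$), and $\rankopen/\rankclose/\selectopen/\selectclose$. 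For example $\findclose,\findopen,\enclose,\levelnext$ and subtree sizes are all $\Order(1)$ combinations of $\fwd,\bwd$ and min/max queries. Granting this reduction, it suffices to support the listed primitives in time $\Order(c)$ within $2n+\Order(n/\log^c n)$ bits.

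Second, I would solve the primitives on a word-sized block. Cut $P$ into blocks of $b=\frac12\log n$ parentheses, so that a block fits in one machine word. A universal lookup table indexed by the block contents together with a residual excess returns, in $\Order(1)$ time, the answer of each primitive restricted to the block, or reports that the search exits the block; since there are only $2^{b}=\sqrt n$ distinct blocks these tables occupy $\Order(\sqrt n\,\polylog n)=\order(n)$ bits. This makes every \emph{local} query constant time.

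Third --- the range min-max tree --- I would build a tree over the blocks storing at each node the minimum and maximum excess attained inside its range (as relative values). A crossing query, say $\fwd(i,d)$ with target $e^{*}=\excess(P,i)+d$, is answered in three phases: scan the tail of $i$'s block by table lookup; if the target is not met, climb to the first ancestor that has a following sibling whose interval $[\min,\max]$ contains $e^{*}$, then descend to the first block that attains $e^{*}$; finally locate the exact position inside that block by table. To turn this into $\Order(c)$ time with redundancy $\Order(n/\log^c n)$, I would make the tree shallow: take leaf blocks of size $\Theta(\log^{c+1} n)$, giving $n/\log^{c+1}n$ leaves and hence $\Order((n/\log^{c+1}n)\cdot\log n)=\Order(n/\log^c n)$ bits for the stored min/max fields, and handle the interior of a leaf block by a recursively nested min-max tree of a constant number of levels, bottoming out at the word-sized blocks of the previous step.

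The main obstacle is making per-node navigation constant time, so that a traversal of constantly many levels costs only $\Order(c)$: an internal node has many children, and among them we must find in $\Order(1)$ the first whose interval $[\min,\max]$ crosses $e^{*}$. I would resolve this by packing a node's children min/max values into $\Order(1)$ words --- relative excesses within a block need only $\Order(\log\log n)$ bits --- and precomputing a second family of tables that, given such a packed vector and a target, returns the index of the first crossing child in a single lookup; the constant number of levels then yields $\Order(c)$ total time. Finally, for construction, a single left-to-right sweep of $P$ computes all block minima and maxima and fills the tree bottom-up in $\Order(n)$ time, and the tables are built in $\order(n)$ time; since we only ever store $P$ itself plus the $\order(n)$-bit directories, the working space stays $\Order(n)$ bits.
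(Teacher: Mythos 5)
Your overall architecture (reduce everything to $\fwd/\bwd/\rmq/\RMQ/\rank/\select$ on the excess sequence, answer them with a range min-max tree plus word-level universal tables) is the same as the paper's, and your treatment of a single polylogarithmic-sized block matches Section~\ref{sec:small}. But there are two genuine gaps in scaling this to an $n$-node tree within the claimed bounds.

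First, the inter-block search. You want a top-level tree over the $n/\log^{c+1}n$ fat leaves that has \emph{constant} depth, so that a crossing $\fwd$ query costs $\Order(c)$. Constant depth over $n/\log^{c+1}n$ leaves forces arity $n^{\Omega(1/c)}$, and at that level the children's min/max excess values are genuinely $\Theta(\log n)$-bit quantities (they are no longer small relative offsets), so you cannot pack a node's children into $\Order(1)$ words and your table-lookup trick for ``first child whose $[\min,\max]$ interval contains $e^{*}$'' breaks down. If instead you keep arity $\Theta(\log n/\log\log n)$, the top-level tree has depth $\Theta(\log n/\log\log n)$ and the query is no longer constant time. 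The paper's proof does something structurally different here: across blocks it builds the left-to-right-minima tree $T_{lrm}$ (a 2d-Min-Heap on the block minima) and answers the search as a \emph{weighted level-ancestor} query (Lemma~\ref{lem:weighted-level-ancestor}, using ladders, fusion trees and sparse bitmaps), with separate $\Order(\tau)$-bit RMQ structures on the block minima/maxima, and a pioneer/marked-node machinery for $\degree$, $\child$ and $\childrank$ (which your proposal does not address at all --- these do not reduce to your listed primitives and need the counts $n'[\cdot]$ of minima plus the per-marked-node lists of Section~\ref{sec:otherlarge}).

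Second, the space accounting inside a block. Storing min/max (even as $\Order(\log\log n)$-bit relative values) for every word-sized chunk costs $\Theta(n\log\log n/\log n)$ bits over the whole sequence, which already exceeds the claimed $\Order(n/\log^c n)$ redundancy for any $c\ge 1$. The paper avoids this by encoding each block of $N=B^c$ parentheses as a P\v{a}tra\c{s}cu aB-tree (Lemma~\ref{lem:succincter}), which stores the block \emph{together with all its internal $m/M/n/e$ fields} in $N+2$ bits, pushing all the per-chunk overhead into shared universal tables of $\Order(\sqrt{2^w})$ bits. Without this step (or an equivalent compression of the inner directories) your construction does not meet the stated space bound.
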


The space complexity of our data structures significantly improves upon the 
lower-order term achieved in previous representations.
For example, the extra data structure for {\LA} requires
$\Order(n \log\log n/\sqrt{\log n})$ bits~\cite{MunRao04},
or $\Order(n (\log\log n)^2/\log n)$ bits\footnote{This data structure is for DFUDS, but the same technique can be also applied to BP.}~\cite{JanSadSun07},
and that for {\child} requires $\Order(n/(\log\log n)^2)$ bits~\cite{LuYeh07}.
Ours requires $\Order(n /\log^c n)$ bits for all of the operations. We
show in the Conclusions that this redundancy is optimal for some operations.

The simplicity and space-efficiency of our data structures
stem from the fact that any query operation in 
Table~\ref{tab:ops} is reduced to a few basic operations
on a bit vector, which can be efficiently solved
by a \emph{range min-max tree}.
This approach is different from previous studies in which
each operation needs distinct auxiliary data structures.
Therefore their total space is the summation of all the data structures.
For example, the first succinct representation of BP~\cite{MunRam01}
supported only {\findclose}, {\findopen}, and {\enclose} (and other easy
operations) and each operation used different data structures.
Later, many further operations such as {\leftmostleaf}~\cite{MunRamRao01},
{\lca}~\cite{Sada07a},
{\degree}~\cite{ChiLinLu05}, {\child} and {\childrank}~\cite{LuYeh07},
{\LA}~\cite{MunRao04}, 
were added to this representation
by using other types of data structures for each.
There exists another elegant data structure for BP supporting
{\findclose}, {\findopen}, and {\enclose}~\cite{GearyRRR04}.
This reduces the size of the data structure for these basic operations,
but still has to add
extra auxiliary data structures for other operations.

\paragraph{\em Dynamic succinct trees.} 
Our approach is suitable for the dynamic maintenance of trees.
Former approaches in the static case use two-level data structures
to reduce the size, which causes difficulties in the dynamic case.
On the other hand, our approach using the range min-max tree is
easily applied in this scenario, resulting in simple and efficient
dynamic data structures. This is illustrated by the fact that all the
operations are supported. The following theorem summarizes our results.

\begin{theorem}\label{th:dyn}
On a $\Theta(\log n)$-bit word RAM, all operations on a dynamic ordinal tree
with $n$ nodes can be carried out within the worst-case complexities given in 
Table~\ref{tab:ops}, using a data structure that requires 
$2n + \Order(n \log\log n/\log n)$ bits. 
Alternatively, the operations of the table can be carried out in 
$\Order(\log n)$ time using $2n + \Order(n/\log n)$ bits of space.
\end{theorem}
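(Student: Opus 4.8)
The plan is to keep all the work at the level of the balanced-parenthesis string and to reuse the reduction behind Theorem~\ref{th:main}. That reduction expresses every operation of Table~\ref{tab:ops} through a constant number of calls to the primitives $\fwd$, $\bwd$, $\rank$, $\select$ and the excess minimum/maximum queries $\rmq$/$\RMQ$; the only genuinely new operations are $\ins$ and $\del$, which amount to inserting or deleting a constant number of bits of the string. It therefore suffices to support these primitives together with bit-level insert and delete on a dynamic sequence of $2n$ parentheses, after which the claimed time and space bounds follow.

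First I would store the parenthesis string in the leaves of a balanced search tree, each leaf holding a polylogarithmic number of bits, and equip each leaf with the static range min-max structure of Theorem~\ref{th:main}, which answers any primitive restricted to a single leaf in $\Order(1)$ time by table lookup. Each internal node keeps, for every child subtree, three aggregates measured relative to the start of that subtree: its length $p$ in parentheses (for $\rank$, $\select$ and for locating positions), its total excess $e$ (to translate a global target excess into a local one), and the minimum and maximum excess $m,M$ reached inside it. With these fields each primitive becomes a single root-to-leaf descent: for $\fwd$ I carry the running excess downward and, at each node, test it against the children's ranges $[m,M]$ (shifted by the running $e$) to identify the first subtree that attains the target, recurse into it, and finish inside the reached leaf; $\bwd$, $\rmq$ and $\RMQ$ are symmetric, and $\rank$/$\select$ use only $p$ and $e$.

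The two trade-offs correspond to two choices of fan-out and leaf size. For the $\Order(\log n)$-time, $2n+\Order(n/\log n)$-bit structure I would use a balanced binary tree with leaves of $\Theta(\log^2 n)$ bits: the height is $\Order(\log n)$, the child test costs $\Order(1)$ per node, the reached leaf is handled in $\Order(1)$, and a leaf update is a rebuild over $\Order(\log n)$ machine words; the $\Theta(n/\log^2 n)$ internal nodes store $\Order(\log n)$ bits each, for redundancy $\Order(n/\log n)$. For the optimal $\Order(\log n/\log\log n)$ time I would shrink the leaves so that a leaf update touches only $\Order(\log n/\log\log n)$ words, and raise the fan-out so that the height drops to $\Order(\log n/\log\log n)$; the larger number of leaves, each carrying an $\Order(\log n)$-bit pointer, is what raises the redundancy to $\Order(n\log\log n/\log n)$. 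Updates in both variants are standard: an $\ins$ or $\del$ rebuilds one leaf and then walks to the root refreshing the $(p,e,m,M)$ aggregates, splitting or merging overflowing or underflowing leaves and rebalancing along the single root path, as in a weight-balanced B-tree.

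The step I expect to be the main obstacle is the \emph{constant-time child selection} required by the high fan-out of the second variant: at a node with $b=\omega(1)$ children I must locate, in $\Order(1)$ time, the first child whose excess range contains the target. Since the raw minima and maxima of the excess span up to $\Theta(\log n)$ bits, packing $b$ of them into $\Order(1)$ machine words is not immediate. The intended remedy is to store the $m,M$ fields relative to each node and, at the lower levels that dominate the node count, to exploit that their spans are only $\Order(\log\log n)$ bits, so that a whole node's summaries fit in $\Order(1)$ words and a broadword comparison assisted by a universal lookup table returns the correct child in constant time; the few high nodes, where the spans are larger, must be treated separately. Verifying that these relative encodings route every primitive correctly and that they can be refreshed within the update budget is the crux of the argument.
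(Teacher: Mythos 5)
Your first variant (binary tree, $\Theta(\log^2 n)$-bit leaves, $(p,e,m,M)$ aggregates per node, $\Order(\log n)$ time and $2n+\Order(n/\log n)$ bits) is essentially the paper's construction, and the overall architecture of the second variant (a B-tree of fan-out $k=\Theta(\sqrt{\log n})$ over the same aggregates) also matches. The gap is exactly at the point you flag as the crux, and your proposed remedy does not close it. You argue that at ``the lower levels that dominate the node count'' the relative $m,M$ values span only $\Order(\log\log n)$ bits, so $k$ of them pack into $\Order(1)$ words. But a query descends through \emph{all} $\Theta(\log n/\log\log n)$ levels, not just the numerous bottom ones: at height $h$ a child subtree covers $L\cdot k^{h}$ parentheses, so packing $k$ relative excess values into one word requires $k(\log L + h\log k)=\Order(\log n)$, i.e.\ $h=\Order(\sqrt{\log n}/\log\log n)$. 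All levels above that --- the vast majority of the path --- fail the packing, and a per-node binary search there costs $\Order(\log\log n)$ each, giving $\Order(\log n)$ total rather than $\Order(\log n/\log\log n)$. The paper avoids this with machinery that is insensitive to the magnitude of the values: the searchable-partial-sums structure of Raman et al.\ for the length/rank fields, and \emph{Cartesian trees} over $m[1,k]$ and $M[1,k]$, whose $\Order(\sqrt{\log n})$-bit shape alone answers range-minimum-position queries by universal table lookup; these Cartesian trees must then be maintained under $\pm 1$ updates via rotations and a cyclically refreshed truncated-difference array, which is a substantial piece of work your proposal does not anticipate.

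Two further omissions matter for matching Table~\ref{tab:ops}. First, $\degree$, $\child$ and $\childrank$ need a count-of-minima aggregate (the paper's $n[\cdot]$ fields plus a complete tree over them), absent from your $(p,e,m,M)$ list; maintaining it forces $\Order(\log n)$ updates, which is why the table has two columns. Second, even with Cartesian trees the general $\fwd$/$\bwd$ search ($f(i,d)$, via a binary search over the left-to-right-minima sequence) costs $\Order(\log\log n)$ per node, so the paper only gets $\Order(\log n)$ for $\fwd$/$\bwd$ (hence for $\LA$, $\levelnext$, etc.) and must separately adapt Chan et al.'s technique to obtain $\Order(\log n/\log\log n)$ for $\findclose$, $\findopen$ and $\enclose$. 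Your reduction ``every operation is a constant number of calls to the primitives'' hides this distinction. Finally, since the theorem claims worst-case bounds, ``standard'' B-tree rebalancing is not quite enough: a cascading split would rebuild $\Order(\log n/\log\log n)$ nodes at $\Order(\sqrt{\log n})$ each, so the paper uses Fleischer's cursor-based deamortized splitting.
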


Note we achieve time complexity $\Order(\log n / \log\log n)$ for most 
operations, including {\ins} and {\del}, if we solve {\degree}, {\child}, 
and {\childrank} naively. Otherwise we can achieve 
$\Order(\log n)$ complexity for these, yet also for {\ins} and {\del}.%
\footnote{In the conference version of this paper \cite{SNsoda10} we 
erroneously affirm we can obtain $\Order(\log n / \log\log n)$ for all these
operations, as well as $\LA$, $\levelnext/\levelprev$, and 
$\levelleftmost/\levelrightmost$, for which we can actually obtain only 
$\Order(\log n)$.}
The time complexity $\Order(\log n / \log\log n)$ is optimal: Chan et
al.~\cite[Thm.~5.2]{CHLS07} showed that just supporting the most basic 
operations of Table~\ref{tab:ops} ({\findopen}, {\findclose}, and {\em enclose},
as we will see) plus {\ins} and {\del}, requires this time even in the 
amortized sense, by a reduction from Fredman and Saks's lower bounds on 
{\rank} queries \cite{FreSak89}.

Moreover, we are able to attach and detach whole subtrees, in time 
$\Order(\log^{1+\epsilon} n)$ for any constant $\epsilon>0$ 
(see Section~\ref{sec:introdyn} for the precise details). 
These operations had never been considered before in 
succinct tree representations.

\paragraph{\em Byproducts.}
Our techniques are of more general interest. A subset of our data structure is
able to solve the well-known ``range minimum query'' problem \cite{BenFar00}.
In the important case where consecutive elements differ by $\pm 1$, we improve 
upon the best current space redundancy of $\Order(n\log\log n/\log n)$ bits
\cite{Fis10}.

\begin{corollary} \label{cor:rmq}
Let $E[0,n-1]$ be an array of numbers with the property that $E[i] - E[i-1]
\in \{-1,+1\}$ for $0 < i < n$, encoded as a bit vector $P[0,n-1]$ such that
$P[i] = 1$ if $E[i]-E[i-1] = +1$ and $P[i]=0$ otherwise. Then, in a RAM machine
we can preprocess $P$ in $\Order(n)$ time and $\Order(n)$ bits such that 
range maximum/minimum queries are answered in constant $\Order(c)$ time and 
$\Order(n/\log^c n)$ extra bits on top of $P$.
\end{corollary}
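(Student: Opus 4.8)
\noindent\emph{Proof sketch.}
The plan is to recognize Corollary~\ref{cor:rmq} as a direct specialization of Theorem~\ref{th:main}, obtained by keeping only the range min-max tree and discarding every tree-specific structure. The key observation is that $E$ is, up to an additive constant, the \emph{excess} sequence of $P$ read as a parenthesis string under the convention $1\mapsto\OP$, $0\mapsto\CP$: the signed prefix sum $\excess(i)=2\,\rank(P,i)-(i+1)$ satisfies $\excess(i)-\excess(i-1)=2P[i]-1\in\{-1,+1\}$, which is the same sign pattern as $E$, so $E[i]=\excess(i)+\delta$ for a fixed constant $\delta$. Since an additive constant shifts all values equally, it leaves the location of minima and maxima unchanged, whence $\argmin_{i\le k\le j}E[k]=\rmq(i,j)$ and $\argmax_{i\le k\le j}E[k]=\RMQ(i,j)$, the two range min-max tree primitives. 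Both appear among the operations of Table~\ref{tab:ops}, so by Theorem~\ref{th:main} they are answered in $\Order(c)$ time using a structure of $\Order(n/\log^c n)$ redundant bits, built in $\Order(n)$ time and $\Order(n)$ bits; the corollary's bounds then follow.

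First I would make the reduction rigorous by isolating the part of the Theorem~\ref{th:main} data structure that actually supports $\rmq$ and $\RMQ$. The range min-max tree stores, for the blocks into which $P$ is partitioned, the minimum and maximum excess reached within each block together with the relative position attaining them, arranged so that a query descends the tree comparing these stored extrema and finishes with a single in-block scan resolved by a precomputed universal lookup table. I would verify that this sub-structure alone fits in $\Order(n/\log^c n)$ bits on top of $P$ and answers $\rmq$/$\RMQ$ in $\Order(c)$ time, and that its block extrema are computed in one left-to-right pass over $P$, giving $\Order(n)$ construction time within $\Order(n)$ bits. Because we never materialize the remaining auxiliary structures (those supporting $\findclose$, $\child$, $\LA$, and the like), the extra space is governed solely by the min-max tree.

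The main point requiring care, and the only place where the corollary is not a verbatim restriction of Theorem~\ref{th:main}, is that $P$ here is an \emph{arbitrary} bit vector, whereas the theorem speaks of the balanced parenthesis sequence of a genuine ordinal tree. The hard part is therefore to confirm that the $\rmq$/$\RMQ$ query algorithm never relies on $P$ being balanced or on $E$ returning to any distinguished value. Operations such as $\fwd$, $\bwd$, and $\findclose$ do exploit the matching of parentheses, but pure range minimum and maximum do not: the stored block minima and maxima, and the relative positions attaining them, are well defined for any walk with steps in $\{-1,+1\}$, and the tree descent plus universal-table lookup uses only comparisons among these values. Once this independence from balancedness is checked, the reduction is complete and the stated time, space, and construction bounds hold verbatim.
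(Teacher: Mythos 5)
Your proposal is correct and follows essentially the derivation the paper intends: the range min-max tree and its $\rmqi$/$\RMQI$ algorithms (Sections~\ref{sec:small}, \ref{sec:rmq} and \ref{sec:otherlarge}) are developed from the outset for \emph{possibly non-balanced} 0,1 vectors and arbitrary $\pm 1$ functions, so the corollary is exactly the specialization you describe, keeping only the $m/M$ data and discarding the tree-specific structures. Your explicit check that the $\rmq$/$\RMQ$ machinery never uses balancedness is the right point to isolate, and it is precisely why the paper can state this as an immediate corollary.
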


Another direct application, to the representation of a dynamic array of
numbers, yields an improvement to the best current alternative 
\cite{MN08} by a $\Theta(\log\log n)$ time factor. If the updates are limited, 
further operations $sum$ (that gives the sum of the numbers up to some
position) and $search$ (that finds the position where a given sum is exceeded) 
can be supported, and our complexity matches the lower bounds for {\em 
searchable partial sums} by P{\v a}tra{\c s}cu and Demaine \cite{PD06} 
(if the updates are not limited one can still use previous 
results \cite{MN08}, which are optimal in that general case).
We present our result in a slightly more general form.

\begin{lemma} \label{lem:dynpartialsums}
A sequence of $n$ variable-length constant-time self-delimiting%
\footnote{This means that one can distinguish the first code $x_i$ from a
bit stream $x_i\alpha$ in constant time.}
bit codes $x_1 \ldots x_n$, where $|x_i| = \Order(\log n)$,
can be stored within $(\sum |x_i|)(1+o(1))$ 
bits of space, so that we can $(i)$ compute any sequence of codes $x_i,
\ldots, x_j$, $(ii)$ update any code $x_i \leftarrow y$, $(iii)$ insert a new 
code $z$ between any pair of codes, and $(iv)$ delete any code $x_d$ from the 
sequence, all in $\Order(\log n / \log\log n)$ time (plus $j-i$ for
$(i)$). Moreover, let $f(x_i)$ be a nonnegative
integer function computable in constant time from the codes. If the updates
and indels are such that $|f(y)-f(x_i)|,f(z),f(x_d) = \Order(\log n)$, then we 
can also support operations $sum(i) = \sum_{j=1}^i f(x_i)$ and 
$search(s) = \max\{i, sum(i) \le s\}$ within the same time.
\end{lemma}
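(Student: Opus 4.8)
The plan is to build a balanced search tree over the sequence of codes, where the leaves hold blocks of consecutive codes packed into machine words, and the internal nodes store aggregate counts that let us navigate in $\Order(\log n/\log\log n)$ time. Concretely, I would partition the $n$ codes into $\Theta(n/b)$ leaves, each holding a contiguous block of bits of size $\Theta(b)$ where $b = \Theta(\log^2 n)$, so each leaf occupies $\Theta(\log^2 n)$ bits and can be scanned in $\Order(\log n/\log\log n)$ time using word-parallel (broadword) operations on $\Theta(\log n)$-bit words. Over these leaves I would place a tree of degree $\Theta(\log^\epsilon n)$ (a weight-balanced B-tree, as in \cite{MN08}), so that its height is $\Order(\log n/\log\log n)$. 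Each internal node stores, for each of its children, two cumulative quantities: the number of codes in that child's subtree, and the total bit-length $\sum|x_i|$ of that subtree; these per-node arrays are small enough ($\Order(\log^\epsilon n)$ entries of $\Order(\log n)$ bits) to be searched in constant time with a precomputed table or broadword trick.

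The key steps, in order, are as follows. First I would show that locating the leaf containing the $i$-th code, and the exact bit offset of $x_i$ within it, takes $\Order(\log n/\log\log n)$ time: we descend the tree, at each of the $\Order(\log n/\log\log n)$ levels doing a constant-time search among the $\Order(\log^\epsilon n)$ cumulative counts to pick the right child. The self-delimiting property of the codes then lets us walk within the target leaf in constant amortized time per code, giving operation $(i)$ in $\Order(\log n/\log\log n + (j-i))$ time. Second, for update $(ii)$, insert $(iii)$ and delete $(iv)$, I would locate the position as above, then rewrite the affected leaf by shifting bits with a constant number of word operations; since $|x_i|,|y|,|z| = \Order(\log n)$ the shift touches $\Order(\log n)$ bits, i.e.\ $\Order(1)$ words, and the change in code count and bit-length is propagated up the $\Order(\log n/\log\log n)$ ancestors in constant time each. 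Third, I would handle the rebalancing: when a leaf overflows or underflows (its bit-length leaves the $\Theta(\log^2 n)$ window), we split or merge it, and the weight-balanced B-tree guarantees that structural rebalancing is amortized (or, with standard de-amortization, worst-case) $\Order(\log n/\log\log n)$. Fourth, for the $sum$/$search$ extension I would additionally store at each child pointer the cumulative $f$-value $\sum f(x_i)$ of its subtree; $search(s)$ then becomes the same top-down descent comparing $s$ against these cumulative $f$-sums, and within a leaf we finish by a broadword scan. The hypothesis that $|f(y)-f(x_i)|,f(z),f(x_d)=\Order(\log n)$ ensures each update changes every ancestor's stored $f$-sum by an $\Order(\log n)$-bit amount, so the fields stay $\Order(\log n)$ bits wide and the constant-time-per-level propagation is preserved.

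The space analysis is the other component: the leaves store the codes themselves, totalling $\sum|x_i|$ bits plus a controlled amount of slack per block, and the internal tree has only $\Order(n/b \cdot \frac{1}{\log^\epsilon n})$ nodes near the leaves (geometrically decreasing upward), each carrying $\Order(\log^\epsilon n)$ pointers and counters of $\Order(\log n)$ bits; this sums to $\Order(n\log n/b) = \Order(n/\log n)$ bits of overhead, which is $\order(\sum|x_i|)$ since $\sum|x_i| = \Omega(n)$. Together with the per-leaf slack of $\Order(1)$ words, the total is $(\sum|x_i|)(1+\order(1))$ as claimed. The main obstacle I expect is the interaction between variable-length codes and word-aligned manipulation: because codes do not align to word boundaries, the in-leaf scan, shift, and the broadword $search$ must all cope with codes straddling word boundaries, and making each of these run in genuinely constant time per word (rather than incurring an extra $\Order(\log\log n)$ factor) requires careful use of the self-delimiting, constant-time-decodable assumption together with precomputed tables of size $\order(n)$. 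Establishing that these low-level primitives are constant-time, and that the rebalancing can be de-amortized to worst case without inflating the time bound, is where the real care is needed; the high-level tree navigation and the counting arguments are routine given \cite{MN08,PD06}.
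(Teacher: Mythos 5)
Your high-level architecture matches the paper's: leaves packing whole codes into polylogarithmically many bits, an internal tree of arity $\log^{\Theta(1)} n$ storing per-child cumulative code counts, bit lengths and $f$-sums (the paper implements these with the searchable-partial-sums structure of Lemma~\ref{lem:partialsums}, which is also what makes the constant-time-per-level update work under the $\Order(\log n)$ bound on $|f(y)-f(x_i)|,f(z),f(x_d)$), and table-driven word-parallel scanning inside leaves. However, your parameter choice breaks the claimed time bound. With leaves of $b=\Theta(\log^2 n)$ bits, locating an arbitrary code inside a leaf by advancing $\Theta(\log n)$ bits per constant-time table lookup costs $\Theta(\log^2 n/\log n)=\Theta(\log n)$ time, not $\Order(\log n/\log\log n)$; the same holds for the bit shift required by an insertion or deletion, which must move the entire tail of the leaf (every bit after the modified code), i.e.\ $\Theta(b/\log n)$ word operations --- your claim that the shift ``touches $\Order(\log n)$ bits, i.e.\ $\Order(1)$ words'' confuses the shift \emph{amount} with the number of bits that must move. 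The paper sets the leaf size to $L=\Theta(\log^2 n/\log\log n)$ precisely so that both the in-leaf scan and the shift cost $\Order(L/\log n)=\Order(\log n/\log\log n)$. With that fix your argument goes through; as written, every operation degrades to $\Order(\log n)$.

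Two further points you assert without justification. First, the $(1+\order(1))$ space factor requires that a leaf holding $\ell$ bits of codes occupy only $\ell+\Order(\log n)$ bits of memory; a fixed allocation of $2L$ bits per leaf wastes a constant fraction of the space. The paper obtains the $\Order(1)$-words-per-leaf slack via the memory-management scheme of Section~\ref{sec:memory} (equal-length segments packed contiguously into fixed-size cells), and you need something equivalent. Second, because each insertion adds a whole code of $\Order(\log n)$ bits rather than a single bit, only $\Order(\log n/\log\log n)$ insertions can double a leaf's size; the paper therefore checks for splitting every time the rebalancing cursor returns to the leaf, advances the cursor on updates as well as insertions, and allows leaves to reach $3L$ bits. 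Your appeal to ``standard de-amortization'' of a weight-balanced B-tree must be backed by an analogous argument guaranteeing a worst-case constant number of splits/merges per operation, since each split forces a rebuild of the parent's partial-sum arrays and cannot be allowed to cascade.
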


For example we can store $n$ numbers $0 \le a_i < 2^k$ within $kn+o(kn)$ bits, 
by using their $k$-bit binary representation $[a_i]_2$ as the code, and their
numeric value as $f([a_i]_2)=a_i$, so that we support $sum$ and $search$ on the
sequence of numbers. If the numbers are very different in magnitude we can 
$\delta$-encode them to achieve $(\sum \log a_i)(1+o(1)) + \Order(n)$ bits of space.
We can also store bits, seen as 1-bit codes, in $n+o(n)$ bits and and carry out
$sum = \rank$ and $search = \select$, insertions and deletions, in 
$\Order(\log n / \log\log n)$ time. 

A further application of our results to the compressed representation of 
sequences achieves a result summarized in the next theorem.

\begin{theorem} \label{thm:seqs}
Any sequence $S[0,n-1]$ over alphabet $[1,\sigma]$ can be stored in 
$nH_0(S)+\Order(n \log\sigma/\log^\epsilon n + \sigma\log^\epsilon n)$
bits of space, for any constant
$0<\epsilon<1$, and support the operations $\rank$, $\select$, $\ins$,
and $\del$, all in time $\Order\left(\frac{\log n}{\log\log n}
   \left(1+\frac{\log\sigma}{\log\log n}\right)\right)$. 
For polylogarithmic-sized alphabets, this is the optimal 
$\Order(\log n / \log\log n)$; otherwise it
is $O\left(\frac{\log n \log\sigma}{(\log\log n)^2}\right)$.
\end{theorem}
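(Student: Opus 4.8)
The plan is to represent $S$ using a wavelet-tree-like decomposition combined with the dynamic partial-sums machinery of Lemma~\ref{lem:dynpartialsums}. First I would build a balanced wavelet tree of height $\lceil\log\sigma\rceil$ over the alphabet $[1,\sigma]$, so that each symbol is routed to a leaf according to the $\lceil\log\sigma\rceil$ bits of its alphabet code. At each internal node we store a bitmap: for a position of $S$ descending through that node, we record a $0$ or $1$ according to which half of the current sub-alphabet the symbol belongs to. A query $\rank$ or $\select$ then threads through $\Order(\log\sigma)$ nodes, performing one bitmap $\rank/\select$ at each, and an $\ins$ or $\del$ of a symbol likewise touches one bitmap per level, inserting or deleting a single bit.

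Next I would represent each of these bitmaps using the dynamic bitmap structure obtained from Lemma~\ref{lem:dynpartialsums}, which supports $\rank=\sumw$, $\select=\fwd$-style $search$, $\ins$ and $\del$ in time $\Order(\log n/\log\log n)$ per bitmap, within zero-order-entropy space. Summing the per-bitmap entropies across the levels of the wavelet tree telescopes, by the standard wavelet-tree identity, to $nH_0(S)$; the redundancy terms of the individual structures accumulate to the claimed $\Order(n\log\sigma/\log^\epsilon n + \sigma\log^\epsilon n)$ bound, where the $\sigma\log^\epsilon n$ term pays for the pointers/overhead of the $\Order(\sigma)$ tree nodes. Multiplying the $\Order(\log\sigma)$ levels by the $\Order(\log n/\log\log n)$ per-level cost gives the $\Order\!\left(\frac{\log n\log\sigma}{\log\log n}\right)$ naive bound.

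To reach the sharper time stated in the theorem, I would not use a binary wavelet tree but a \emph{multiary} one: instead of splitting the alphabet in two at each node, split it into $\Theta(\log^\delta n)$ parts, reducing the height to $\Order(\log\sigma/\log\log n)$. At each node the bitmap is replaced by a sequence over a polylog-sized alphabet, which by the polylog-alphabet case of our dynamic sequence result is handled in the optimal $\Order(\log n/\log\log n)$ time per level. The total query/update time then becomes $\Order\!\left(\frac{\log n}{\log\log n}\cdot\frac{\log\sigma}{\log\log n}\right)=\Order\!\left(\frac{\log n\log\sigma}{(\log\log n)^2}\right)$, collapsing to $\Order(\log n/\log\log n)$ when $\sigma=\polylog(n)$ and only one level is needed, exactly as claimed. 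The entropy analysis is unchanged since the wavelet-tree decomposition identity holds for any arity.

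The main obstacle I expect is controlling the redundancy term precisely while making the structure dynamic: the $\sigma\log^\epsilon n$ overhead must be attributed carefully to the tree's node pointers and the self-delimiting code tables, and one must verify that the zero-order-entropy guarantee of Lemma~\ref{lem:dynpartialsums} survives the per-level bit insertions and deletions without the redundancy degrading as the bitmaps grow and shrink. Tuning the arity parameter $\delta$ against $\epsilon$ so that both the space bound $\Order(n\log\sigma/\log^\epsilon n)$ and the optimal time bound hold simultaneously is the delicate balancing step; the rest follows from standard wavelet-tree bookkeeping once the dynamic sequence primitive on polylog alphabets is in place.
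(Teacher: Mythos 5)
There is a genuine gap at the heart of your plan: the step where you replace the binary wavelet tree by a multiary one and then handle each node's sequence over a $\Theta(\log^\delta n)$-sized sub-alphabet ``by the polylog-alphabet case of our dynamic sequence result'' is circular. That polylog-alphabet case is precisely (an instance of) the theorem you are proving; the multiary wavelet tree only \emph{reduces} a large alphabet to small ones, and the small-alphabet case must be constructed from scratch. What you have actually established independently is only the binary case ($\sigma=2$) via Lemma~\ref{lem:dynpartialsums}, and iterating that gives the weaker $\Order\bigl(\frac{\log n\log\sigma}{\log\log n}\bigr)$ bound, not the claimed one.

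The missing ingredient, which is the main content of the paper's proof, is a direct dynamic representation for alphabets of size $\sigma'=\Order(\log^{1-\epsilon}n/\log\log n)$: the chunked $(b_i,c_i,o_i)$ encoding is generalized so that $c_i=(c_i^1,\ldots,c_i^{\sigma'})$ records per-symbol counts (keeping each code of length $\Order(\log n)$), and each range min-max tree node stores $\sigma'$ searchable partial-sum functions $f_a$, one per symbol, so that $\rank_a$ and $\select_a$ run in $\Order(\log n/\log\log n)$ overall. The constraint on $\sigma'$ is forced by the update path: a leaf split or merge rebuilds all $\sigma'$ partial-sum structures of a node in $\Order(\sigma'\log^\epsilon n)$ time, which stays within $\Order(\log n/\log\log n)$ exactly when $\sigma'=\Order(\log^{1-\epsilon}n/\log\log n)$. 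Only with this base case in hand does the multiary wavelet tree of arity $r=\log^{1-\epsilon}n/\log\log n$ and height $\Order(1+\frac{\log\sigma}{\log\log n})$ yield the stated time and the redundancy $\Order(n\log\sigma/\log^\epsilon n+\sigma\log^\epsilon n)$ (the latter term counting the $\Order(\sigma/r)$ wavelet-tree nodes). Note also that even for $\sigma=\polylog(n)$ you still need a constant number of wavelet-tree levels, not one, since the per-node alphabet is $o(\log n)$; your remark that ``only one level is needed'' is not accurate, though it does not affect the asymptotics.
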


This time complexity slashes the the best current result \cite{GN08} by a
$\Theta(\log\log n)$ factor. The optimality of the polylogarithmic case stems 
again from Fredman and Saks' lower bound on {\rank} on dynamic bitmaps 
\cite{FreSak89}. This result has immediate applications to building compressed
indexes for text, building the Burrows-Wheeler transform within compressed
space, and so on.

\begin{table*}[tbp]
  \small
  \caption{Operations supported by our data structure. The time complexities
are for the dynamic case; in the static case all operations are performed in 
constant time.%
The first group is composed of basic operations, used to implement the others, 
but which could have other uses.}
  \label{tab:ops}

\bigskip
\footnotesize

  \begin{tabular}{lll|l}
  operation & description & \multicolumn{2}{c}{time complexity} \\ 
            &             &  \multicolumn{2}{c}{variant 1 $~~|~~$ variant 2} \\ \hline
  $\inspect(i)$ & $P[i]$ & 
			\multicolumn{2}{c}{$\Order(\log n/\log\log n)$} \\
  $\findclose(i)/\findopen(i)$ & position of parenthesis matching $P[i]$ & 
			\multicolumn{2}{c}{$\Order(\log n/\log\log n)$} \\
  $\enclose(i)$ & position of tightest open parent. enclosing $i$ & 
			\multicolumn{2}{c}{$\Order(\log n/\log\log n)$} \\
  $\rankopen(i)/\rankclose(i)$ & number of open/close parentheses in $P[0,i]$ & 
			\multicolumn{2}{c}{$\Order(\log n/\log\log n)$} \\
  $\selectopen(i)$/$\selectclose(i)$ & position of $i$-th open/close parenthesis & 
			\multicolumn{2}{c}{$\Order(\log n/\log\log n)$} \\
  $\rmqi(i,j)/\RMQI(i,j)$ & position of min/max excess value in range $[i,j]$ & 
			\multicolumn{2}{c}{$\Order(\log n/\log\log n)$} \\
\hline
  $\preorderrank(i)/\postorderrank(i)$ & preorder/postorder rank of node $i$ & 
			\multicolumn{2}{c}{$\Order(\log n/\log\log n)$} \\
  $\preorderselect(i)/\postorderselect(i)$ & the node with preorder/postorder
$i$ & 
			\multicolumn{2}{c}{$\Order(\log n/\log\log n)$}  \\
  $\isleaf(i)$ & whether $P[i]$ is a leaf & 
			\multicolumn{2}{c}{$\Order(\log n/\log\log n)$} \\
  $\isancestor(i,j)$ & whether $i$ is an ancestor of $j$ & 
			\multicolumn{2}{c}{$\Order(\log n/\log\log n)$} \\
  $\depth(i)$ & depth of node $i$ & 
			\multicolumn{2}{c}{$\Order(\log n/\log\log n)$} \\
  $\parent(i)$ & parent of node $i$ & 
			\multicolumn{2}{c}{$\Order(\log n/\log\log n)$} \\
  $\firstchild(i)/\lastchild(i)$ & first/last child of node $i$ & 
			\multicolumn{2}{c}{$\Order(\log n/\log\log n)$} \\
  $\nextsibling(i)/\prevsibling(i)$ & next/previous sibling of node $i$ & 
			\multicolumn{2}{c}{$\Order(\log n/\log\log n)$} \\
  $\subtreesize(i)$ & number of nodes in the subtree of node $i$ & 
			\multicolumn{2}{c}{$\Order(\log n/\log\log n)$} \\
  $\LA(i,d)$ & ancestor $j$ of $i$ s.t. $\depth(j) = \depth(i)-d$ &
			\multicolumn{2}{c}{$\Order(\log n)$} \\ 
  $\levelnext(i)/\levelprev(i)$ & next/previous node of $i$ in BFS order & 
			\multicolumn{2}{c}{$\Order(\log n)$} \\
  $\levelleftmost(d)/\levelrightmost(d)$ & leftmost/rightmost node with depth
$d$ & 
			\multicolumn{2}{c}{$\Order(\log n)$} \\
  $\lca(i,j)$ & the lowest common ancestor of two nodes $i,j$ & 
			\multicolumn{2}{c}{$\Order(\log n/\log\log n)$} \\
  $\deepestnode(i)$ & the (first) deepest node in the subtree of $i$ & 
			\multicolumn{2}{c}{$\Order(\log n/\log\log n)$} \\
  $\height(i)$ & the height of $i$ (distance to its deepest node) & 
			\multicolumn{2}{c}{$\Order(\log n/\log\log n)$} \\
  $\degree(i)$ & $q=$ number of children of node $i$ & 
			$\Order(q\log n/\log\log n)$ & $\Order(\log n)$ \\  
  $\child(i,q)$ & $q$-th child of node $i$ & 
			$\Order(q\log n/\log\log n)$ & $\Order(\log n)$ \\  
  $\childrank(i)$ & $q=$ number of siblings to the left of node $i$ & 
			$\Order(q\log n/\log\log n)$ & $\Order(\log n)$ \\  
  $\inorderrank(i)$ & inorder of node $i$ & 
			\multicolumn{2}{c}{$\Order(\log n/\log\log n)$}  \\
  $\inorderselect(i)$ & node with inorder $i$ & 
			\multicolumn{2}{c}{$\Order(\log n/\log\log n)$} \\
  $\leafrank(i)$ & number of leaves to the left of leaf $i$ & 
			\multicolumn{2}{c}{$\Order(\log n/\log\log n)$}  \\
  $\leafselect(i)$ & $i$-th leaf & 
			\multicolumn{2}{c}{$\Order(\log n/\log\log n)$}  \\
  $\leftmostleaf(i)/\rightmostleaf(i)$ & leftmost/rightmost leaf of node $i$ & 
			\multicolumn{2}{c}{$\Order(\log n/\log\log n)$} \\
  $\ins(i,j)$ & insert node given by matching parent.\ at $i$ and $j$ &
			$\Order(\log n/\log\log n)$ & $\Order(\log n)$ \\
  $\del(i)$ & delete node $i$ & 
			$\Order(\log n/\log\log n)$ & $\Order(\log n)$ \\
  \hline
  \end{tabular}
\end{table*}

\begin{figure}[bt]

\vspace{0.5cm}

\includegraphics[width=9.5cm]{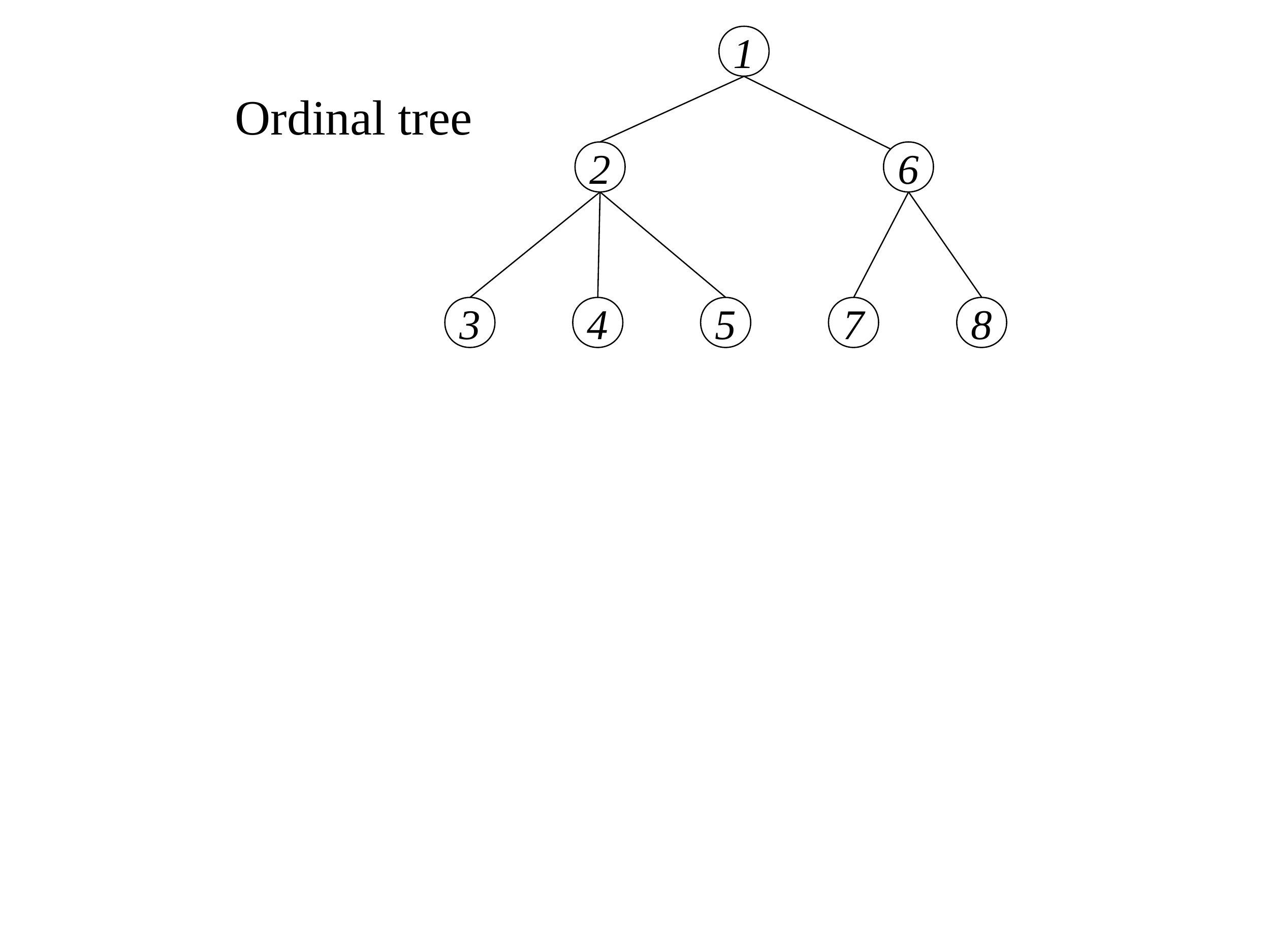} \hfill

\vspace*{-8.8cm}

\hfill \includegraphics[width=8.5cm]{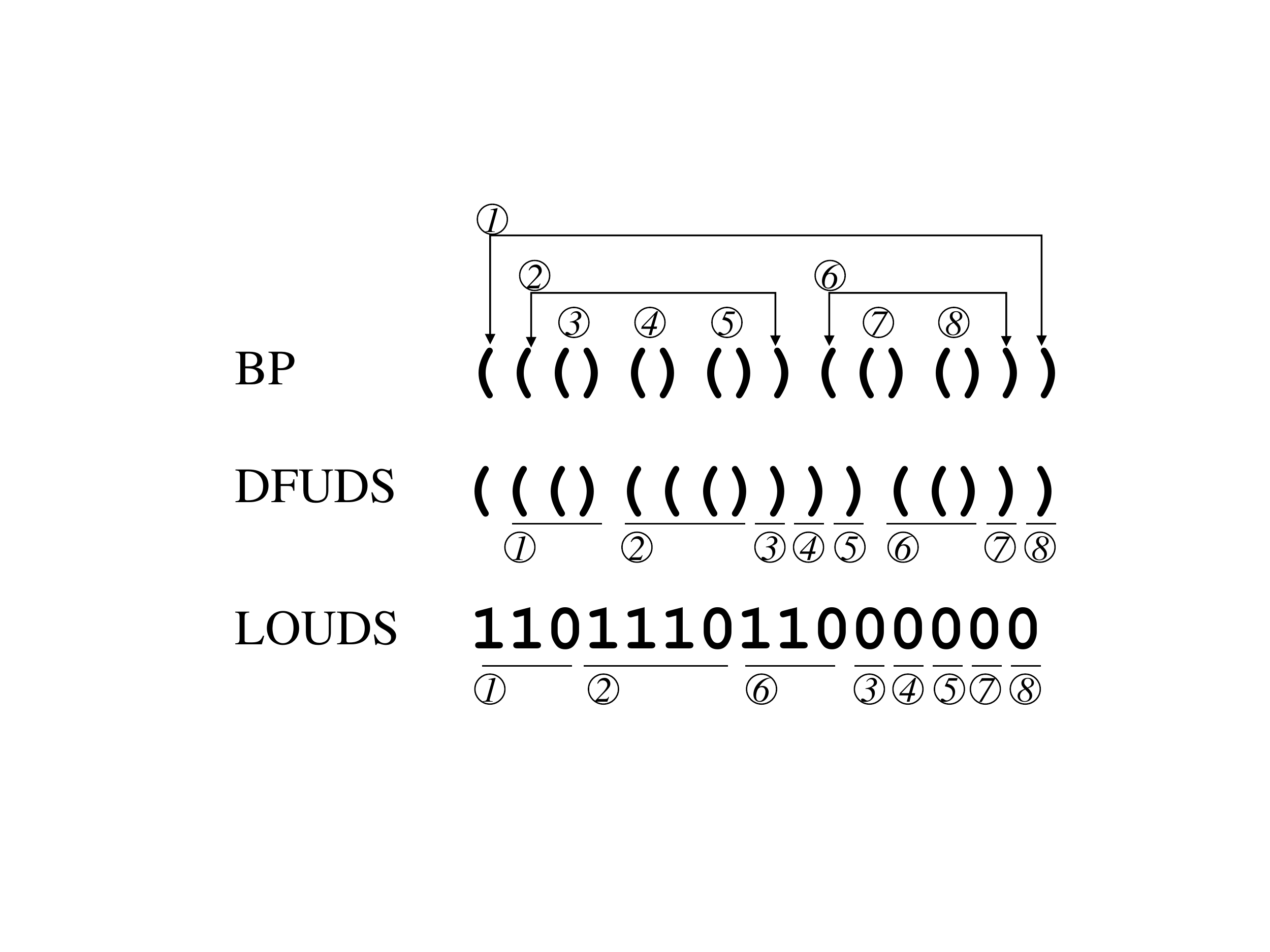}

\vspace*{-1.5cm}

\caption{Succinct representations of trees.}
\label{fig:tree}
\end{figure} 

\subsection{Organization of the paper}

In Section~\ref{sec:preliminaries} we review basic data structures
used in this paper. Section~\ref{sec:theory} describes the main ideas
for our new data structures for ordinal trees.
Sections~\ref{sec:small} and \ref{sec:large} describe the static construction.
In Sections~\ref{sec:dynamic} and \ref{sec:faster} we give two data structures
for dynamic ordinal trees.
In Section~\ref{sec:sequences} we derive our new results on compressed
sequences and applications.
In Section~\ref{sec:conclusion} we conclude and give future work directions.

\section{Preliminaries}\label{sec:preliminaries}

Here we describe the balanced parentheses sequence
and basic data structures used in this paper.

\subsection{Succinct data structures for {\rank}/{\select}}
\label{sec:rankselect}
\label{sec:mwt}

Consider a bit string $S[0,n-1]$ of length $n$.
We define {\rank} and {\select} for $S$ as follows.
$\rank_c(S,i)$ is the number of occurrences $c\in\{0,1\}$ in $S[0,i]$, 
and $\select_c(S,i)$
is the position of the $i$-th occurrence of $c$ in $S$.  Note that
$\rank_c(S,\select_c(S,i)) = i$ and $\select_c(S,\rank_c(S,i)) \le i$.

There exist many succinct data structures for {\rank}/{\select}%
~\cite{Jacobson89,Munro96,RRR02}.  
A basic one uses $n+\order(n)$ bits and supports {\rank}/{\select} in 
constant time on the word RAM with word length $\Order(\log n)$.  
The space can be reduced if the number of 1's is small.  
For a string with $m$ 1's, there exists a data structure for 
constant-time {\rank}/{\select} using $nH_0(S) + \Order(n \log \log
n/\log n)$, where 
$H_0(S) = \frac{m}{n}\log\frac{n}{m}+\frac{n-m}{n}\log\frac{n}{n-m} =
m\log\frac{n}{m} + \Order(m)$
is called the empirical zero-order entropy of the sequence. The space 
overhead on top of the entropy has been recently reduced \cite{Pat08} to
$\Order(n\,t^t/\log^t n + n^{3/4})$ bits, while supporting
{\rank} and {\select} in $\Order(t)$ time. This can be built in linear
worst-case time\footnote{They use a predecessor structure by P\v{a}tra\c{s}cu 
and Thorup \cite{PT06}, more precisely their result achieving time 
``$\lg\frac{\ell-\lg\,n}{a}$'', which is a simple modification
of van Emde Boas' data structure.}.

A crucial technique for succinct data structures is {\it table lookup}.
For small-size problems we construct a table which stores answers
for all possible sequences and queries. For example, for {\rank} and {\select},
we use a table storing all answers for all 0,1 patterns of length
$\frac{1}{2} \log n$.  Because there exist only $2^{\frac{1}{2} \log n} = 
\sqrt{n}$ different patterns, we can store all answers in a universal table
(i.e., not depending on the bit sequence) that uses 
$\sqrt{n}\cdot \polylog(n) = \order(n /\polylog(n))$ bits,
which can be accessed in constant time on a word RAM
with word length $\Theta(\log n)$.

The definition of \rank\ and \select\ on bitmaps generalizes to arbitrary
sequences over an integer alphabet $[1,\sigma]$, as well as the
definition of zero-order empirical entropy of sequences, to 
$H_0(S) = \sum_{1\le c \le \sigma} \frac{n_c}{n}\log\frac{n}{n_c}$, where
$c$ occurs $n_c$ times in $S$. A compressed representation of general sequences
that supports \rank/\select\ is achieved through a structure called a
{\em wavelet tree} \cite{GroGupVit03a}. This is a complete binary tree that 
partitions the alphabet $[1,\sigma]$ into contiguous halves at each node. The 
node then stores a bitmap telling which branch did each letter go. The tree 
has height $\lceil\log\sigma\rceil$, and it reduces $\rank$ and $\select$
operations to analogous operations on its bitmap in a root-to-leaf or 
leaf-to-root traversal. If the bitmaps are represented within their 
zero-order entropy, the total space adds up to $nH_0(S)+o(n\log\sigma)$
and the operations are supported in $\Order(\log \sigma)$ time. This can
be improved to $\Order(\lceil\frac{\log\sigma}{\log\log n}\rceil)$, while
maintaining the same asymptotic space, by using a multiary wavelet tree of arity
$\Theta(\sqrt{\log n})$, and replacing the bitmaps by sequences over small
alphabets, which still can answer $\rank/\select$ in constant time 
\cite{FerManMakNav07}.

\subsection{Succinct tree representations}

A rooted ordered tree $T$, or ordinal tree, with $n$ nodes
is represented by a string $P[0,2n-1]$ of balanced parentheses of length $2n$.
A node is represented by a pair of matching parentheses $\OP \ldots \CP$
and all subtrees rooted at the node are encoded in order between the matching
parentheses
(see Figure~\ref{fig:tree} for an example).
A node $v \in T$ is identified with the position $i$ of the open parenthesis
$P[i]$ representing the node.  

There exist many succinct data structures for ordinal trees.
Among them, the ones with maximum functionality~\cite{FarMun08}
support all the operations in Table~\ref{tab:ops},
except {\ins} and {\del}, in constant time using
$2n+\Order(n \log\log\log n/\log\log n)$-bit space.
Our static data structure supports the same operations 
and reduces the space to $2n+\Order(n /\polylog(n))$ bits. 

\subsection{Dynamic succinct trees}
\label{sec:introdyn}

We consider insertion and deletion of internal nodes or leaves in ordinal trees.
In this setting, there exist no data structures supporting
all the operations in Table~\ref{tab:ops}.
The data structure of Raman and Rao~\cite{RamRao03}
supports, for binary trees, {\parent}, left and right {\em child},
and {\subtreesize} of the current node in the course of traversing
the tree in constant time, and updates in $\Order((\log\log n)^{1+\epsilon})$
time.  Note that this data structure assumes that all traversals
start from the root.
Chan et al.~\cite{CHLS07} gave a dynamic data structure
using $\Order(n)$ bits and supporting {\findopen}, {\findclose}, {\enclose},
and updates, in $\Order(\log n/\log \log n)$ time.  They also gave
another data structure using $\Order(n)$ bits and supporting {\findopen},
{\findclose}, {\enclose}, {\lca}, {\leafrank}, {\leafselect}, and updates,
in $\Order(\log n)$ time.

Furthermore, we consider the more sophisticated operation (which is simple on 
classical trees) of attaching a new subtree as the new child of a node, 
instead of just a leaf. The model is that this new subtree is already 
represented with our data structures. Both trees are thereafter blended and
become a unique tree. Similarly, we can detach any subtree from a given tree
so that it becomes an independent entity represented with our data structure.
This allows for extremely flexible support of algorithms handling dynamic trees,
far away from the limited operations allowed in previous work. This time we have
to consider a maximum possible value for $\log n$ (say, $w$, the width of the
system-wide pointers). 
Then we require $2n+\Order(n \log w/w + \sqrt{2^w})$ bits of space and
carry out the queries in time $\Order(w/\log w)$ or $\Order(w)$, depending
on the tree.  Insert or delete takes $\Order(w^{1+\epsilon})$
for any constant $\epsilon>0$ if we wish to allow attachment and detachment of
subtrees, which then can also be carried out in time $\Order(w^{1+\epsilon})$.

\subsection{Dynamic compressed bitmaps and sequences}

Let $B[0,n-1]$ be a bitmap. We want to support operations \rank\ and \select\ 
on $B$, as well as operations $\ins(B,i,b)$, which inserts bit $b$ between 
$B[i]$ and $B[i+1]$, and $\del(B,i)$, which deletes position $B[i]$ from $B$.
Chan et al.~\cite{CHLS07} handle all these operations in 
$\Order(\log n / \log\log n)$ time (which is optimal \cite{FreSak89})
using $\Order(n)$ bits of space (actually, by reducing the problem to a 
particular dynamic tree). M\"akinen and Navarro~\cite{MN08} achieve 
$\Order(\log n)$ time and $nH_0(B)+\Order(n\log\log n / \sqrt{\log n})$ bits 
of space. The results can be generalized to sequences. Gonz\'alez and Navarro
\cite{GN08} achieve 
$nH_0+\Order(n\log\sigma/\sqrt{\log n})$ bits of space and
$\Order(\log n (1+\frac{\log\sigma}{\log\log n}))$ time to handle all the
operations on a sequence over alphabet $[1,\sigma]$. They give several
applications to managing dynamic text collections, construction of static
compressed indexes within compressed space, and construction of the
Burrows-Wheeler transform \cite{BWT} within compressed space. We improve
all these results in this paper, achieving the optimal $\Order(\log n /
\log\log n)$ on polylog-sized alphabets and reducing the lower-order term in
the compressed space by a $\Theta(\log\log n)$ factor.

\section{Fundamental concepts}\label{sec:theory}

In this section we give the basic ideas of our ordinal tree representation.
In the next sections we build on these to define our static and dynamic
representations.

We represent a possibly non-balanced\footnote{As later we will use these
constructions to represent arbitrary segments of a balanced sequence.} 
parentheses sequence by a 0,1 vector $P[0,n-1]$ ($P[i] \in \{0,1\}$).  
Each opening/closing parenthesis is encoded by $\OP = 1$, $\CP = 0$. 

First, we remind that several operations of Table~\ref{tab:ops} either are 
trivial in a BP representation, or are easily solved using $\enclose$, 
$\findclose$, $\findopen$, $\rank$, and $\select$ \cite{MunRam01}. These are:
\begin{eqnarray*}
\inspect(i) &=& P[i] \textrm{ (or }\rank_1(P,i)-\rank_1(P,i-1)\textrm{ if
there is no access to }P[i] \\
\isleaf(i) &=& [P[i+1] = 0] \\
\isancestor(i,j) &=& i \le j \le findclose(P,i) \\
\depth(i) &=& \rank_1(P,i)-rank_0(P,i) \\
\parent(i) &=& \enclose(P,i) \\
\preorderrank(i) &=& \rank_1(P,i) \\
\preorderselect(i) &=& \select_1(P,i) \\
\postorderrank(i) &=& \rank_0(P,i) \\
\postorderselect(i) &=& \select_0(P,i) \\
\firstchild(i) &=& i+1 ~\textrm{(if}~P[i+1] = 1\textrm{, else}~i~\textrm{is a
	leaf)} \\ 
\lastchild(i) &=& \findopen(P,\findclose(P,i)-1) ~\textrm{(if}~P[i+1] = 1
	\textrm{, else}~i~\textrm{is a leaf)} \\ 
\nextsibling(i) &=& findclose(i)+1 ~\textrm{(if}~P[findclose(i)+1] =
	1\textrm{, else}~i~\textrm{is the last sibling)} \\
\prevsibling(i) &=& findopen(i-1) ~\textrm{(if}~P[i-1] =
	0\textrm{, else}~i~\textrm{is the first sibling)} \\
\subtreesize(i) &=& (findclose(i)-i+1)/2
\end{eqnarray*}

Hence the above operations will not be considered further in the paper.
Let us now focus on a small set of primitives needed to implement most of the
other operations.
For any function $g(\cdot)$ on $\{0,1\}$, we define the following.

\begin{definition}\label{def:sum}
For a 0,1 vector $P[0,n-1]$ and a function $g(\cdot)$ on $\{0,1\}$,
\begin{eqnarray*}
\sumw(P,g,i,j) &\mathdef& \sum_{k=i}^{j} g(P[k]) \\
\fwd(P,g,i,d) &\mathdef& \min_{j\ge i} \{j \mid \sumw(P,g,i,j) = d \} \\
\bwd(P,g,i,d) &\mathdef& \max_{j\le i} \{j \mid \sumw(P,g,j,i) = d \} \\
\rmq(P,g,i,j) &\mathdef& \min_{i \le k \le j} \{ \sumw(P,g,i,k) \} \\
\rmqi(P,g,i,j) &\mathdef& \argmin_{i \le k \le j} \{ \sumw(P,g,i,k) \} \\
\RMQ(P,g,i,j) &\mathdef& \max_{i \le k \le j} \{ \sumw(P,g,i,k) \} \\
\RMQI(P,g,i,j) &\mathdef& \argmax_{i \le k \le j} \{ \sumw(P,g,i,k) \}
\end{eqnarray*}
\end{definition}

The following function is particularly important.

\begin{definition}
Let $\pi$ be the function such that $\pi(1)=1, \pi(0)=-1$. 
Given $P[0,n-1]$, we define the {\em excess array} $E[0,n-1]$ of $P$
as an integer array such that $E[i] = \sumw(P,\pi,0,i)$. 
\end{definition}

Note that $E[i]$ stores the difference between the number of opening and 
closing parentheses in $P[0,i]$. When $P[i]$ is an opening parenthesis,
$E[i]=\depth(i)$ is the depth of the corresponding node, and is the depth minus 
1 for closing parentheses. We will use $E$ as a conceptual device in our 
discussions, it will not be stored. Note that, given the form of $\pi$, it 
holds that $|E[i+1]-E[i]| = 1$ for all $i$.

The above operations are sufficient to implement the basic navigation on
parentheses, as the next lemma shows.
Note that the equation for {\findclose} is well known, and the one
for {\LA} has appeared as well~\cite{MunRao04}, but we give proofs for completeness.

\begin{lemma}
Let $P$ be a BP sequence encoded by $\{0,1\}$. Then
{\findclose}, {\findopen}, {\enclose}, and {\LA} can be expressed as follows.
\begin{eqnarray*}
\findclose(i) &=& \fwd(P,\pi,i,0) \\
\findopen(i) &=& \bwd(P,\pi,i,0) \\
\enclose(i) &=& \bwd(P,\pi,i,2) \\
\LA(i,d) &=& \bwd(P,\pi,i,d+1)
\end{eqnarray*}
\end{lemma}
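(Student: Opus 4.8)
The plan is to verify each of the four identities by translating the combinatorial definitions of the parenthesis operations into statements about the excess array $E$, using the fact that $\sumw(P,\pi,i,j) = E[j]-E[i-1]$ (so $\fwd$ and $\bwd$ are really searching for the first/last position where the excess reaches a prescribed value relative to the starting point). The key observation throughout is that $E[i]$ equals $\depth(i)$ when $P[i]=\OP$ and $\depth(i)-1$ when $P[i]=\CP$, together with the $\pm 1$ step property $|E[i+1]-E[i]|=1$ already noted after the definition of the excess array.

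First I would handle $\findclose$. Here $P[i]=\OP$, so $E[i]$ is the depth of the node. Its matching closing parenthesis is the first position $j>i$ at which the excess returns to $E[i-1]$, i.e.\ the nesting level drops back to where it was just before opening. In the $\fwd$ notation, $\fwd(P,\pi,i,0)$ is the least $j\ge i$ with $\sumw(P,\pi,i,j)=0$; since $\sumw(P,\pi,i,i)=\pi(P[i])=+1>0$ and each step changes the partial sum by $\pm 1$, the first return to $0$ occurs exactly when the opening parenthesis at $i$ is cancelled by its mate, which is the definition of $\findclose(i)$. The identity for $\findopen$ is the mirror image, reading the sequence backwards: $\bwd(P,\pi,i,0)$ finds the largest $j\le i$ with $\sumw(P,\pi,j,i)=0$, and by the same cancellation argument this is the matching open parenthesis of the closing parenthesis at $i$.

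Next I would treat $\enclose$. The tightest enclosing open parenthesis of node $i$ is the parent; in excess terms its opening position is the nearest position to the left of $i$ whose excess is exactly one less than the current open level and which has not yet been closed before $i$. Scanning leftward from $i$, the partial sum $\sumw(P,\pi,j,i)$ grows as we pass unmatched parentheses, and it first reaches the value $2$ precisely at the opening parenthesis that encloses $i$ one level up---one unit for $i$'s own open parenthesis and one more for the enclosing one. Hence $\enclose(i)=\bwd(P,\pi,i,2)$. The formula for $\LA$ then generalizes this: $\LA(i,d)$ is the ancestor $d$ levels above $i$, and each additional level up contributes one more to the backward excess, so the target value becomes $d+1$, giving $\LA(i,d)=\bwd(P,\pi,i,d+1)$; setting $d=1$ recovers $\enclose$.

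The main obstacle I expect is making the $\enclose$/$\LA$ direction fully rigorous rather than merely intuitive: I need to argue carefully that the backward partial sum reaches the value $d+1$ for the \emph{first} time exactly at the correct ancestor, and not prematurely at some sibling's open parenthesis or some intervening closed pair. The clean way to do this is to prove the invariant that, reading right-to-left from $i$, any fully matched pair contributes a net $0$ to $\sumw(P,\pi,\cdot,i)$, so that the cumulative backward excess increases by exactly $1$ only when we step across an open parenthesis that is still unmatched at position $i$---i.e.\ a proper ancestor of $i$ (plus $i$'s own opening). Counting these unmatched openings establishes that the $k$-th increment lands on the ancestor at depth $\depth(i)-(k-1)$, from which both the $\enclose$ and $\LA$ formulas follow. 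The $\findclose$/$\findopen$ cases are comparatively routine once the $\pm1$ step property and the first-return characterization are stated.
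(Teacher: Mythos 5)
Your proof is correct and follows essentially the same route as the paper's: both translate each operation into a condition on the excess array $E$ via $\sumw(P,\pi,i,j)=E[j]-E[i-1]$ and then use the $\pm 1$ step structure to identify the answer as a first-passage position of the walk. If anything, your matched-pairs-cancel invariant makes explicit the step the paper dispatches with the phrase ``because of the node depths,'' so no gap remains.
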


\begin{proof}
For {\findclose}, let $j>i$ be the position of the closing parenthesis matching
the opening parenthesis at $P[i]$. Then $j$ is the smallest index $>i$ such that
$E[j] = E[i]-1 = E[i-1]$ (because of the node depths). Since by definition
$E[k] = E[i-1] + \sumw(P,\pi,i,k)$ for any $k>i$, $j$ is the smallest index
$>i$ such that $\sumw(P,\pi,i,j) = 0$. This is, by definition,
$\fwd(P,\pi,i,0)$.

For {\findopen}, let $j<i$ be the position of the opening parenthesis matching
the closing parenthesis at $P[i]$. Then $j$ is the largest index $<i$ such that
$E[j-1] = E[i]$ (again, because of the node depths)\footnote{Note
$E[j]-1=E[i]$ could hold at incorrect places, where $P[j]$ is a closing
parenthesis.}. Since by definition $E[k-1] = E[i] - \sumw(P,\pi,k,i)$ for any 
$k<i$, $j$ is the largest index $<i$ such that $\sumw(P,\pi,j,i) = 0$.
This is $\bwd(P,\pi,i,0)$.

For {\enclose}, let $j<i$ be the position of the opening parenthesis that most
tightly encloses the opening parenthesis at $P[i]$. Then $j$ is the largest 
index $<i$ such that $E[j-1] = E[i]-2$ (note that now $P[i]$ is an
opening parenthesis). Now we reason as for {\findopen} to get
$\sumw(P,\pi,j,i) = 2$.

Finally, the proof for {\LA} is similar to that for {\enclose}. Now $j$ is
the largest index $<i$ such that $E[j-1] = E[i]-d-1$, which is equivalent to
$\sumw(P,\pi,j,i) = d+1$.
\end{proof}

We also have the following, easy or well-known, equalities:
\begin{eqnarray*}
\lca(i,j) &=& \max(i,j),~\textrm{if}~ isancestor(i,j)~\textrm{or}~
          			      isancestor(j,i) \\
          & & \parent(\rmqi(P,\pi,i,j)+1),~\textrm{otherwise \cite{Sada02a}} \\
\deepest(i) &=& \RMQI(P,\pi,i,\findclose(i)) \\
\height(i) &=& \depth(\deepest(i))-\depth(i) \\
\levelnext(i) &=& \fwd(P,\pi,\findclose(i),0) \\
\levelprev(i) &=& \findopen(\bwd(P,\pi,i,0)) \\
\levelleftmost(d) &=& \fwd(P,\pi,0,d) \\
\levelrightmost(d) &=& \findopen(\bwd(P,\pi,n-1,-d))
\end{eqnarray*}

We also show that the above functions unify the algorithms for computing
{\rank}/{\select} on 0,1 vectors and those for balanced parenthesis
sequences.  Namely, let $\phi, \psi$ be functions such that
$\phi(0)=0, \phi(1)=1, \psi(0)=1, \psi(1) = 0$.
Then the following equalities hold.

\begin{lemma}
For a 0,1 vector $P$,
\begin{eqnarray*}
\rank_1(P,i) &=& \sumw(P,\phi,0,i) \\
\select_1(P,i) &=& \fwd(P,\phi,0,i) \\
\rank_0(P,i) &=& \sumw(P,\psi,0,i) \\
\select_0(P,i) &=& \fwd(P,\psi,0,i) 
\end{eqnarray*}
\end{lemma}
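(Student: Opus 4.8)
The plan is to verify each of the four equalities directly from Definition~\ref{def:sum}, unwinding the definitions of $\sumw$ and $\fwd$ under the specific functions $\phi$ and $\psi$. The key observation I would start from is that $\phi$ is the identity on $\{0,1\}$, so $g=\phi$ turns $\sumw$ into a plain count of the $1$'s, while $\psi$ is the bit-flip, so $g=\psi$ turns $\sumw$ into a count of the $0$'s. Everything else follows by matching these counts against the definitions of $\rank$ and $\select$ given in Section~\ref{sec:rankselect}.

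For the first equality, I would note that $\sumw(P,\phi,0,i) = \sum_{k=0}^{i} \phi(P[k]) = \sum_{k=0}^{i} P[k]$, which is exactly the number of $1$'s in $P[0,i]$, i.e.\ $\rank_1(P,i)$ by definition. For the third equality the argument is identical with $\psi$ in place of $\phi$: since $\psi(P[k]) = 1-P[k]$ counts a $0$ exactly when $P[k]=0$, the sum $\sumw(P,\psi,0,i)$ counts the $0$'s in $P[0,i]$, giving $\rank_0(P,i)$. These two cases are purely mechanical substitution.

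For the $\select$ equalities I would expand $\fwd(P,\phi,0,i) = \min_{j\ge 0}\{j \mid \sumw(P,\phi,0,j) = i\}$. Using the first equality, $\sumw(P,\phi,0,j) = \rank_1(P,j)$, so this is the smallest $j$ with $\rank_1(P,j)=i$, which is precisely the position of the $i$-th $1$, namely $\select_1(P,i)$. The one point deserving care is the boundary/indexing convention: because $\rank_1(P,j)$ counts inclusively through $P[j]$, the first index $j$ at which the running count of $1$'s reaches $i$ is exactly the index of the $i$-th $1$, consistent with the convention $\rank_c(P,\select_c(P,i))=i$ stated earlier. The same reasoning with $\psi$ gives $\select_0(P,i) = \fwd(P,\psi,0,i)$.

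I do not anticipate a genuine obstacle here, since the statement is a reformulation rather than a new result; the only mild subtlety is confirming that the ``smallest $j$ such that the prefix sum equals $i$'' really lands on the $i$-th occurrence and not one position off, which I would address by appealing to the monotonicity of the prefix counts and the established identities relating $\rank$ and $\select$. The proof is therefore a short, four-line unwinding of the definitions once $\phi$ and $\psi$ are recognized as the identity and the complement on $\{0,1\}$.
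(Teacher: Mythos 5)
Your proof is correct: the paper states this lemma without any proof, treating it as an immediate unwinding of Definition~\ref{def:sum} once $\phi$ and $\psi$ are recognized as the identity and complement on $\{0,1\}$, which is exactly what you do. Your attention to the boundary convention (that the first $j$ with $\sumw(P,\phi,0,j)=i$ is the position of the $i$-th one, by monotonicity of the prefix count) is the only point of substance, and you handle it correctly.
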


%
%
%

Therefore, in principle we must focus only on the following set of primitives: 
{\fwd}, {\bwd}, {\sumw}, {\rmqi}, {\RMQI}, {\degree}, {\child}, and 
{\childrank}, for the rest of the paper.

Our data structure for queries on a 0,1 vector $P$ is basically a search tree
in which each leaf corresponds to a range of $P$,
and each node stores the last, maximum, and minimum values
of prefix sums for the concatenation of all the ranges up to the
subtree rooted at that node. 

\begin{definition} \label{def:minmax}
A {\em range min-max tree} for a vector $P[0,n-1]$ and a function
$g(\cdot)$ is defined as follows. Let $[\ell_1,r_1],
[\ell_2,r_2],\ldots,[\ell_q,r_q]$ be a partition of $[0,n-1]$
where $\ell_1 = 0, r_i+1 = \ell_{i+1}, r_q = n-1$.
Then the $i$-th leftmost leaf of the tree stores the sub-vector 
$P[\ell_i,r_i]$, as well as $e[i] = \sumw(P,g,0,r_i)$, 
$m[i] = e[i-1]+\rmq(P,g,\ell_i,r_i)$ and $M[i] = e[i-1]+\RMQ(P,g,\ell_i,r_i)$.
Each internal node $u$ stores in $e[u]$/$m[u]$/$M[u]$ the last/minimum/maximum 
of the $e/m/M$ values stored in its child nodes. Thus, the root node stores 
$e=\sumw(P,g,0,n-1)$, $m=\rmq(P,g,0,n-1)$ and $M=\RMQ(P,g,0,n-1)$. 
\end{definition}

\begin{example}
An example of range min-max tree is shown in Figure~\ref{fig:minmax}. Here
we use $g = \pi$, and thus the nodes store the minimum/maximum values of 
array $E$ in the corresponding interval.
\end{example}

\begin{figure}[bt]
\centerline{\includegraphics[width=10cm]{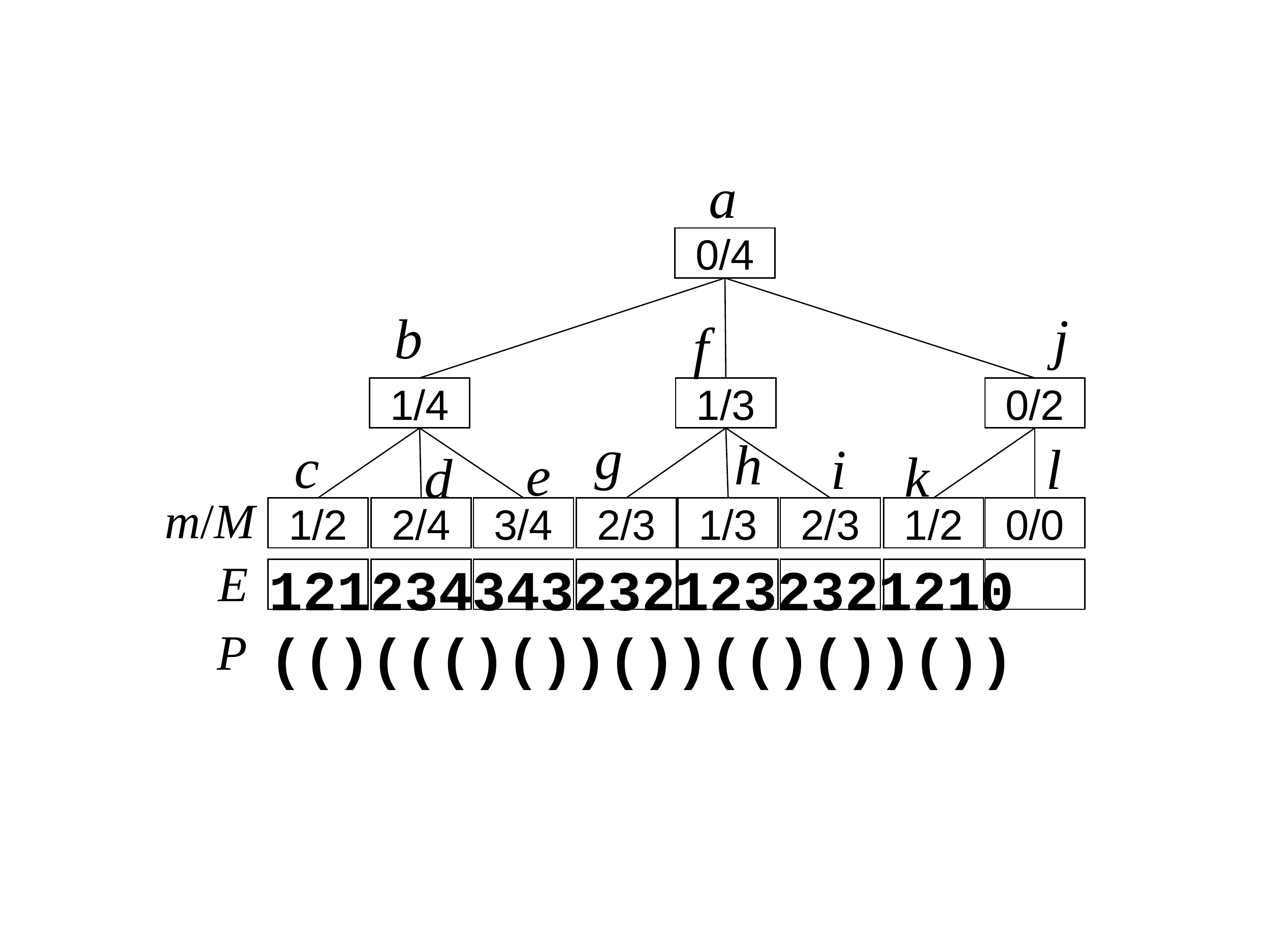}}
\vspace*{-2cm}
\caption{An example of the range min-max tree using function $\pi$,
and showing the $m/M$ values.}
\label{fig:minmax}
\end{figure} 

\section{A simple data structure for polylogarithmic-size trees}
\label{sec:small}

Building on the previous ideas, we give a simple data structure to compute 
{\fwd}, {\bwd}, and {\sumw} in constant time for arrays of polylogarithmic size.
Then we consider further operations.

Let $g(\cdot)$ be a function on $\{0,1\}$ taking values in $\{1,0,-1\}$.
We call such a function \emph{$\pm 1$ function}.  Note that there exist
only six such functions where $g(0)\not=g(1)$, which are indeed
$\phi,-\phi,\psi,-\psi,\pi,-\pi$.

Let $w$ be the bit length of the machine word in the RAM model, and $c\ge 1$ 
any constant. We have a (not necessarily balanced) parentheses vector 
$P[0,n-1]$, of moderate size $n\le N=w^c$. Assume we wish to solve the 
operations for an arbitrary $\pm 1$ function $g(\cdot)$, and let $G[i]$ 
denote $\sumw(P,g,0,i)$, analogously to $E[i]$ for $g=\pi$.

Our data structure is a range min-max tree {\TMM} for vector $P$ and function
$g(\cdot)$. Let $s = \frac{1}{2}w$. We imaginarily divide vector $P$ into
$\lceil n/s \rceil$ {\em chunks} of length $s$. These form the partition 
alluded in Definition~\ref{def:minmax}: $\ell_i = s \cdot (i-1)$. Thus the 
values $m[i]$ and $M[i]$ correspond to minima and maxima of $G$ within each 
chunk, and $e[i]=G[r_i]$.

Furthermore, the tree will be $k$-ary and complete, for $k =
\Theta(w/(c\log w))$. Thus the leaves store all the elements of arrays 
$m$ and $M$. Because it is complete, the tree can be represented just by three
integer arrays $e'[0,\Order(n/s)]$, $m'[0,\Order(n/s)]$, and 
$M'[0,\Order(n/s)]$, like a heap.

Because $-w^c \le e'[i],m'[i],M'[i] \le w^c$ for any $i$, arrays $e'$, $m'$ 
and $M'$ occupy 
$\frac{k}{k-1}\cdot \frac{n}{s} \cdot \lceil \log (2w^c+1) \rceil
= \Order(nc \log w/w)$ bits each.
The depth of the tree is $\lceil \log_k(n/s) \rceil = \Order(c)$.

The following fact is well known; we reprove it for completeness.

\begin{lemma}\label{lem:range}
Any range $[i,j] \subseteq [0,n-1]$ in {\TMM} is covered by a disjoint union
of $\Order(ck)$ subranges where the leftmost and rightmost ones
may be subranges of leaves of {\TMM}, and the others correspond to whole 
nodes of {\TMM}.
\end{lemma}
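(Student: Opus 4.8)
The plan is to give the standard canonical decomposition of an interval in a complete $k$-ary tree, adapted to the chunk partition underlying $\TMM$. First I would locate the two extreme leaves: let $\lambda$ be the leaf of $\TMM$ whose chunk $[\ell_\lambda,r_\lambda]$ contains $i$, and let $\rho$ be the leaf whose chunk contains $j$. If $\lambda=\rho$, then all of $[i,j]$ lies inside a single chunk and is itself one subrange of a leaf, so the claim is trivial; assume therefore that $\lambda$ lies strictly to the left of $\rho$. The two extreme pieces of the decomposition will be $[i,r_\lambda]$ and $[\ell_\rho,j]$, which are (possibly improper) subranges of the leaves $\lambda$ and $\rho$ respectively. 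Every other piece produced will be a \emph{whole node} of $\TMM$, meaning a node whose associated range of leaves is entirely contained in $[i,j]$; by Definition~\ref{def:minmax} the $e/m/M$ values of such a node already summarize the corresponding segment of $P$, which is why this is the useful notion of ``covered''.

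Next I would carry out the decomposition along the two root-to-leaf paths, $\Pi_\lambda$ ending at $\lambda$ and $\Pi_\rho$ ending at $\rho$. These paths share a common prefix down to their lowest common ancestor $z$, below which $\Pi_\lambda$ descends into some child $a$ of $z$ and $\Pi_\rho$ into a child $b$ of $z$ with $a$ strictly left of $b$. The children of $z$ strictly between $a$ and $b$ are whole nodes inside $[i,j]$, contributing at most $k-2$ pieces. Then, descending along $\Pi_\lambda$ from $a$ toward $\lambda$, at each node on the path the range continues into one child and \emph{all right-siblings of that child} are whole nodes fully inside $[i,j]$; I would add those (at most $k-1$ per level), stopping at the leaf $\lambda$ where the residual piece $[i,r_\lambda]$ is taken. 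Symmetrically, descending along $\Pi_\rho$ from $b$, at each node I would add the \emph{left-siblings} of the on-path child (at most $k-1$ per level), ending with the residual $[\ell_\rho,j]$. Since the tree is complete, the number of levels from $z$ down to a leaf is at most the depth $\lceil\log_k(n/s)\rceil=\Order(c)$ established above. Hence the whole-node pieces number at most $(k-2)+2(k-1)\cdot\Order(c)=\Order(ck)$, and together with the two partial leaves this gives the claimed $\Order(ck)$ subranges.

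Finally I would verify that the selected pieces are pairwise disjoint and cover $[i,j]$ exactly. Disjointness holds because distinct whole nodes chosen in this procedure are roots of disjoint subtrees (no chosen node is an ancestor of another: the right-siblings along $\Pi_\lambda$, the left-siblings along $\Pi_\rho$, and the between-children at $z$ occupy pairwise-disjoint leaf-ranges), and the two partial leaves $\lambda,\rho$ are disjoint from all of them. Coverage is then an accounting argument: since the leaves of $\TMM$ partition $[0,n-1]$ and every internal node's range is the union of its descendant leaves' ranges, one checks that each leaf strictly between $\lambda$ and $\rho$ is absorbed into exactly one chosen whole node, while $\lambda$ and $\rho$ contribute their partial pieces. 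I expect the main obstacle to be precisely this bookkeeping — stating a clean invariant (e.g.\ that after processing the first $t$ levels below $z$ on each side, the chosen pieces tile exactly the portion of $[i,j]$ lying outside the subtrees still on the two paths) so that disjointness and exact coverage follow without a tedious case analysis. The only degenerate cases to dispatch are $z$ being a leaf (already handled by $\lambda=\rho$), $a$ and $b$ adjacent (the between-children set is empty), and $i=\ell_\lambda$ or $j=r_\rho$ (an extreme piece becomes a whole leaf rather than a proper subrange), none of which affects the $\Order(ck)$ bound.
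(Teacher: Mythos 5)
Your proof is correct and is essentially the paper's own argument: both produce the standard canonical cover of $[i,j]$ by two partial leaves plus $\Order(ck)$ whole nodes of {\TMM}, with $\Order(k)$ nodes per level over the $\Order(c)$ levels of the complete $k$-ary tree. The only difference is cosmetic --- you construct the cover top-down along the two boundary root-to-leaf paths from their lowest common ancestor (at most $k-1$ inner siblings per level on each side), whereas the paper starts from the leaf-level decomposition $[i,r_a][\ell_{a+1},r_{a+1}]\cdots[\ell_b,j]$ and merges complete groups of $k$ siblings bottom-up, bounding the surviving nodes at each level by $2k-2$.
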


\begin{proof}
Let $a$ be the smallest value such that $i \le r_a$ and $b$ be the largest
such that $j \ge \ell_b$. Then the range $[i,j]$ is covered by the
disjoint union
$[i,j] = [i,r_a][\ell_{a+1},r_{a+1}]\ldots[\ell_b,j]$ (we can discard the
special case $a=b$, as in this case we have already one leaf covering
$[i,j]$). Then $[i,r_a]$ and $[\ell_b,j]$ are the leftmost and rightmost
leaf subranges alluded in the lemma; all the others are whole tree nodes.

It remains to show that we can reexpress this disjoint union
using $\Order(ck)$ tree
nodes. If all the $k$ children of a node are in the range, we replace the $k$
children by the parent node, and continue recursively level by level. Note that
if two parent nodes are created in a given level, then all the other 
intermediate nodes of the same level must be created as well, because the 
original/created nodes form a range at any level. At the end, there cannot
be more than $2k-2$ nodes at any level, because otherwise $k$ of them would
share a single parent and would have been replaced. As there are $c$ levels,
the obtained set of nodes covering $[i,j]$ is of size $\Order(ck)$.
\end{proof}

\begin{example}
In Figure~\ref{fig:minmax} (where $s=k=3$), the range $[3,18]$ is covered 
by $[3,5], [6,8], [9,17], [18,18]$.  They correspond to
nodes $d$, $e$, $f$, and a part of leaf $k$, respectively.
\end{example}

Computing $\fwd(P,g,i,d)$ is done as follows ({\bwd} is symmetric). First we 
check if the chunk of $i$, $[\ell_k,r_k]$ for $k=\lfloor i/s\rfloor$, contains 
$\fwd(P,g,i,d)$ with a table lookup using vector $P$, by precomputing a simple 
universal table of $2^s \log s = \Order(\sqrt{2^w}\log w)$ bits%
\footnote{Using integer division and remainder a segment within a 
chunk can be isolated and padded in constant time.
Otherwise the table is slightly larger, $2^s s^2 \log s = 
\Order(\sqrt{2^w}w^2\log w)$ bits, which will not change our final results.}
If so, 
we are done. Else, we compute the global target value we seek, $d'=G[i-1]+d = 
e[k]-\sumw(P,g,i,r_k)+d$ (again, the sum inside the chunk is done in
constant time using table lookup). Now we divide the range $[r_k+1,n-1]$ into 
subranges $I_1, I_2, \ldots$ represented by range min-max tree nodes $u_1, u_2,
\ldots$ as in Lemma~\ref{lem:range} (note these are simply all the right
siblings of my parent, all the right siblings of my grandparent, and so on). 
Then, for each $I_j$, we check if the target 
value $d'$ is between $m[u_j]$ and $M[u_j]$, the minimum and maximum values
of subrange $I_j$. Let $I_k$ be the first $j$ such that $m[u_j] \le d'
\le M[u_j]$, then $\fwd(P,g,i,d)$ lies within $I_k$.
If $I_k$ corresponds to an internal tree node,
we iteratively find the leftmost child of the node whose range
contains $d'$, until we reach a leaf. Finally, we find the target
in the chunk corresponding to the leaf by table lookups, using $P$ again.

\begin{example}
In Figure~\ref{fig:minmax}, where $G=E$ and $g=\pi$, computing 
$\findclose(3) = \fwd(P,\pi,3,0) = 12$
can be done as follows. Note this is equivalent to finding the first $j>3$
such that $E[i] = E[3-1]+0 = 1$. First examine the node $\lfloor 3/s\rfloor=1$ 
(labeled $d$ in the figure). We see that the target $1$ does not exist 
within $d$ after position 3. Next we examine node $e$. Since $m[e]=3$ and
$M[e]=4$, $e$ does not contain the answer either. Next we examine the node $f$.
Because $m[f]=1$ and $M[f]=3$, the answer must exist in its subtree.
Therefore we scan the children of $f$ from left to right, and find the 
leftmost one with $m[\cdot] \le 1$, which is node $h$.
Because node $h$ is already a leaf, we scan the segment corresponding
to it, and find the answer 12.
\end{example}

The sequence of subranges arising in this search corresponds to a leaf-to-leaf 
path in the range min-max tree, and it contains $\Order(ck)$ ranges according
to Lemma~\ref{lem:range}. We show now how to carry out this search in time 
$\Order(c)$ rather than $\Order(ck)$.

According to Lemma~\ref{lem:range}, the $\Order(ck)$ nodes can be partitioned
into $\Order(c)$ sequences of sibling nodes. We will manage to carry out the
search within each such sequence in $\Order(1)$ time. Assume we have to
find the first $j \ge i$ such that $m[u_j] \le d' \le M[u_j]$, where $u_1, u_2,
\ldots, u_k$ are sibling nodes in $T_{mM}$. We first check if $m[u_i] \le d'
\le M[u_i]$. If so, the answer is $u_i$. Otherwise, if $d' < m[u_i]$, the
answer is the first $j>i$ such that $m[u_j] \le d'$, and if $d' > M[u_i]$, the
answer is the first $j>i$ such that $M[u_j] \ge d'$. 

\begin{lemma} \label{lem:f}
Let $u_1,u_2,\ldots$ a sequence of $T_{mM}$ nodes containing consecutive
intervals of $P$. If $g(\cdot)$ is a $\pm 1$ function and $d < m[u_1]$,
then the first $j$ such that $d\in[m[u_j],M[u_j]]$ is the first
$j>1$ such that $d\ge m[u_j]$. Similarly, if $d>M[u_1]$, then it is the
first $j>1$ such that $d \le M[u_j]$.
\end{lemma}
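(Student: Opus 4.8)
The plan is to exploit the fact that, for a range min-max tree node $u_j$ covering an interval $I_j = [\ell_j, r_j]$ of $P$, the stored quantities are exactly $m[u_j] = \min_{k \in I_j} G[k]$ and $M[u_j] = \max_{k \in I_j} G[k]$, where $G[k] = \sumw(P,g,0,k)$. For a leaf this is immediate from Definition~\ref{def:minmax} (since $e[i-1]+\sumw(P,g,\ell_i,k) = G[k]$), and for an internal node it follows by a straightforward induction: the $m$/$M$ value is the minimum/maximum of those of its children, whose intervals tile $I_j$. I would then treat the first claim (the case $d < m[u_1]$) by letting $j^\ast > 1$ be the smallest index with $m[u_{j^\ast}] \le d$, and showing two things: that no earlier node contains $d$ within its min--max range, and that $u_{j^\ast}$ does.

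The first of these is immediate. For every $j < j^\ast$ we have $d < m[u_j]$ --- this holds for $j=1$ by hypothesis and for $1 < j < j^\ast$ by minimality of $j^\ast$ --- so $d \notin [m[u_j], M[u_j]]$. Hence the first node whose interval contains the value $d$, if one exists, is $u_{j^\ast}$, and it remains only to verify that it indeed does, i.e.\ that $d \le M[u_{j^\ast}]$ (we already know $m[u_{j^\ast}] \le d$).

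This last inequality is the crux, and the only place where the $\pm 1$ hypothesis enters. The idea is a boundary-crossing argument. Because the intervals are consecutive, the right endpoint $r = r_{j^\ast-1}$ of $I_{j^\ast-1}$ and the left endpoint $\ell = \ell_{j^\ast} = r+1$ of $I_{j^\ast}$ are adjacent positions. Since $j^\ast-1 < j^\ast$, we have $G[r] \ge m[u_{j^\ast-1}] > d$, and as $G$ is integer-valued this forces $G[r] \ge d+1$. Now $g$ being a $\pm 1$ function gives $G[\ell] = G[r] + g(P[\ell]) \ge G[r] - 1 \ge d$, and since $\ell \in I_{j^\ast}$ we conclude $M[u_{j^\ast}] \ge G[\ell] \ge d$. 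Combined with $m[u_{j^\ast}] \le d$, this yields $d \in [m[u_{j^\ast}], M[u_{j^\ast}]]$, establishing the first claim.

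The second claim (the case $d > M[u_1]$) is entirely symmetric: applying the same reasoning to the $\pm 1$ function $-g$, which negates $G$ and interchanges the roles of $m$ and $M$, reduces it to the first claim. Alternatively one argues directly, taking $j^\ast$ to be the first $j>1$ with $M[u_{j^\ast}] \ge d$, and using $G[\ell] \le G[r] + 1 \le d$ to bound $m[u_{j^\ast}] \le d$ from above. I do not expect any genuine difficulty beyond the single boundary-crossing step; everything else is bookkeeping against the definition of the tree, and the essential content is simply that a $\pm 1$ function cannot let $G$ jump across the value $d$ when passing from one interval to the next.
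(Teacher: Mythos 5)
Your proof is correct and follows essentially the same route as the paper's: the paper's key inequality $M[u_j]\ge m[u_{j-1}]-1$ (a consequence of consecutive intervals and the $\pm 1$ property) is exactly your boundary-crossing step $G[\ell]\ge G[r]-1\ge d$, and the treatment of earlier indices and of the symmetric case coincides as well. You merely spell out in more detail what the paper states in one line, so there is nothing to add.
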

\begin{proof}
Since $g(\cdot)$ is a $\pm 1$ function and the intervals are consecutive, 
$M[u_j] \ge m[u_{j-1}]-1$ and $m[u_j] \le M[u_{j-1}]+1$. Therefore, if
$d \ge m[u_j]$ and $d < m[u_{j-1}]$, then $d < M[u_j]+1$, thus
$d\in[m[u_j],M[u_j]]$; and of course $d\not\in[m[u_k],M[u_k]]$ for any $k<j$
as $j$ is the first index such that $d\ge m[u_j]$. The other case is symmetric.
\end{proof}

Thus the problem is reduced to finding the first $j>i$ such that $m[j]\le d'$,
among (at most) $k$ sibling nodes (the case $M[j]\ge d'$ is symmetric). We 
build a universal table with all the possible sequences of $k$ values $m[\cdot]$
and all possible $-w^c \le d' \le w^c$ values, and for each such 
sequence and $d'$ we store the first $j$ in the sequence such that 
$m[j] \le d'$ (or we store a mark telling that there is no such position in the 
sequence). Thus the table has $(2w^c+1)^{k+1}$ entries, and $\log(k+1)$ 
bits per entry. By choosing the constant of $k=\Theta(w/(c\log w))$
so that $k \le \frac{w}{2\log(2w^c+1)}-1$, the total space is 
$\Order(\sqrt{2^w}\log w)$ (and the arguments for the table fit in a machine 
word). With the table, each search for the first node in a sequence of siblings
can be done in constant rather than $\Order(k)$ time, and hence the overall 
time is $\Order(c)$ rather than $\Order(ck)$. Note that we store the
$m'[\cdot]$ values in heap order, and therefore the $k$ sibling values to
input to the table are stored in contiguous memory, thus they can be accessed
in constant time. We use an analogous universal table for $M[\cdot]$.

Finally, the process to solve $\sumw(P,g,i,j)$ in $\Order(c)$ time is
simple. We descend in the tree up to the leaf $[\ell_k,r_k]$ containing $j$.
We obtain $\sumw(P,g,0,\ell_k-1)=e[k-1]$ and compute the rest,
$\sumw(P,g,\ell_k,j)$, in constant time using a universal table we have already
introduced. We repeat the process for $\sumw(P,g,0,i-1)$ and then subtract 
both results.

We have proved the following lemma.

\begin{lemma}
In the RAM model with $w$-bit word size, for any constant $c\ge 1$ and
a 0,1 vector $P$ of length $n < w^c$, and a $\pm 1$ function $g(\cdot)$,
$\fwd(P,g,i,j)$, $\bwd(P,g,i,j)$, and $\sumw(P,g,i,j)$
can be computed in $\Order(c)$ time using the range min-max tree 
and universal lookup tables that require $\Order(\sqrt{2^w} \log w)$ bits.
\end{lemma}

\subsection{Supporting range minimum queries}\label{sec:rmq}


Next we consider how to compute $\rmqi(P,g,i,j)$ and $\RMQI(P,g,i,j)$.

\begin{lemma}
In the RAM model with $w$-bit word size, for any constant $c\ge 1$ and
a 0,1 vector $P$ of length $n < w^c$, and a $\pm 1$ function $g(\cdot)$,
$\rmqi(P,g,i,j)$ and $\RMQI(P,g,i,j)$
can be computed in $\Order(c)$ time using the range min-max tree 
and universal lookup tables that require $\Order(\sqrt{2^w}\log w)$ bits.
\end{lemma}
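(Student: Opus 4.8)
The plan is to mirror the structure of the already-proven lemma for $\fwd$/$\bwd$/$\sumw$, adapting it to the $\argmin$/$\argmax$ setting. I focus on $\rmqi$; the $\RMQI$ case is symmetric, using the $M[\cdot]$ values instead of the $m[\cdot]$ values. First I would decompose the query range $[i,j]$ using Lemma~\ref{lem:range} into $\Order(ck)$ subranges: a leftmost partial-leaf subrange $[i,r_a]$, a rightmost partial-leaf subrange $[\ell_b,j]$, and $\Order(c)$ sequences of whole-node siblings in between. For the two partial-leaf pieces I compute the local minimum excess and its position directly with a universal table lookup on $P$, exactly the kind of table already introduced; for the internal part I need to locate, among the covering $\TMM$ nodes, the one achieving the smallest $m[\cdot]$ value, then descend into it to pin down the exact position.

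The key steps, in order, are as follows. First, convert the partial-leaf contributions: using the already-available table that evaluates $\sumw$ and $\rmq$ inside a chunk, obtain the candidate minimum value and its leftmost achieving position for $[i,r_a]$ and for $[\ell_b,j]$, offsetting by the appropriate $e[k-1]$ prefix so all values are global. Second, over the $\Order(c)$ sibling sequences covering the internal whole nodes, find the node $u$ minimizing $m[u]$. Within a single sibling sequence of up to $k$ nodes, this is a range-minimum over $k$ contiguous heap-stored values, which I resolve in $\Order(1)$ time with a universal table analogous to the one built in the $\fwd$ argument: the table is indexed by the sequence of $k$ values $m[\cdot]$ and returns the position of the minimum (ties broken leftmost). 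Since there are $\Order(c)$ such sequences and the sibling $m'$ values sit in contiguous memory in heap order, this whole phase costs $\Order(c)$. Third, take the overall minimum among the two partial-leaf candidates and the best internal node, breaking ties in favor of the leftmost position to respect the $\argmin$ convention. Fourth, if the winner is an internal $\TMM$ node, descend from it to a leaf, at each level using the same table to pick the leftmost child whose stored $m[\cdot]$ equals the node's minimum; this descent has depth $\Order(c)$, and a final table lookup inside the reached chunk recovers the exact index. The returned index is $\rmqi(P,g,i,j)$.

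The main obstacle I anticipate is getting the tie-breaking and the leftmost-position semantics exactly right throughout, rather than the asymptotics, which follow immediately from reusing the $\Order(c)$-time machinery. Because $\rmqi$ is an $\argmin$, I must ensure that at every stage — the per-sequence table lookup, the comparison across the $\Order(c)$ sequences and the two partial leaves, and the root-to-leaf descent — ties are resolved toward the smallest index, so that when several positions attain the minimum excess the leftmost one is reported consistently. A second, minor technical point is correctly handling the boundary offsets: the stored $m[\cdot]$ values already incorporate the global prefix $e[\cdot]$, whereas the partial-leaf minima computed by table lookup are local and must be shifted by $e[k-1]$ before comparison; care is needed so that partial and whole-node candidates are compared on the same global scale. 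Once these bookkeeping issues are settled, the construction uses only the range min-max tree and universal tables of $\Order(\sqrt{2^w}\log w)$ bits already available, and runs in $\Order(c)$ time, establishing the lemma.
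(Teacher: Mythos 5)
Your proposal is correct and follows essentially the same route as the paper's own proof: decompose $[i,j]$ via the covering lemma into $\Order(ck)$ nodes grouped into $\Order(c)$ sibling sequences plus two partial leaves, find the minimum per sequence and per partial leaf by universal table lookups (with leftmost tie-breaking), and then descend from the winning internal node to the leaf using the same tables. Your added attention to rescaling the local partial-leaf minima by the prefix excess before comparing them with the stored $m[\cdot]$ values is a valid bookkeeping point that the paper leaves implicit.
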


\begin{proof}
Because the algorithm for $\RMQI$ is analogous to that for $\rmqi$, we consider
only the latter. From Lemma~\ref{lem:range}, the range $[i,j]$ is covered by a 
disjoint union of $\Order(ck)$ subranges, each corresponding to some
node of the range min-max tree. Let $\mu_1,\mu_2,\ldots$ be the minimum values 
of the subranges.  Then the minimum value in $[i,j]$ is the minimum of them. 
The minimum values in each subrange are stored in array $m'$, except for at most
two subranges corresponding to leaves of the range min-max tree.  The minimum 
values of such leaf subranges are found by table lookups using $P$, by 
precomputing a universal table of $\Order(\sqrt{2^w}\log w)$ 
bits. The minimum value of a subsequence $\mu_\ell,\ldots,\mu_r$ which shares
the same parent in the range min-max tree can be also found by table lookups.
The size of such universal table is $\Order((2w^c+1)^k k\log k)
= \Order(\sqrt{2^w})$ bits (the $k$ factor is to account for queries that
span less than $k$ values, so we can specify the query length). 
Hence we find the node containing the minimum value $\mu$ among
$\mu_1,\mu_2,\ldots$, in $\Order(c)$ time. If there is a tie, we choose the 
leftmost one. 

If $\mu$ corresponds to an internal node of the range min-max tree,
we traverse the tree from the node to a leaf having the leftmost minimum
value. At each step, we find the leftmost child of the current node having
the minimum, in constant time using our precomputed table. We repeat the
process from the resulting child, until reaching a leaf. Finally, we find the 
index of the minimum value in the leaf, in constant time by a lookup on our
universal table for leaves. The overall time complexity is $\Order(c)$.
\end{proof}

\subsection{Other operations}\label{sec:degree}

The previous development on $\fwd$, $\bwd$, $\rmqi$, and $\RMQI$, has been
general, for any $g(\cdot)$. Applied to $g = \pi$, they solve a large number
of operations, as shown in Section~\ref{sec:theory}. For the remaining ones we
focus directly on the case $g = \pi$.

It is obvious how to compute $\degree(i)$, $\child(i,q)$ and $\childrank(i)$
in time proportional to the degree of the node.  To compute them in
constant time, we add another array $n'[\cdot]$ to the data structure.
In the range min-max tree, each node stores the minimum value
of a subrange for the node.  In addition to this, we store in $n'[\cdot]$
the number of the minimum values of each subrange in the tree.

\begin{lemma}
The number of children of node $i$ is equal to the number of occurrences
of the minimum value in $E[i+1,\findclose(i)-1]$.
\end{lemma}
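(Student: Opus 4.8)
The plan is to translate the claim into a statement about the excess array $E$ and then characterize exactly where the minimum value of $E$ is attained inside the interval $[i+1,\findclose(i)-1]$. Write $d = \depth(i)$; since $P[i]$ is an opening parenthesis we have $E[i] = d$, and since $P[\findclose(i)]$ is the matching closing parenthesis we have $E[\findclose(i)] = d-1$. Because the substring $P[i,\findclose(i)]$ is itself balanced, the excess relative to its start stays nonnegative throughout its interior, so $E[k] \ge d$ for every $k \in [i+1,\findclose(i)-1]$. The leaf case $\findclose(i) = i+1$ should be handled first and separately: both the interval and the child count are empty, giving $0 = 0$.

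Next I would show that the minimum value $d$ is attained in the interior \emph{precisely} at the closing parentheses of the children of $i$, which yields the lemma by counting. For the easy direction, if $c$ is a child of $i$ then $\depth(c) = d+1$, so its closing parenthesis $k = \findclose(c)$ satisfies $E[k] = (d+1)-1 = d$; distinct children have distinct closing parentheses, so each child contributes exactly one occurrence of the value $d$. The converse direction is the crux: I must show that every position $k \in [i+1,\findclose(i)-1]$ with $E[k] = d$ is the closing parenthesis of a child.

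For the converse I would split on the type of $P[k]$. If $P[k]$ were an opening parenthesis, then $k$ lies strictly inside the subtree of $i$ and is therefore a proper descendant, forcing $\depth(k) \ge d+1$ and hence $E[k] \ge d+1 > d$, a contradiction; so $P[k]$ must be a closing parenthesis. A closing parenthesis with $E[k] = d$ matches an opening parenthesis $j = \findopen(k)$ whose node has depth $d+1$, and since $j$ lies in $[i+1,\findclose(i)-1]$ it is a descendant of $i$ of depth exactly $d+1$, i.e.\ a child of $i$. Thus $k$ is the closing parenthesis of a child, completing the characterization.

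The main obstacle I anticipate is this converse direction, and specifically ruling out \emph{spurious} occurrences of the minimum: one must invoke both that all proper descendants have depth strictly greater than $d$ (so no interior opening parenthesis can reach excess $d$) and that consecutive siblings are contiguous in $P$ (so the gaps between children contribute nothing extra). Once it is established that the value $d$ occurs once per child and nowhere else, the number of occurrences of the minimum in $E[i+1,\findclose(i)-1]$ equals $\degree(i)$, as required; this is exactly the quantity that the array $n'[\cdot]$ is designed to retrieve, which will in turn support $\degree$, $\child$, and $\childrank$ in constant time.
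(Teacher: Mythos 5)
Your proof is correct and follows essentially the same route as the paper's: both identify the minimum value in $E[i+1,\findclose(i)-1]$ as $d=\depth(i)$ and then put the occurrences of $d$ in bijection with the closing parentheses of the children of $i$ (the paper does this by noting each child's range $[i_k,j_k]$ contributes exactly one occurrence of $d$, at $j_k$; you do it by a pointwise case analysis on whether $P[k]$ is opening or closing). Your version is somewhat more explicit about the converse direction and the leaf case, but the underlying argument is the same.
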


\begin{proof}
Let $d = E[i] = \depth(i)$ and $j=\findclose(i)$. 
Then $E[j] = d-1$ and all excess values
in $E[i+1,j-1]$ are $\ge d$.  Therefore the minimum value in $E[i+1,j-1]$ is 
$d$. Moreover, for the range $[i_k,j_k]$ corresponding to the $k$-th child of 
$i$,  $E[i_k] = d+1$, $E[j_k] = d$, and all the values between them are $>d$.
Therefore the number of occurrences of $d$, which is the minimum value in 
$E[i+1,j-1]$, is equal to the number of children of $i$.
\end{proof}

Now we can compute $\degree(i)$ in constant time. Let $d = depth(i)$ and
$j = \findclose(i)$. We partition the range $E[i+1,j-1]$ into $\Order(ck)$ 
subranges, each of which corresponds to a node of the range min-max tree.
Then for each subrange whose minimum value is $d$, we sum up the number of 
occurrences of the minimum value ($n'[\cdot]$).  
The number of occurrences of the minimum value in leaf subranges
can be computed by table lookup on $P$, with a universal table using
$\Order(\sqrt{2^w}\log w)$ bits.  The time complexity is $\Order(c)$ 
if we use universal tables that let us process sequences of (up to) $k$ children
at once, that is, telling the minimum $m[\cdot]$ value within the sequence and 
the number of times it appears. This table requires 
$\Order((2w^c+1)^k k\log k) = \Order(\sqrt{2^w})$ bits.

Operation $\childrank(i)$ can be computed similarly, by counting the number
of minima in $E[parent(i),i-1]$. Operation $\child(i,q)$ follows the same idea
of $\degree(i)$, except that, in the node where the sum of $n'[\cdot]$ exceeds 
$q$, we must descend until the range min-max leaf that contains the opening
parenthesis of the $q$-th child. This search is also guided by the $n'[\cdot]$
values of each node, and is done also in $\Order(c)$ time. Here we need
another universal table that tells at which position the number of occurrences
of the minimum value exceeds some threshold, which requires
$\Order((2w^c+1)^k(2w^c+1)\log k) = \Order(\sqrt{2^w}\log w)$ bits.

For operations $\leafrank$, $\leafselect$,
$\leftmostleaf$ and $\rightmostleaf$, we define a bit-vector $P_1[0,n-1]$
such that $P_1[i] = 1 \iff P[i]=1 \land P[i+1]=0$.
Then $\leafrank(i) = \rank_1(P_1,i)$ and $\leafselect(i) = \select_1(P_1,i)$
hold. The other operations are computed by
$\leftmostleaf(i) = \select_1(P_1,\rank_1(P_1,i-1)+1)$ and
$\rightmostleaf(i) = \select_1(P_1,\rank_1(P_1,\findclose(i)))$.

We recall the definition of \emph{inorder} of nodes, which is essential for 
compressed suffix trees.

\begin{definition}[\cite{Sada07a}]\label{def:inorder}
The inorder rank of an internal node $v$ is defined as the number of visited
internal nodes, including $v$, in a left-to-right depth-first traversal,
when $v$ is visited from a child of it and another child of it
will be visited next.
\end{definition}

Note that an internal node with $q$ children has $q-1$ inorders,
so leaves and unary nodes have no inorder. We define $\inorderrank(i)$ as
the smallest inorder value of internal node $i$. 

To compute $\inorderrank$ and $\inorderselect$,
we use another bit-vector $P_2[0,n-1]$ such that
$P_2[i] = 1 \iff P[i]=0 \land P[i+1]=1$.
The following lemma gives an 
algorithm to compute the inorder of an internal node.

\begin{lemma}[\cite{Sada07a}]\label{lemma:inorder2}
Let $i$ be an internal node, and let $j=\inorderrank(i)$, so
$i=\inorderselect(j)$. Then
\begin{eqnarray*}
\inorderrank(i) &=& \rank_1(P_2,\findclose(P,i+1)) \\
\inorderselect(j) &=& \enclose(P,\select_1(P_2,j)+1)
\end{eqnarray*}

Note that $\inorderselect(j)$ will return the same node $i$ for any its
$\degree(i)-1$ inorder values.
\end{lemma}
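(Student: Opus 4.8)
The plan is to verify each of the two formulas directly from Definition~\ref{def:inorder}, reducing the inorder rank to a {\rank}/{\select} query on the auxiliary bitmap $P_2$, where $P_2[i]=1$ exactly at the positions $\COP$, i.e.\ where a closing parenthesis is immediately followed by an opening one. The key observation is that such a $\COP$ pattern marks precisely the boundary between two consecutive children of a common parent: if $P[i]=0$ is the closing parenthesis of some child and $P[i+1]=1$ is the opening parenthesis of the next child, then the node enclosing them both is an internal node that is being ``revisited'' between two of its children. This is exactly the event counted in Definition~\ref{def:inorder}, so the set of $1$-positions in $P_2$ is in bijection with the set of all inorder events, listed in left-to-right depth-first order.

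For the formula $\inorderrank(i) = \rank_1(P_2,\findclose(P,i+1))$, I would argue as follows. The node $i$ is internal, so $P[i+1]=1$ is the opening parenthesis of its first child; then $\findclose(P,i+1)$ is the position of the matching closing parenthesis of that first child, which is immediately followed by the opening parenthesis of the second child (if $i$ has degree $\ge 2$), producing the first $\COP$ event that ``belongs'' to $i$. Counting $1$'s in $P_2$ up to that position therefore counts all inorder events lying at or before the boundary between the first and second children of $i$: these are exactly the inorder events in the subtrees of the first child, plus the smallest inorder of $i$ itself. Since the smallest inorder value of $i$ is by definition $\inorderrank(i)$, and the events inside the first child's subtree are counted in the correct left-to-right order, the {\rank} value equals $\inorderrank(i)$. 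The main subtlety to check here is the off-by-one bookkeeping: that the $\COP$ at $\findclose(P,i+1)$ is itself included (the {\rank} is inclusive of position $\findclose(P,i+1)$, on which $P_2=1$), and that no spurious event inside the first child's subtree is double-counted.

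For $\inorderselect(j) = \enclose(P,\select_1(P_2,j)+1)$, I would invert the above. The position $p=\select_1(P_2,j)$ is the $j$-th $\COP$ in $P_2$; by the bijection this is the boundary realizing the $j$-th inorder event. Then $P[p]=0$ is a closing parenthesis and $P[p+1]=1$ opens the next sibling, so $\enclose(P,p+1)$ returns the tightest open parenthesis enclosing position $p+1$, which is precisely the common parent of the two adjacent children, i.e.\ the internal node whose inorder this event records. Finally I would note the claimed fact that $\inorderselect(j)$ returns the same node $i$ for all $\degree(i)-1$ of its inorder values: this is immediate, since each of the $\degree(i)-1$ internal boundaries $\COP$ between consecutive children of $i$ yields the same enclosing node $i$ under $\enclose$. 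I expect the main obstacle to be establishing the bijection cleanly in the boundary cases---verifying that every $\COP$ corresponds to a genuine inorder event and vice versa, with no contribution from leaves or unary nodes (which, consistent with the remark after Definition~\ref{def:inorder}, generate no $\COP$ pattern of their own and hence no inorder)---rather than the arithmetic, which follows routinely once the correspondence is pinned down.
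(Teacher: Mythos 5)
Your argument is correct, and in fact the paper offers no proof of this lemma at all (it is imported verbatim from \cite{Sada07a}), so there is no internal proof to compare against; the order-preserving bijection between occurrences of \COP{} in $P$ (equivalently, 1s in $P_2$) and the inorder events of Definition~\ref{def:inorder} is exactly the intended justification, and both formulas follow from it as you describe. One sentence is slightly off: the 1s counted by $\rank_1(P_2,\findclose(P,i+1))$ are \emph{all} inorder events occurring up to that point in the whole depth-first traversal, not only those inside the first child's subtree plus the one for $i$ itself --- your preceding (global) characterization is the correct one and is what the conclusion actually uses. It would also be worth stating explicitly that the formulas presuppose $\degree(i)\ge 2$: for a unary internal node there is no inorder value, and correspondingly $P_2[\findclose(P,i+1)]=0$, so the claimed identity is vacuous there.
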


Note that we need not to store $P_1$ and $P_2$ explicitly; they can be
computed from $P$ when needed. We only need the extra data structures for
constant-time {\rank} and {\select}, which can be reduced to the corresponding
{\sumw} and {\fwd} operations on the virtual $P_1$ and $P_2$ vectors.

\subsection{Reducing extra space}
\label{sec:small2}

Apart from vector $P[0,n-1]$, we need to store vectors $e'$, $m'$, $M'$, and 
$n'$. In addition, to implement {\rank} and {\select} using {\sumw} and {\fwd},
we would need to store vectors $e'_\phi$, $e'_\psi$, $m'_\phi$, $m'_\psi$, 
$M'_\phi$, and $M'_\psi$ which maintain the corresponding values for functions
$\phi$ and $\psi$. However, note that $\sumw(P,\phi,0,i)$ and 
$\sumw(P,\psi,0,i)$ are nondecreasing, thus the minimum/maximum within the 
chunk is just the value of the sum at the beginning/end of the chunk. Moreover,
as $\sumw(P,\pi,0,i)=\sumw(P,\phi,0,i)-\sumw(P,\psi,0,i)$ and
$\sumw(P,\phi,0,i)+\sumw(P,\psi,0,i)=i$, it turns out that both
$e_\phi[i] = (r_i+e[i])/2$ and $e_\psi[i] = (r_i-e[i])/2$ are redundant.
Analogous formulas hold for internal nodes.
Moreover, any sequence of $k$ consecutive such values can be obtained,
via table lookup, from the sequence of $k$ consecutive values of $e[\cdot]$,
because the $r_i$ values increase regularly at any node. Hence we do not store
any extra information to support $\phi$ and $\psi$.

If we store vectors $e'$, $m'$, $M'$, and $n'$ naively, we require
$\Order(nc \log w /w)$ bits of extra space on top of the $n$ bits for $P$. 

The space can be largely reduced by using a recent technique by 
P\v{a}tra\c{s}cu \cite{Pat08}. They define an {\em aB-tree} over
an array $A[0,n-1]$, for $n$ a power of $B$, as a complete tree of 
arity $B$, storing $B$ consecutive elements of $A$ in each leaf. Additionally, 
a value $\varphi \in \Phi$ is stored at each node. This must be a function of 
the corresponding elements of $A$ for the leaves, and a function of the 
$\varphi$ values of the children and of the subtree size, for internal nodes. 
The construction is able to decode the $B$ values of $\varphi$ for the children
of any node in constant time, and to decode the $B$ values of $A$ for the leaves
in constant time, if they can be packed in a machine word.

In our case, $A=P$ is the vector, $B=k=s$ is our arity, and our trees will be
of size $N=B^c$, which is slightly smaller than the $w^c$ we have been assuming.
Our values are tuples $\varphi \in 
\langle -B^c,-B^c,0,-B^c \rangle \ldots \langle B^c,B^c,B^c,B^c \rangle$ 
encoding the $m$, $M$, $n$, and $e$ values at the nodes, respectively. 
We give next their result, adapted to our case.

\begin{lemma}[adapted from Thm.\ 8 in \cite{Pat08}]\label{lem:succincter}
Let $|\Phi| = (2B+1)^{4c}$, and $B$ be such that 
$(B+1)\log(2B+1) \le \frac{w}{8c}$ (thus $B=\Theta(\frac{w}{c\log w})$).
An {\it aB-tree} of size $N = B^c$ with values in $\Phi$ can be
stored using $N+2$ bits, plus universal lookup tables of 
$\Order(\sqrt{2^w})$ bits. It can obtain the $m$, $M$, $n$ or $e$
values of the children of any node, and descend to any of those children, in 
constant time. The structure can be built in $\Order(N+w^{3/2})$ time, plus 
$\Order(\sqrt{2^w}\poly(w))$ for the universal tables.
\end{lemma}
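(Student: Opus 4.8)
The plan is to reduce the statement to P\v{a}tra\v{s}cu's \emph{aB-tree} construction (Thm.~8 in \cite{Pat08}) by instantiating its parameters with our array, arity, label alphabet and word-length constraint, and then to verify the only genuinely problem-specific hypothesis: that our four-component label is of the restricted ``locally computable'' form the theorem demands. First I would fix the correspondence. The underlying array is $A=P$, an arbitrary $0,1$ vector of length $N=B^c$, so the number of distinct arrays is exactly $2^N$ and the information-theoretic term appearing in Thm.~8 (the logarithm of that count) equals $N$; the tree is the complete $B$-ary tree of height $c$, each leaf storing $B$ consecutive bits of $P$; and the label of a node is the tuple $\varphi=\langle m,M,n,e\rangle$.

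Second, I would check that $\varphi$ is a legal aB-tree label, i.e.\ that an internal node's value is a function of its children's labels (and of the subtree size) alone, while a leaf's value is a function of its $B$ stored bits. For a leaf this is immediate: $\langle m,M,n,e\rangle$ is read off the $B$-bit pattern by a table lookup. For an internal node with children labels $\langle m_t,M_t,n_t,e_t\rangle$, $t=1,\dots,B$, Definition~\ref{def:minmax} and the definition of $n'[\cdot]$ give $m=\min_t m_t$, $M=\max_t M_t$, $e=e_B$, and $n=\sum_{t:\, m_t=m} n_t$. Each component is therefore a function of the children's tuples; the delicate one is $n$, which needs both the $m_t$ and the $n_t$ fields, but it is still local. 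Hence $\varphi$ satisfies the hypothesis and a parent tuple can be recomputed from the packed children tuples by a single table lookup.

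Third comes the size and word-length bookkeeping. Every component of $\varphi$ lies in a range of size at most $2B^c+1$: the excess-type values $m,M,e\in[-B^c,B^c]$ and the count $n\in[0,B^c]$. Since $2B^c+1\le(2B+1)^c$, we get $|\Phi|\le((2B+1)^c)^4=(2B+1)^{4c}$, matching the hypothesis. The constraint $(B+1)\log(2B+1)\le w/(8c)$ is exactly what makes everything run in constant time: the $B$ children's labels occupy $B\cdot\log|\Phi|=4cB\log(2B+1)\le w/2$ bits, so they pack into a single machine word; consequently every per-node operation (decoding the children's $m/M/n/e$, locating the child to descend into, recomputing the parent label) becomes a lookup into a universal table with at most $2^{w/2}=\sqrt{2^w}$ entries, and the spill-over alphabets used by the encoding stay below $2^w$. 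Reading off the conclusions of Thm.~8 then yields space $N+2$ bits, universal tables of $\Order(\sqrt{2^w})$ bits, constant-time access to a node's children labels and descent into any child, and construction time $\Order(N+w^{3/2})$ plus $\Order(\sqrt{2^w}\poly(w))$ for the tables.

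The hard part is the space bound of $N+2$ bits, i.e.\ that augmenting the plain $N$-bit encoding of $P$ with all of the $m/M/n/e$ navigation information costs only two extra bits in total. This is precisely the core of P\v{a}tra\v{s}cu's spill-over technique: each node is stored as a bit-part together with a small ``spillover'' symbol that is folded into the representation one level up in a mixed-radix fashion, so that the rounding loss is a vanishing fraction of a bit per level and only $\Order(1)$ over all $c$ levels. I would not reprove this machinery; the contribution on our side is confined to the verifications above, namely that our label function is local and that our parameters $|\Phi|=(2B+1)^{4c}$ and $(B+1)\log(2B+1)\le w/(8c)$ meet the theorem's requirements and keep all operations within a single word, hence in constant time.
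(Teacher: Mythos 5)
Your proposal is correct and follows essentially the same route as the paper, which itself offers no standalone proof: it simply instantiates P\v{a}tra\c{s}cu's Theorem~8 with $A=P$, arity $B$, and the four-tuple labels $\langle m,M,n,e\rangle$, exactly as you do, with the same verifications that the labels are locally computable, that $|\Phi|\le(2B+1)^{4c}$, and that the constraint $(B+1)\log(2B+1)\le w/(8c)$ packs the $B$ children labels into half a word so all per-node operations are single lookups in $\Order(\sqrt{2^w})$-bit universal tables. Your write-up is in fact more explicit than the paper's, which only adds the remark that the $+w^{3/2}$ term in the construction time comes from an internal fusion tree on $\Order(w)$ values.
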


The ``$+w^{3/2}$'' construction time comes from a fusion tree \cite{FW93} that 
is used internally on $\Order(w)$ values. It could be reduced to 
$w^\epsilon$ time for any constant $\epsilon>0$ and navigation time 
$\Order(1/\epsilon)$, but we prefer to set $c>3/2$ so that $N=B^c$ dominates
it. 

These parameters still allow us to represent our range min-max trees while
yielding the complexities we had found, as
$k=\Theta(w/(c\log w))$ 
and $N \le w^c$. Our accesses to the range min-max tree 
are either $(i)$ partitioning intervals $[i,j]$ into $\Order(ck)$ subranges,
which are easily identified by navigating from the root in $\Order(c)$ time 
(as the $k$ children are obtained together in constant time); or $(ii)$ 
navigating from the root while looking for some leaf based on the intermediate 
$m$, $M$, $n$, or $e$ values. 
Thus we retain all of our time complexities. 

The space, instead, is reduced to $N + 2 + \Order(\sqrt{2^w})$, where 
the latter part comes from our universal tables and those of
Lemma~\ref{lem:succincter} (our universal tables become smaller with the
reduction from $w$ and $s$ to $B$). 
Note that our vector $P$ must be exactly of length $N$;
padding is necessary otherwise. Both the padding and the universal tables will
lose relevance for larger trees, as seen in the next section.

The next theorem summarizes our results in this section. 

\begin{theorem} \label{th:small}
On a $w$-bit word RAM, for any constant $c>3/2$, we can represent a sequence 
$P$ of $N = B^c$ parentheses, for sufficiently small 
$B = \Theta(\frac{w}{c\log w})$,
computing all operations of Table~\ref{tab:ops} in $\Order(c)$ 
time, with a data structure depending on $P$ that uses $N + 2$ bits,
and universal tables (i.e., not depending on $P$) that use 
$\Order(\sqrt{2^w})$ bits. The preprocessing time is $\Order(N +
\sqrt{2^w}\poly(w))$ (the latter being needed only once for universal tables)
and its working space is $\Order(N)$ bits.
\end{theorem}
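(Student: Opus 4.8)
The plan is to assemble the pieces built up in this section and to verify that replacing the naive heap layout of the range min-max tree by the aB-tree of Lemma~\ref{lem:succincter} preserves every time bound while cutting the $P$-dependent space to $N+2$ bits. First I would recall from Section~\ref{sec:theory} that every operation of Table~\ref{tab:ops} other than $\ins$ and $\del$ reduces to a constant number of calls to the primitives $\fwd$, $\bwd$, $\sumw$, $\rmqi$, $\RMQI$, $\degree$, $\child$, and $\childrank$ applied to $P$ with $g=\pi$. The operations $\rank$/$\select$ on $P$ (and on the virtual vectors $P_1,P_2$ needed for the leaf- and inorder-operations) are themselves $\sumw$/$\fwd$ calls under the functions $\phi,\psi$, and by the redundancy observation of Section~\ref{sec:small2} they require no stored information beyond what the $\pi$-structure already holds. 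Each of these primitives has already been shown, earlier in this section, to run in $\Order(c)$ time using the range min-max tree $\TMM$ together with universal tables of $\Order(\sqrt{2^w}\log w)$ bits.

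The core step is to realize $\TMM$ as an aB-tree in the sense of Lemma~\ref{lem:succincter}, taking $A=P$, arity $B=k=s$, and size $N=B^c$ (rounding $c$ up to an integer if necessary, which changes nothing asymptotically since the depth remains $\Order(c)$). The value stored at each node is the tuple $\varphi=\langle m,M,n,e\rangle$ of Definition~\ref{def:minmax} augmented with the count $n$ of Section~\ref{sec:degree}. The key verification is that $\varphi$ is a legitimate aB-tree value, i.e.\ a function of the children's $\varphi$ values and the subtree size: indeed $e[u]$ is the last child's $e$, $m[u]$ the minimum of the children's $m$, $M[u]$ the maximum of the children's $M$, and $n[u]$ the sum of $n[u_j]$ over those children $u_j$ with $m[u_j]=m[u]$; at a leaf all four values are functions of the $B$ bits of $P$ and are read off by table lookup. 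Each component is bounded by $B^c$ in absolute value, so the tuple fits within the domain $\Phi$ of Lemma~\ref{lem:succincter}, which then yields a representation in $N+2$ bits plus $\Order(\sqrt{2^w})$ bits of universal tables that decodes the $m,M,n,e$ values of all $B$ children of any node, and descends into any chosen child, in constant time.

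Next I would confirm the $\Order(c)$ bounds survive this change. Every access to $\TMM$ in the algorithms is of one of two kinds: either (i) covering a range $[i,j]$ by $\Order(ck)$ subranges as in Lemma~\ref{lem:range}, obtained by reading the $k$ children's values at $\Order(c)$ successive levels from the root; or (ii) descending root-to-leaf guided by the $m,M,n$ or $e$ values, again over $\Order(c)$ levels. Within a single block of $k$ siblings, the relevant search -- the first sibling whose interval $[m,M]$ contains the target (Lemma~\ref{lem:f}), the $n$-guided descent for $\child$, or the minimum selection for $\rmqi$ -- is resolved in constant time by the universal tables introduced earlier, whose sizes, now computed with $B$ in place of $w$, remain $\Order(\sqrt{2^w})$ bits. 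Since the aB-tree supplies all $k$ child values in constant time per level, each primitive costs $\Order(c)$, and hence so does every operation of Table~\ref{tab:ops}.

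Finally I would account for space and construction. The only $P$-dependent storage is the $N+2$ bits of the aB-tree; $P_1,P_2$ are synthesized from $P$ on demand rather than stored, and the universal tables are independent of $P$ and occupy $\Order(\sqrt{2^w})$ bits. By Lemma~\ref{lem:succincter} the structure is built in $\Order(N+w^{3/2})$ time, and since $c>3/2$ gives $N=B^c=\Theta((w/(c\log w))^c)$, which dominates $w^{3/2}$, this is $\Order(N)$; the one-time table construction adds $\Order(\sqrt{2^w}\poly(w))$, and the per-tree working space is $\Order(N)$ bits. The main obstacle is precisely the bookkeeping of the previous paragraph: one must check that the aB-tree abstraction genuinely captures all four node values \emph{and} every navigation pattern used by $\fwd$, $\bwd$, $\sumw$, $\rmqi$, $\RMQI$, $\degree$, $\child$, and $\childrank$. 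Once that correspondence is in place, the theorem is merely the conjunction of the lemmas already established.
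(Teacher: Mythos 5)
Your overall strategy is exactly the paper's: the theorem is obtained by realizing the range min-max tree of Section~\ref{sec:small} as an aB-tree in the sense of Lemma~\ref{lem:succincter}, with node values $\langle m,M,n,e\rangle$, checking that these are functions of the children's values, and observing that both access patterns (covering a range by $\Order(ck)$ sibling groups, and value-guided root-to-leaf descent) survive with $\Order(c)$ cost once the aB-tree delivers all $B$ child values per level in constant time. The space, construction-time, and working-space accounting also matches the paper.

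There is, however, one genuine gap. You assert that the $\rank$/$\select$ operations on the virtual vectors $P_1$ and $P_2$ ``require no stored information beyond what the $\pi$-structure already holds,'' citing the redundancy observation of Section~\ref{sec:small2}. That observation only eliminates the $\phi$ and $\psi$ structures \emph{for $P$ itself}, via the identities $e_\phi[i]=(r_i+e[i])/2$ and $e_\psi[i]=(r_i-e[i])/2$, which rely on $\sumw(P,\phi,0,i)+\sumw(P,\psi,0,i)=i$ and $\sumw(P,\phi,0,i)-\sumw(P,\psi,0,i)=e[i]$. For $P_1$ (leaves) and $P_2$ (the $\inorderrank$ vector) the analogous identity only makes $\psi$ redundant given $\phi$; the cumulative counts $\sumw(P_1,\phi,0,r_i)$ and $\sumw(P_2,\phi,0,r_i)$ are independent quantities not derivable from $e,m,M,n$ for $\pi$ on $P$. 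Without storing them at internal nodes, $\leafrank(i)$ would require summing leaf counts over all chunks preceding $i$, and $\leafselect$ would lack the per-node counts needed to guide its descent, so these operations (and $\leftmostleaf$, $\rightmostleaf$, $\inorderrank$, $\inorderselect$, all of which Table~\ref{tab:ops} and hence the theorem cover) would not run in $\Order(c)$ time. The fix is the one the paper applies: enlarge the value domain to $|\Phi|=(2B+1)^{6c}$ to accommodate the two extra $\phi$-counters and tighten the constraint on $B$ to $(B+1)\log(2B+1)\le \frac{w}{12c}$, after which the statement goes through verbatim.
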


In case we need to solve the operations that build on $P_1$ and $P_2$, we need
to represent their corresponding $\phi$ functions (as $\psi$ is redundant).
This can still be done with Lemma~\ref{lem:succincter} using $\Phi =
(2B+1)^{6c}$ and $(B+1)\log(2B+1) \le \frac{w}{12c}$. Theorem~\ref{th:small}
applies verbatim.

\section{A data structure for large trees}\label{sec:large}

In practice, one can use the solution of the previous section for trees of any
size, achieving $\Order(\frac{k\log n}{w}\log_k n) = 
\Order(\frac{\log n}{\log w - \log\log n}) = \Order(\log n)$ time (using
$k = w/\log n$) for all operations with an extremely simple and elegant data 
structure (especially if we choose to store arrays $m'$, etc. in simple form). 
In this section we show how to achieve constant time on trees of arbitrary
size.

For simplicity, let us assume in this section that we handle trees of 
size $w^c$ in Section~\ref{sec:small}. We comment at the end the difference
with the actual size $B^c$ handled. 

For large trees with $n > w^c$ nodes, we divide the parentheses sequence into 
{\em blocks} of length $w^c$. Each block (containing a possibly non-balanced
sequence of parentheses) is handled with the range min-max tree of 
Section~\ref{sec:small}.

Let $m_1, m_2, \ldots, m_\tau$; $M_1, M_2, \ldots, M_\tau$; and $e_1, e_2, \ldots,
e_\tau$; be the minima, maxima, and excess of the $\tau = \lceil 2n/w^c \rceil$ 
blocks, respectively. These values are stored at the root nodes of each 
$T_{mM}$ tree and can be obtained in constant time.

\subsection{Forward and backward searches on $\pi$}

We consider extending $\fwd(P,\pi,i,d)$ and $\bwd(P,\pi,i,d)$ to trees of
arbitrary size. We focus on $\fwd$, as $\bwd$ is symmetric.

We first try to solve $\fwd(P,\pi,i,d)$ within the block $j = \lfloor i/w^c
\rfloor$ of $i$. If the answer is within block $j$, we are done. Otherwise,
we must look for the first excess $d' =
e_{j-1}+\sumw(P,\pi,0,i-1-w^c\cdot(j-1))+d$ in the following blocks
(where the $\sumw$ is local to block $j$). Then the answer
lies in the first block $r>j$ such that $m_r \le d' \le M_r$. Thus, we can 
apply again Lemma~\ref{lem:f}, starting at $[m_{j+1},M_{j+1}]$: If $d' \not\in 
[m_{j+1},M_{j+1}]$, we must either find the first $r>j+1$ such that 
$m_r \le j$, or such that $M_r \ge j$. Once we find such block, we complete 
the operation with a local $\fwd(P,\pi,0,d'-e_{r-1})$ query inside it.

The problem is how to achieve constant-time search, for any $j$, in a sequence
of length $\tau$. Let us focus on left-to-right minima, as the others are similar. 

\begin{definition}
Let $m_1,m_2,\ldots,m_\tau$ be a sequence of integers. We define for
each $1\le j\le \tau$ the {\em left-to-right minima starting at $j$} as
$lrm(j) = \langle j_0, j_1, j_2, \ldots \rangle$, where $j_0=j$,
$j_r < j_{r+1}$, $m_{j_{r+1}} < m_{j_r}$, and $m_{j_r+1} \ldots m_{j_{r+1}-1}
\ge m_{j_r}$.
\end{definition}

The following lemmas are immediate.

\begin{lemma} \label{lem:lrmsearch}
The first element $\le x$ after position $j$ in a sequence of 
integers $m_1,m_2,\ldots,m_\tau$ is $m_{j_r}$ for some $r>0$, where $j_r \in
lrm(j)$.
\end{lemma}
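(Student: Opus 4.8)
The plan is to prove Lemma~\ref{lem:lrmsearch} directly from the definition of $lrm(j)$. First I would recall what the definition of the left-to-right minima sequence gives us: starting at $j_0 = j$, each successive index $j_{r+1}$ is the first position after $j_r$ whose value strictly decreases below $m_{j_r}$, while every intermediate value $m_{j_r+1}, \ldots, m_{j_{r+1}-1}$ stays $\ge m_{j_r}$. Thus the values $m_{j_0} > m_{j_1} > m_{j_2} > \cdots$ form a strictly decreasing sequence, and the sequence $lrm(j)$ records exactly the positions where a new running minimum (starting the scan at $j$) is achieved.

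Next I would argue the key claim: if $m_p$ is the value of the first element $\le x$ occurring strictly after position $j$, then $p$ must be one of the indices $j_r$ in $lrm(j)$. The reason is that $m_p \le x < m_{j}$-type reasoning does not quite apply directly, so instead I would argue via the running minimum. At position $p$, the value $m_p$ is a new strict running minimum of the scan started at $j$: indeed, if some earlier position $p' < p$ (with $p' > j$) had $m_{p'} \le m_p$, then since $p$ is the \emph{first} position with value $\le x$, we would need $m_{p'} > x \ge m_p$, a contradiction with $m_{p'} \le m_p \le x$. Hence no position strictly between $j$ and $p$ attains a value $\le m_p$, which is precisely the condition making $p$ a left-to-right minimum index in $lrm(j)$. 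Therefore $p = j_r$ for some $r$, and since $p > j = j_0$ we have $r > 0$.

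The one subtlety I would be careful about is the handling of ties and the strict-versus-weak inequalities in the definition, since the definition uses $m_{j_{r+1}} < m_{j_r}$ (strict decrease) for the jump points but $m_{j_r+1}\ldots m_{j_{r+1}-1} \ge m_{j_r}$ (weak) for the plateau in between. I want the first element $\le x$ to be captured, and I must make sure that the very first index where the value dips to $\le x$ coincides with a strict running-minimum index rather than being skipped by an equal-valued plateau. The argument above handles this because the first position reaching $\le x$ strictly undercuts every value seen since $j$ (all of which are $> x$), so it is genuinely a strict new minimum and thus an element of $lrm(j)$. This is the main (though minor) obstacle; the rest is a direct unwinding of the definitions, so I expect the proof to be short.
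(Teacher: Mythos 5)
The paper offers no proof of this lemma (it is declared ``immediate''), and your direct unwinding of the definition is exactly the intended argument: the first position $p>j$ with $m_p\le x$ strictly undercuts everything seen since $j$, hence is a new strict running minimum and so lies in $lrm(j)$. One small caveat: membership of $p$ in $lrm(j)$ requires $m_p<m_q$ for \emph{all} $j\le q<p$, including $q=j$ itself, so your step ``no position strictly between $j$ and $p$ attains a value $\le m_p$, which is precisely the condition'' silently needs the additional hypothesis $x<m_j$ (otherwise, e.g., $m=(5,5,3)$ with $x=5$ is a counterexample to the lemma as literally stated); this hypothesis is guaranteed in every use the paper makes of the lemma, and your closing parenthetical ``all of which are $>x$'' is where it enters.
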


\begin{lemma} \label{lem:lrmtree}
Let $lrm(j)[p_j] = lrm(j')[p_{j'}]$. Then $lrm(j)[p_j+i] = lrm(j')[p_{j'}+i]$ 
for all $i>0$.
\end{lemma}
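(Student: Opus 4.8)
Lemma~\ref{lem:lrmtree} asserts that once two left-to-right-minima sequences $lrm(j)$ and $lrm(j')$ agree at one position, they agree at every subsequent position. The plan is to prove this by unwinding the definition of $lrm$ and showing that the successor of any index in an $lrm$ sequence depends only on that index and the underlying array $m_1,\ldots,m_\tau$, not on the starting point of the sequence. In other words, I would argue that the map sending an index $j_r$ to its successor $j_{r+1}$ is a well-defined function of $j_r$ alone. Granting this, the conclusion follows by a trivial induction on $i$.

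The key step is to make precise this ``successor is a function of the index'' claim. From the definition, given $j_r$, the next element $j_{r+1}$ is the smallest index $t>j_r$ with $m_t < m_{j_r}$ (this is exactly what the conditions $m_{j_{r+1}} < m_{j_r}$ and $m_{j_r+1}\ldots m_{j_{r+1}-1} \ge m_{j_r}$ say together: $j_{r+1}$ is the first place after $j_r$ whose value strictly undercuts $m_{j_r}$). Crucially, this description of $j_{r+1}$ refers only to $j_r$ and the array $m$; the index $j$ at which the sequence began plays no role whatsoever. I would therefore define $\mathit{succ}(t)$ to be the least index $>t$ whose $m$-value is strictly below $m_t$, and observe that for any starting point $j$, the sequence $lrm(j)$ satisfies $j_{r+1} = \mathit{succ}(j_r)$ for all $r\ge 0$.

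With $\mathit{succ}$ in hand the proof is a one-line induction. Suppose $lrm(j)[p_j] = lrm(j')[p_{j'}]$; call this common index $v$. Then $lrm(j)[p_j+1] = \mathit{succ}(v) = lrm(j')[p_{j'}+1]$, and inductively, if $lrm(j)[p_j+i] = lrm(j')[p_{j'}+i] =: v_i$, then both $(p_j+i+1)$-th and $(p_{j'}+i+1)$-th entries equal $\mathit{succ}(v_i)$, so they coincide. The only mild obstacle is bookkeeping around the boundary: an $lrm$ sequence is finite, terminating once no strictly smaller value exists to the right (equivalently $\mathit{succ}$ is undefined). I would handle this by noting that termination, too, is a property of the index $v$ alone ($\mathit{succ}(v)$ is undefined exactly when $m_v$ is a suffix minimum from position $v$ onward), so if one sequence terminates at $v$ the other does as well, and there is no entry at position $p_j+i+1$ on either side to compare. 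This shows the two sequences are identical from the point of agreement onward, completing the proof.
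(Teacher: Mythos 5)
Your proof is correct: the observation that the successor $j_{r+1}$ is simply the first index after $j_r$ whose $m$-value strictly undercuts $m_{j_r}$ --- hence a function of $j_r$ alone --- is exactly why the paper declares this lemma ``immediate'' and omits any proof. Your write-up, including the handling of termination, is a faithful spelling-out of that same argument.
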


That is, once the $lrm$ sequences starting at two positions coincide in a
position, they coincide thereafter. Lemma ~\ref{lem:lrmtree} is essential to
store all the $\tau$ sequences $lrm(j)$ for each block $j$, in compact form.
We form a tree $T_{lrm}$, which is essentially a trie composed of the reversed 
$lrm(j)$ sequences. The tree has $\tau$ nodes, one per block. Block $j$ is a
child of block $j_1=lrm(j)[1]$ (note $lrm(j)[0]=j_0=j$), that is, $j$ is a 
child of the first block $j_1>j$ such that $m_{j_1} < m_j$. Thus each 
$j$-to-root path spells out $lrm(j)$, by Lemma~\ref{lem:lrmtree}. We add a 
fictitious root to convert the forest into a tree. Note this 
structure is called 2d-Min-Heap by Fischer \cite{Fis10}, who shows how to 
build it in linear time.

\begin{example}
Figure~\ref{fig:lrmtree} illustrates the tree built from the sequence
$\langle m_1\ldots m_9\rangle = \langle 6,4,9,7,4,4,1,8,5 \rangle$. Then
$lrm(1) = \langle 1,2,7\rangle$, $lrm(2)=\langle 2,7 \rangle$, 
$lrm(3) = \langle 3,4,5,7 \rangle$, and so on.
\end{example}

\begin{figure}[bt]
\centerline{\includegraphics[width=12cm]{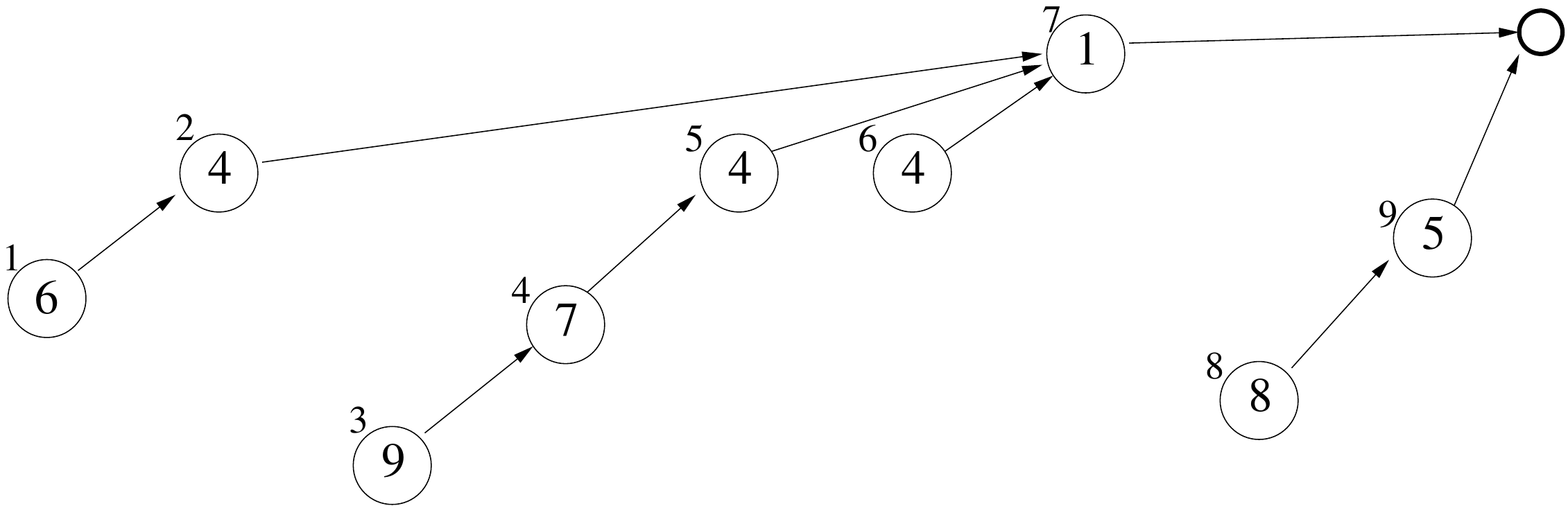}}
\caption{A tree representing the $lrm(j)$ sequences of values $m_1\ldots m_9$.}
\label{fig:lrmtree}
\end{figure} 

If we now assign weight $m_j-m_{j_1}$ to the edge between $j$ and its parent 
$j_1$, the original problem of finding the first $j_r>j$ such that 
$m_{j_r} \le d'$ reduces to finding the first ancestor $j_r$ of node $j$ such 
that the sum of the weights between $j$ and $j_r$ exceeds $d'' = m_j-d'$. 
Thus we need to compute {\em weighted level ancestors} in $T_{lrm}$.
Note that the weight of an edge in $T_{lrm}$ is at most $w^c$.

\begin{lemma}\label{lem:weighted-level-ancestor}
For a tree with $\tau$ nodes where each edge has an integer weight in $[1,W]$,
after $\Order(\tau\log^{1+\epsilon} \tau)$ time preprocessing, a weighted 
level-ancestor query is solved in $\Order(t+1/\epsilon)$ time on a 
$\Omega(\log(\tau W))$-bit word RAM. The size of the data structure is 
$\Order(\tau \log \tau \log (\tau W) + \frac{\tau W t^t}{\log^t (\tau W)} +
(\tau W)^{3/4})$ bits.
\end{lemma}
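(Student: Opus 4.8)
The plan is to solve weighted level-ancestor queries by reducing them to a combination of ordinary (unweighted) level-ancestor queries and a predecessor search along a path, exploiting the heavy-path or ladder decomposition of $T_{lrm}$. First I would compute, for each node, the sum of edge weights from the root to that node; call this its \emph{weighted depth}. Since $T_{lrm}$ has $\tau$ nodes and weights are bounded by $W$, every weighted depth is at most $\tau W$, so it fits in $\Order(\log(\tau W))$ bits, matching the word size. A weighted level-ancestor query ``find the first ancestor whose accumulated weight from $j$ exceeds $d''$'' then becomes: find the deepest ancestor of $j$ whose weighted depth is at most $\mathrm{wdepth}(j)-d''$. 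So the task is reduced to a \emph{predecessor} query on the sequence of weighted depths encountered along the root-to-$j$ path.

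The key steps, in order, are as follows. I would apply a \emph{heavy-path} (or long-path/ladder) decomposition to $T_{lrm}$, so that any root-to-node path crosses $\Order(\log\tau)$ distinct paths. Along each such path the weighted depths are monotonically increasing, so a predecessor query restricted to one path is exactly a one-dimensional predecessor search among the weighted-depth values on that path. For this I would install, on each heavy path, the P\v{a}tra\c{s}cu--Thorup predecessor structure already invoked earlier in the excerpt, achieving time ``$\lg\frac{\ell-\lg n}{a}$''; tuned to the present parameters this gives $\Order(t)$ query time with the stated space $\Order(\frac{\tau W\, t^t}{\log^t(\tau W)} + (\tau W)^{3/4})$ bits (summed over all paths, whose sizes add to $\tau$). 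To locate, in $\Order(1/\epsilon)$ time, which heavy path contains the answer and to jump between consecutive heavy paths, I would layer an ordinary level-ancestor structure on $T_{lrm}$; the $\Order(\tau\log\tau\log(\tau W))$-bit term accounts for storing the tree topology, the path decomposition, and the weighted depths explicitly. The $\Order(\tau\log^{1+\epsilon}\tau)$ preprocessing budget covers building the decomposition and the predecessor structures; the $\log^{1+\epsilon}$ (rather than linear) reflects the cost of constructing the $\Order(\log\tau)$-deep structure with the van Emde Boas--style predecessor indexes.

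The main obstacle I expect is controlling the interaction between the two logarithmic-looking costs: a naive decomposition makes a query touch $\Theta(\log\tau)$ paths, each demanding its own predecessor search, which would blow the time up to $\Order(t\log\tau)$ rather than the claimed $\Order(t+1/\epsilon)$. To avoid this I must ensure that the predecessor search is performed only \emph{once}, on the single correct heavy path, after first identifying that path in $\Order(1/\epsilon)$ time. The trick is to note that weighted depth is monotone along \emph{any} root-to-node path, so I can first binary-search (or use an unweighted level-ancestor jump) over the $\Order(\log\tau)$ heavy paths to find the one where the target weighted depth $\mathrm{wdepth}(j)-d''$ falls between the path's top and bottom weighted depths; this coarse localization is the $\Order(1/\epsilon)$ part, and only the final intra-path predecessor query incurs the $\Order(t)$ cost. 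Verifying that this localization can be done within $\Order(1/\epsilon)$ time and that the predecessor structure's space bounds sum correctly over all heavy paths (using $\sum$ of path lengths $=\tau$ and convexity of the $\frac{t^t}{\log^t}$ term) is the delicate accounting that the rest of the proof must discharge.
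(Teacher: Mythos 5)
Your overall architecture --- reduce the query to a predecessor search on weighted depths along the root-to-$j$ path, decompose the tree into paths, and answer the final step with a single intra-path predecessor query --- matches the paper's strategy, which is a weighted variant of Bender and Farach's ladder scheme with a concatenated sparse bitmap (supporting {\rank}/{\select} in $\Order(t)$ time) playing the role of your per-path predecessor structures. However, there is a genuine gap at the step you yourself flag as delicate: the localization of the correct path in $\Order(1/\epsilon)$ time. A binary search over the $\Order(\log\tau)$ heavy paths crossed by the root-to-$j$ path costs $\Order(\log\log\tau)$, which is not $\Order(1/\epsilon)$ for constant $\epsilon$; and an ``unweighted level-ancestor jump'' cannot substitute for it, because the query is posed in terms of accumulated \emph{weight}, so you do not know the unweighted distance to jump to. The missing idea, which is exactly what the paper supplies, is to store for each node the $\Order(\log\tau)$ accumulated weights to its ancestors at unweighted distances $2^i$ and to put these $\Order(\log\tau)$ keys into a \emph{fusion tree}, which answers predecessor among $z$ keys of $\ell$ bits in $\Order(\log_\ell z)=\Order(1/\epsilon)$ time after $\Order(z^{1+\epsilon})$ preprocessing. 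This is also what produces the $\Order(\tau\log\tau\log(\tau W))$ space term and the $\Order(\tau\log^{1+\epsilon}\tau)$ preprocessing term in the statement; without it, those terms in your write-up are asserted but not accounted for.

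A second, related issue is that heavy-path decomposition does not by itself guarantee that, once the coarse jump lands you at the ancestor $v'$ at unweighted distance $2^i$, the remaining search stays on a single path: that guarantee is the defining property of the \emph{ladder} decomposition (longest-path decomposition with each path doubled upwards), under which a node of height $h$ has its first $h$ ancestors in its own ladder, so the answer provably lies in the ladder of $v'$. With heavy paths you would still need to hop between $\Theta(\log\tau)$ path segments in the worst case, reintroducing the very blow-up you are trying to avoid. Finally, a minor accounting point: instantiating a separate predecessor/sparse-bitmap structure on each path makes the additive terms of the form $(\ell_i W)^{3/4}$ sum to something that can exceed $(\tau W)^{3/4}$ (e.g., many short paths); the paper sidesteps this by concatenating all ladder bitmaps into a single sparse bitmap of $\Order(\tau)$ ones over a universe of size $\Order(\tau W)$, so P\v{a}tra\c{s}cu's bound is paid only once.
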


\begin{proof}
We use a variant of Bender and Farach's $\langle \Order(\tau \log \tau),\Order(1) 
\rangle$ algorithm \cite{BenFar04}. Let us ignore weights for a while. We 
extract a longest root-to-leaf path of the tree, which disconnects the tree 
into several
subtrees. Then we repeat the process recursively for each subtree, until we
have a set of paths. Each such path, say of length $\ell$, is extended
upwards, adding other $\ell$ nodes towards the root (or less if the root is
reached). The extended path is called a {\em ladder}, and its is stored as an
array so that level-ancestor queries within a ladder are trivial. This
partitioning 
guarantees that a node of height $h$ has also height $h$ in its path, and thus 
at least its first $h$ ancestors are in its ladder. Moreover the union of all 
ladders has at most $2\tau$ nodes and thus requires $\Order(\tau\log \tau)$ bits.

For each tree node $v$, an array of its (at most) $\log \tau$ ancestors at depths
$\depth(v)-2^i$, $i\ge 0$,
is stored (hence the $\Order(\tau\log \tau)$-words space and time). 
To solve the query $\LA(v,d)$, where $d'=\depth(v)-d$, the ancestor $v'$ at 
distance $d''= 2^{\lfloor \log d' \rfloor}$ from $v$ is computed. Since $v'$ 
has height at least $d''$, it has at least its first $d''$ ancestors in its
ladder. But from $v'$ we need only the ancestor at distance $d'-d'' < d''$,
so the answer is in the ladder.

To include the weights, we must be able to find the node $v'$ and the answer
considering the weights, instead of the number of nodes. We store for each 
ladder of length $\ell$ a sparse bitmap of length at most $\ell W$, where the 
$i$-th 1 left-to-right represents the $i$-th node upwards in
the ladder, and the distance between two 1s, the weight of the edge between
them. All the bitmaps are concatenated into one (so each ladder is represented
by a couple of integers indicating the extremes of its bitmap). This long
bitmap contains at most $2\tau$ 1s, and because weights do not exceed $W$, at
most $2\tau W$ 0s. Using P\v{a}tra\c{s}cu's sparse bitmaps \cite{Pat08}, it can be
represented using $\Order(\tau\log W + \frac{\tau W t^t}{\log^t (\tau W)} +
(\tau W)^{3/4})$
bits and do {\rank}/{\select} in $\Order(t)$ time.

In addition, we store for each node the $\log \tau$ accumulated weights towards 
ancestors at distances $2^i$, using fusion trees \cite{FW93}. These can store
$z$ keys of $\ell$ bits in $\Order(z\ell)$ bits and, using $\Order(z^{5/6}
(z^{1/6})^4) = \Order(z^{1.5})$ preprocessing time, answer predecessor 
queries in $\Order(\log_\ell z)$ time (via an $\ell^{1/6}$-ary tree).
The $1/6$ can be reduced to achieve $\Order(z^{1+\epsilon})$
preprocessing time and $\Order(1/\epsilon)$ query time for any desired 
constant $0<\epsilon \le 1/2$.

In our case this means $\Order(\tau\log \tau\log(\tau W))$ bits of space, 
$\Order(\tau\log^{1+\epsilon} \tau)$ construction time, and $\Order(1/\epsilon)$
access time. Thus we can find in constant time, from each node $v$,
the corresponding weighted ancestor $v'$ using a predecessor query. If this
corresponds to (unweighted) distance $2^i$, then the true ancestor is at distance
$<2^{i+1}$, and thus it is within the ladder of $v'$, where it is found using 
{\rank}/{\select} on the bitmap of ladders (each node $v$ has a pointer to its
1 in the ladder corresponding to the path it belongs to).
\end{proof}

To apply this lemma for our problem of computing $\fwd$ outside blocks,
we have $W=w^c$ and $\tau = \frac{n}{w^c}$.
Then the size of the data structure becomes 
$\Order(\frac{n\log^2 n}{w^c} + \frac{n\,t^t}{\log^t n} + n^{3/4})$.
By choosing $\epsilon = \min(1/2,1/c)$, the query time is $\Order(c+t)$ 
and the preprocessing time is $\Order(n)$ for $c > 3/2$.

\subsection{Other operations} \label{sec:otherlarge}

For computing $\rmqi$ and $\RMQI$, we use a simple data structure~%
\cite{BenFar00} on the $m_r$ and $M_r$ values, later improved to require only 
$\Order(\tau)$ bits on top of the sequence of values~\cite{Sada02a,FH07}. 
The extra space is thus $\Order(n/w^c)$ bits, and it solves any query 
up to the block granularity. For solving a general query $[i,j]$ we should compare the minimum/maximum obtained with the result of running queries 
$\rmqi$ and $\RMQI$ within the blocks at the two extremes of the boundary 
$[i,j]$. 

We consider all pairs $(i,j)$ of matching parentheses ($j = \findclose(i)$)
such that $i$ and $j$ belong to different blocks. If we define a graph
whose vertices are blocks and the edges are the pairs of parentheses 
considered, the graph is outer-planar since the parenthesis pairs nest
\cite{Jacobson89}, yet there are multiple edges among nodes. To remove these,
we choose the tightest pair of parentheses for each pair of vertices. These
parentheses are called pioneers. Since they correspond to edges of a planar 
graph, the number of pioneers is $\Order(n/w^c)$. 

For computing $\child$, $\childrank$, and $\degree$, it is enough to consider
only nodes which completely include a block (otherwise the query is solved in 
constant time by considering just two adjacent blocks; we can easily identify
such nodes using {\findclose}).
Furthermore, among them, it is enough to consider pioneers: Assume $(i,i')$
contains a whole block but is not a pioneer pair of parentheses. Then there
exists a pioneer pair $(j,j')$ contained in $(i,i')$ where $j$ is in the
same block of $i$ and $j'$ is in the same block of $i'$. Thus the block 
contains no children of $(i,i')$ as all descend from $(j,j')$. Moreover, all
the children of $(i,i')$ start either in the block of $i$ or in the block of
$i'$, since $(j,j')$ or an ancestor of it is a child of $(i,i')$. So again the 
operations are solved in constant time by considering two
blocks. Such cases can be identified by doing {\findclose} on the last child
of $i$ starting in its block and seeing if that child closes in the block of 
$i'$.

Let us call {\em marked} the nodes to consider (that is, pioneers that contain
a whole block). There are $\Order(n/w^c)$ marked nodes, thus for {\degree} we
can simply store the degrees of marked nodes using $\Order(\frac{n\log n}{w^c})$
bits of space, and the others are computed in constant time as explained.

For {\child} and {\childrank}, we set up a bitmap $C[0,2n-1]$ where marked 
nodes $v$ are indicated with $C[v]=1$, and preprocess $C$ for $\rank$ queries
so that satellite information can be associated to marked nodes. Using again 
P\v{a}tra\c{s}cu's result \cite{Pat08}, vector $C$ can be represented in at 
most $\frac{2n}{w^c}\log(w^c)+ \Order(\frac{n\,t^t}{\log^t n} + n^{3/4})$ bits, 
so that access and operation {\rank} can be computed in $\Order(t)$ time.
 
We will focus on children of marked nodes placed at the blocks fully contained 
in the nodes, as the others are in at most the two extreme blocks and can be 
dealt with in constant time. Note a block is fully contained in at most one 
marked node.

For each marked node $v$ we store a list formed by the blocks fully contained
in $v$, and the marked nodes children of $v$, in left-to-right order of $P$. 
The blocks store the number of children of $v$ that start within them, and the 
children marked nodes store simply a 1 (indicating they contain 1 child of 
$v$). All also store their position inside the list. The length of all the 
sequences adds up to $\Order(n/w^c)$ because each block and marked node appears
in at most one list. Their total sum of children is at most $n$, for the same 
reason. Thus, it is easy to store all the number of children as gaps between
consecutive 1s in a bitmap, which can be stored within the same space bounds
of the other bitmaps in this section ($\Order(n)$ bits, $\Order(n/w^c)$ 1s). 

Using this bitmap, $\child$ and $\childrank$ can easily be solved using {\rank} 
and {\select}. For $\child(v,q)$ on a marked node $v$ we start using 
$p=\rank_1(C_v,\select_0(C_v,q))$ on the bitmap $C_v$ of $v$. This tells the
position in the list of blocks and marked nodes of $v$ where the $q$-th child 
of $v$ lies. If it is a marked node, then that node is the child. If instead
it is a block $v'$, then the answer corresponds to the $q'$-th minimum within
that block, where $q'=q-\rank_0(\select_1(C_v,p))$. (Recall that we first have
to see if $\child(v,q)$ lies in the block of $v$ or in that of $\findclose(v)$,
using a within-block query in those cases, and otherwise subtracting from $q$
the children that start in the block of $v$.)

For $\childrank(u)$, we can directly store the answers for marked blocks $u$.
Else, it might be that $v=\parent(u)$ starts in the same block of $u$ or that
$\findclose(v)$ is in the same block of $\findclose(u)$, in which case we
solve $\childrank(u)$ with an in-block query and the help of $\degree(v)$.
Otherwise, the block where $u$ belongs must be in the list of $v$, say at 
position $p_u$. Then the answer is $\rank_0(C_v,\select_1(C_v,p_u))$ plus the
number of minima in the block of $u$ until $u-1$.

Finally, the remaining operations require just {\rank} and {\select} on $P$,
or the virtual bit vectors $P_1$ and $P_2$. For {\rank} it is enough to store
the answers at the end of blocks, and finish the query within a single block.
For $\select_1(P,i)$ (and similarly for $\select_0$ and for $P_1$ and $P_2$), 
we make up a sequence with the accumulated number of 1s in each of the $\tau$ 
blocks. The numbers add up to $\Order(n)$ and thus can be represented as gaps 
of 0s between consecutive 1s in a bitmap $S[0,\Order(n)]$, which can be stored 
within the previous space bounds. Computing $x = \rank_1(S,\select_0(S,i))$, 
in time $\Order(t)$, lets us know we must finish the query in block $x$, using 
its range min-max tree with the local value $i'=\select_0(S,i)-\select_1(S,x)$.

\subsection{The final result}

Recall from Theorem~\ref{th:small} that we actually use blocks of size $B^c$, 
not $w^c$, for $B = \Order(\frac{w}{c\log w})$. The sum of the space for all 
the block is $2n+ \Order(n/B^c)$, plus shared universal tables that add up to
$\Order(\sqrt{2^w})$ bits. Padding the last block to size exactly $B^c$ adds
up another negligible extra space. 

On the other hand, in this section we have extended the results to larger
trees of $n$ nodes, adding time $\Order(t)$ to the operations. By properly
adjusting $w$ to $B$ in these results, the overall extra space added is
$\Order(\frac{n(c\log B+\log^2 n)}{B^c} + \frac{n\,t^t}{\log^t n} + \sqrt{2^w}
+ n^{3/4})$ bits.
Using a computer word of $w = \log n$ bits, setting $t=c$, 
and expanding $B = \Order(\frac{\log n}{c\log\log n})$, 
we get that the time for any operation is $\Order(c)$
and the total space 
simplifies to $2n + \Order(\frac{n (c\log \log n)^c}{\log^{c-2} n})$.

Construction time is $\Order(n)$.
We now analyze the working space for constructing the data structure.
We first convert the input balanced parentheses sequence $P$ into a set of
aB-trees, each of which represents a part of the input of length $B^c$.
The working space is $\Order(B^c)$ from Theorem~\ref{th:small}.
Next we compute marked nodes:  We scan $P$ from left to right, and if
$P[i]$ is an opening parenthesis, we push $i$ in a stack, and if it is closing,
we pop an entry from the stack. At this point it is very easy to spot marked
nodes. Because $P$ is nested, the values in the stack are monotone.
Therefore we can store a new value as the difference from the previous one
using unary code. Thus the values in the stack can be stored in $\Order(n)$ 
bits. Encoding and decoding the stack values takes $\Order(n)$ time in total.
Once the marked nodes are
identified, P\v{a}tra\c{s}cu's compressed representation \cite{Pat08} of bit
vector $C$ is built in $\Order(n)$ space too, as it also cuts the bitmap into 
polylog-sized aB-trees and then computes some directories over just 
$\Order(n/\polylog(n))$ values. 

The remaining data structures, such as the {\em lrm} sequences and tree, 
the lists of the marked nodes, and the $C_v$ bitmaps, are all built on
$\Order(n/B^c)$ elements, thus they need at most $\Order(n)$ bits of space
for construction.

By rewriting $c-2-\delta$ as $c$, for any constant $\delta>0$, we get our main 
result on static ordinal trees, Theorem~\ref{th:main}. 

\section{A simple data structure for dynamic trees}\label{sec:dynamic}

In this section we give a simple data structure for dynamic ordinal trees.
In addition to the previous query operations, 
we add now insertion and deletion of internal nodes and leaves.

\subsection{Memory management}
\label{sec:memory}

We store a 0,1 vector $P[0,2n-1]$ using a dynamic min-max tree. Each leaf of 
the min-max tree stores a {\em segment} of $P$ in verbatim form. The length $\ell$ 
of each segment is restricted to $L \le \ell \le 2L$ for some parameter $L>0$.

If insertions or deletions occur, the length of a segment will change.
We use a standard technique for dynamic maintenance of memory
cells~\cite{Munro86}. We regard the memory as an array of cells of length $2L$ 
each, hence allocation is easily handled in constant time.
We use $L+1$ linked lists $s_L,\ldots,s_{2L}$ where
$s_i$ stores all the segments of length $i$. All the segments with equal length
$i$ are packed consecutively, without wasting any extra space in the cells of
linked list $s_i$ (except possibly at the head cell of each list). 
Therefore a cell (of length $2L$) stores (parts of)
at most three segments, and a segment spans at most two cells.
Tree leaves store pointers to the cell and offset where its segment is stored.
If the length of a segment changes from $i$ to $j$, it is moved
from $s_i$ to $s_j$. The space generated by the removal is 
filled with the head segment in $s_i$, and the removed segment is stored
at the head of $s_j$. 

With this scheme, scanning any segment takes $\Order(L/\log n)$ time, by 
processing it by chunks of $\Theta(\log n)$ bits. This is also the time to
compute operations $\fwd$, $\bwd$, $\rmqi$, etc. on the segment, using 
universal tables. Migrating a node to another list is also done in 
$\Order(L/\log n)$ time.

If a migration of a segment occurs, pointers to
the segment from a leaf of the tree must change. For this sake
we store back-pointers from each segment to its leaf. Each cell stores also
a pointer to the next cell of its list. Finally, an
array of pointers for the heads of $s_L,\ldots,s_{2L}$ 
is necessary. 
Overall, the space for storing a 0,1 vector
of length $2n$ is $2n + \Order(\frac{n \log n}{L})$ bits.

The rest of the dynamic tree will use sublinear space, and thus we 
allocate fixed-size memory cells for the internal nodes, as they will waste
at most a constant fraction of the allocated space. 

\subsection{A dynamic tree}\label{sec:simple}

We give a simple dynamic data structure representing an ordinal
tree with $n$ nodes using $2n+\Order(n/\log n)$ bits, and supporting
all query and update operations in $\Order(\log n)$ worst-case time.

We divide the 0,1 vector $P[0,2n-1]$ into segments of length from $L$
to $2L$, for $L = \log^2 n$. We use a balanced binary tree for
representing the range min-max tree. If a node of the tree
corresponds to a vector $P[i,j]$, the node stores $i$ and $j$, as well as
$e = \sumw(P,\pi,i,j)$, $m = \rmq(P,\pi,i,j)$, $M = \RMQ(P,\pi,i,j)$, 
and $n$, the number of minimum values in $P[i,j]$ regarding $\pi$. (Data on
$\phi$ for the virtual vectors $P_1$ and $P_2$ is handled analogously.)

It is clear that $\fwd$, $\bwd$, $\rmqi$, $\RMQI$, $\rank$, $\select$,
$\degree$, $\child$ and $\childrank$ can be computed
in $\Order(\log n)$ time, by using the same algorithms developed for small
trees in Section~\ref{sec:small}. These operations cover all the
functionality of Table~\ref{tab:ops}. Note the values we store are local to
the subtree (so that they are easy to update), but global values are easily 
derived in a top-down traversal. For example, to solve $\fwd(P,\pi,i,d)$
starting at the min-max tree root $v$ with children $v_l$ and $v_r$, we first 
see if $j(v_l) \ge i$, in which case try first on $v_l$. If the answer is not 
there or $j(v_l)<i$, we try on $v_r$, changing $d$ to $d-e(v_l)$. This will only
traverse $\Order(\log n)$ nodes, as seen in Section~\ref{sec:small}. As another
example, to compute $\depth(i)$ from $v$ we first see if $j(v_l) \ge i$, in 
which case we continue at $v_l$, otherwise we continue at $v_r$ and add
$e(v_l)$ to that result.

Because each node uses $\Order(\log n)$ bits,
and the number of nodes is $\Order(n/L)$, the total space is
$2n+\Order(n/\log n)$ bits. This includes the extra
$\Order(\frac{n\log n}{L})$ term for the leaf data.
Note that we need to maintain several universal tables that handle chunks of
$\frac{1}{2}\log n$ bits. These require $\Order(\sqrt{n}\cdot
\polylog(n))$
extra bits, which is negligible.

If insertion/deletion occurs, we update a segment, and the stored
values in the leaf for the segment. From the leaf we step back to the root,
updating the values as follows:
\begin{eqnarray*}
i(v),j(v) &=& i(v_l), j(v_r) \\
e(v) &=& e(v_l)+e(v_r) \\
m(v) &=& \min(m(v_l),e(v_l)+m(v_r)) \\
M(v) &=& \max(M(v_l),e(v_l)+M(v_r)) \\
n(v) &=& n(v_l) ~\textrm{if}~m(v_l) < e(v_l)+m(v_r), \\
     & & n(v_r) ~\textrm{if}~m(v_l) > e(v_l)+m(v_r), \\
     & & n(v_l)+n(v_r) ~\textrm{otherwise.}
\end{eqnarray*}

If the length of the segment exceeds $2L$, we split it into two and add a 
new node. If, instead, the length
becomes shorter than $L$, we find the adjacent segment to the right.
If its length is $L$, we concatenate them; otherwise move the
leftmost bit of the right segment to the left one.  In this manner
we can keep the invariant that all segments have length $L$ to $2L$.
Then we update all the values in the ancestors of the modified leaves, 
as explained.  If a balancing operation occurs,
we also update the values in nodes. All these updates are carried out
in constant time per involved node, as their values are recomputed using the
formulas above. Thus the update time is also $\Order(\log n)$.

When $\lceil \log n\rceil$ changes, we must update the allowed values for $L$, 
recompute universal tables, change the width of the stored values, etc. 
M\"akinen and
Navarro \cite{MN08} have shown how to do this for a very similar case (dynamic
{\rank}/{\select} on a bitmap). Their solution of splitting the bitmap into 
three parts and moving border bits across parts to deamortize the work applies
verbatim to our sequence of parentheses, thus we can handle changes in 
$\lceil \log n\rceil$ without altering
the space nor the time complexity (except for $\Order(w)$ extra bits in the
space due to a constant number of system-wide pointers, a technicism we ignore).
We have one range min-max tree for each of the three parts and adapt all the 
algorithms in the obvious manner\footnote{One can
act as if one had a single range min-max tree where the first two levels were 
used to split the three parts (these first nodes would be special in the sense 
that their handling of insertions/deletions would reflect the actions on moving
bits between the three parts).}.

\section{A faster dynamic data structure}\label{sec:faster}

Instead of the balanced binary tree, we use a B-tree with branching factor 
$\Theta(\sqrt{\log n})$, as in previous work \cite{CHLS07}. Then 
the depth of the tree is $\Order(\log n/\log\log n)$. The lengths of segments
is $L$ to $2L$ for $L=\log^2 n/\log\log n$. The required space for the range 
min-max tree and the vector is now $2n+\Order(n \log\log n/\log n)$ bits (the
internal nodes use $\Order(\log^{3/2} n)$ bits but there are only
$\Order(\frac{n}{L\sqrt{\log n}})$ of them).
Now each leaf can be processed in time $\Order(\log n/\log\log n)$.

Each internal node $v$ of the range min-max tree has $k$ children,
for $\sqrt{\log n} \le k \le 2\sqrt{\log n}$ (we relax the constants later).
Let $c_1,c_2,\ldots,c_k$ be the children of $v$,
and $[\ell_1,r_1],\ldots,[\ell_k,r_k]$ be their corresponding subranges.
We store 
$(i)$ the children boundaries $\ell_i$, 
$(ii)$ $s_\phi[1,k]$ and $s_\psi[1,k]$ storing $s_{\phi/\psi}[i] = 
\sumw(P,\phi/\psi,\ell_1,r_i)$, 
$(iii)$ $e[1,k]$ storing $e[i] = \sumw(P,\pi,\ell_1,r_i)$, 
$(iv)$ $m[1,k]$ storing $m[i] = e[i-1]+\rmq(P,\pi,\ell_i,r_i)$, 
$M[1,k]$ storing $M[i] = e[i-1]+\RMQ(P,\pi,\ell_i,r_i)$, and
$(v)$ $n[1,k]$ storing in $n[i]$ the number of times the minimum excess within
the $i$-th child occurs within its subtree.
Note that the values stored are local to the subtree (as in the simpler 
balanced binary tree version, Section~\ref{sec:dynamic}) 
but cumulative with respect to previous siblings.
Note also that storing $s_\phi$, $s_\psi$ and $e$ is redundant, as noted in
Section~\ref{sec:small2}, but we need $s_{\phi/\psi}$ in explicit form to 
achieve constant-time searching into their values, as it will be clear soon. 

Apart from simple accesses to the stored values, we need to support the 
following operations within any node:
\begin{itemize}
\item $p(i)$: the largest $j$ such that $\ell_{j-1} \le i$ (or $j=1$).
\item $w_{\phi/\psi}(i)$: the largest $j$ such that $s_{\phi/\psi}[j-1] \le i$ 
(or $j=1$).
\item $f(i,d)$: the smallest $j\ge i$ such that $m[j] \le d \le M[j]$.
\item $b(i,d)$: the largest $j\le i$ such that $m[j] \le d \le M[j]$.
\item $r(i,j)$: the smallest $x$ such that $m[x]$ is minimum in $m[i,j]$.
\item $R(i,j)$: the smallest $x$ such that $m[x]$ is maximum in $m[i,j]$.
\item $n(i,j)$: the number of times the minimum within the subtrees
of children $i$ to $j$ occurs within that range.
\item $r(i,j,t)$: the $x$ such that the $t$-th minimum within the subtrees
of children $i$ to $j$ occurs within the $x$-th child. 
\item {\em update}: updates the data structure upon $\pm 1$ changes in
some child.
\end{itemize}

Simple operations involving {\rank} and {\select} on $P$ are carried out
easily with $\Order(\log n / \log\log n)$ applications of $p(i)$ and 
$w_{\phi/\psi}(i)$. For example $\depth(i)$ is computed, starting from the
root node, by finding the child $j=p(i)$ to descend, then recursively computing 
$\depth(i-\ell_j)$ on the $j$-th child, and finally adding $e[j-1]$ to the 
result. Handling $\phi$ for $P_1$ and $P_2$ is immediate; we omit it.

Operations $\fwd$/$\bwd$ can be carried out via $\Order(\log n /
\log\log n)$ applications of $f(i,d)/b(i,d)$. Recalling Lemma~\ref{lem:range},
the interval of interest is partitioned into 
$\Order (\sqrt{\log n} \cdot \log n / \log\log n)$ nodes of the B-tree, but 
these can be grouped into $\Order(\log n / \log\log n)$ sequences of 
consecutive siblings. Within each such sequence a single $f(i,d)/b(i,d)$ 
operation is sufficient. For example, for $\fwd(i,d)$, let us assume $d$ is
a global excess to find (i.e., start with $d \leftarrow d + \depth(i) - 1$).
We start at the root $v$ of the range min-max tree, and compute $j = p(i)$, so
the search starts at the $j$-th child, with the recursive query
$\fwd(i-\ell_j,d-e[j-1])$. If the answer is not found in that child, query
$j' = f(j+1,d)$ tells that it is within child $j'$. We then enter recursively
into the $j'$-th child of the node with $\fwd(i-\ell_{j'},d-e[j'-1])$, where the
answer is sure to be found.

Operations $\rmqi$ and $\RMQI$ are solved in very similar fashion, using 
$\Order(\log n / \log\log n)$ applications of $r(i,j)/R(i,j)$. For
example, to compute $\rmq(i,i')$ (the extension to $\rmqi$ is obvious) we 
start with $j=p(i)$ and $j'=p(i')$. If $j=j'$ we answer with 
$e[j-1]+\rmq(i-\ell_j,i'-\ell_j)$ on the $j$-th child of 
the current node. Otherwise we recursively compute 
$e[j-1]+\rmq(i-\ell_j,\ell_{j+1}-\ell_j-1)$,
$e[j'-1]+\rmq(0,i'-\ell_{j'})$ and, if $j+1<j'$, $m[r(j+1,j'-1)]$,
and return the minimum of the two or three values.

For {\degree} we partition the interval as for $\rmqi$ and then use 
$m[r(i,j)]$ in each node to identify those holding the global minimum. For 
each node holding the minimum, $n(i,j)$ gives the number of occurrences of the
minimum in the node. Thus we apply $r(i,j)$ and $n(i,j)$ 
$\Order(\log n / \log\log n)$ times. Operation {\childrank} is very similar, 
by changing the right end of the interval of interest, as before. Finally, 
solving {\child} is also similar, except that when we exceed the desired rank 
in the sum (i.e., in some node $n(i,j) \ge t$, where $t$ is the local rank of 
the child we are looking for), we find the desired min-max tree branch with 
$r(i,j,t)$, and continue on the child with $t \leftarrow t-n(i,r(i,j,t)-1)$,
using one $r(i,j,t)$ operation per level.

\subsection{Dynamic partial sums}

Let us now face the problem of implementing the basic operations. Our first
tool is a result by Raman et al., which solves several subproblems of the same
type.

\begin{lemma}[\cite{RRR01}] \label{lem:partialsums}
Under the RAM model with word size $\Theta(\log n)$, it is possible to maintain
a sequence of $\log^\epsilon n$ nonnegative integers $x_1,x_2,\ldots$ of 
$\log n$ bits each, for any constant $0\le \epsilon<1$, such that the data 
structure requires $\Order(\log^{1+\epsilon} n)$ bits and carries out the 
following operations in constant time: $sum(i) = \sum_{j=1}^i x_j$, 
$search(s) = \max \{i,~sum(i) \le s\}$, and $update(i,\delta)$, which sets 
$x_i \leftarrow x_i+\delta$, for $-\log n \le \delta \le\log n$.
The data structure also uses a precomputed universal table of size 
$\Order(n^{\epsilon'})$ bits for any fixed $\epsilon' > 0$.
The structure can be built in $\Order(\log^\epsilon n)$ time except the table.
\end{lemma}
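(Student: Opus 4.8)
The plan is to reduce the problem to a tree of \emph{constant} height and then to a single constant-time operation per node. First I would pack the $\log^\epsilon n$ numbers into an array of $\Order(\log^\epsilon n)$ machine words and build over them a balanced tree of branching factor $B=\Theta(\log^\delta n)$, choosing $\delta=\epsilon/h$ so that the height is exactly some constant $h$. Each internal node would store, for its $B$ children, the cumulative sums of the children's subtree totals. Since each of $sum$, $search$ and $update$ follows a single root-to-leaf path, every operation visits only $\Order(h)=\Order(1)$ nodes, accumulating the prefix value (for $sum$), reducing the target $s$ and selecting the child to descend into (for $search$), or propagating the increment $\delta$ (for $update$). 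Thus everything reduces to performing, at one node, three ``node-local'' primitives over its $B$ cumulative sums $S_1\le\cdots\le S_B$ in constant time: read a prefix value, find the predecessor of a target (the largest $j$ with $S_j\le s$), and add $\delta$ to a suffix. The total space is $\Order(\log^\epsilon n\cdot\log n)=\Order(\log^{1+\epsilon} n)$ bits, and the tree is built in $\Order(\log^\epsilon n)$ time.

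I would implement each node-local primitive with a single access to a precomputed \emph{universal} table, keyed by a packed encoding of the node together with the query argument. For the table to fit in $\Order(n^{\epsilon'})$ bits its key may occupy only $\Order(\epsilon'\log n)$ bits, so the real content of the proof is to encode the node's $B$ values, and operate on them, within that budget. If one can do so, then constant height times constant cost per node gives $\Order(1)$ time for all three operations, within $\Order(\log^{1+\epsilon} n)$ bits plus the shared $\Order(n^{\epsilon'})$ table.

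The main obstacle is precisely this packing: each cumulative sum has $\Theta(\log n)$ bits, so the $B$ of them occupy $\Theta(\log^{1+\epsilon} n)\gg\epsilon'\log n$ bits and do \emph{not} fit in one table key, which is why a naive lookup (and equally a word-parallel scan, which would cost $\Order(\log^\epsilon n)$) fails. I would resolve it with a fusion-tree-style sketch: keep at each node only the $\Order(B)$ bit positions at which its keys differ, compressing each key to $\Order(B)$ bits and all of them jointly to $\Order(B^2)=\Order(\log^{2\epsilon} n)=\order(\log n)$ bits, which now fit one word and index an $\Order(n^{\epsilon'})$ table that returns the correct branch up to the usual one-off fusion error, corrected by a single full-width comparison; the additive parts ($sum$'s prefix and $update$'s suffix add) are handled by guard-bit word arithmetic together with a tabled reduction of the same short encoding. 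The delicate point, which I expect to be the hardest to make fully rigorous, is maintaining the sketch under $update$: adding $\delta$ perturbs a whole suffix of the $S_j$ and can move the distinguishing bit positions, so the node must behave as a \emph{dynamic} fusion node; this is kept to $\Order(1)$ by auxiliary precomputed tables, with carry propagation bounded by the fact that $|\delta|\le\log n$. Assembling these pieces yields constant time per operation with the stated space and table sizes.
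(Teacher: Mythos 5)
First, note that the paper itself does not prove this lemma: it is imported as a black box from Raman, Raman and Rao \cite{RRR01}, so there is no in-paper argument to compare against. Judged on its own terms, your proof has a genuine gap, and it sits exactly where you suspected. The constant-height $B$-tree does not make the node-local problem any easier than the original one: at each node you must (a) add $\delta$ to a suffix of $B=\Theta(\log^{\epsilon/h}n)$ stored cumulative sums of $\Theta(\log n)$ bits each, and (b) answer a predecessor query on those sums. For (a), the sums occupy $\Theta(B\log n)=\omega(\log n)$ bits, i.e., $\omega(1)$ machine words, so neither guard-bit word parallelism nor a universal table lookup (both confined to $\Order(\log n)$-bit operands if the table is to have size $\Order(n^{\epsilon'})$) can touch all of them in constant time; an explicit suffix-add costs $\Theta(\log^{\epsilon/h}n)$ per level, which already breaks the constant bound for \emph{update}. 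For (b), the $\Order(B^2)$-bit fusion sketch does not determine its own evolution under a full-width additive perturbation of a suffix of the keys: the hypothesis $|\delta|\le\log n$ does not bound carry propagation (adding $1$ to $2^{\lceil\log n\rceil}-1$ flips every bit), so the distinguishing bit positions can move arbitrarily, and recomputing them requires reading the full keys, again $\Theta(B)$ words. Constant-time dynamic fusion nodes are a far later and far harder result than the lemma you are proving, and even they support single-key insertions rather than bulk additive shifts; asserting that ``auxiliary precomputed tables'' handle this is precisely the step that cannot be made rigorous.

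The construction in \cite{RRR01} takes a different route that avoids dynamic predecessor structures altogether, by exploiting the two quantitative hypotheses your argument never really uses: there are only $\log^\epsilon n$ elements with $\epsilon<1$, and $|\delta|\le\log n$. Hence every pending update can be encoded as a (position, increment) pair in $\Order(\log\log n)$ bits, and \emph{all} updates issued since the last rebuild fit into a packed buffer of $\Order(\log^\epsilon n\,\log\log n)=\order(\log n)$ bits. That buffer is the only dynamic state; it lives below a single word, so universal tables of size $\Order(n^{\epsilon'})$ may legitimately take it (together with the query argument) as input, and the answers to $sum$ and $search$ are obtained by combining stale, explicitly stored prefix-sum information with such lookups. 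The whole structure is rebuilt incrementally every $\Order(\log^\epsilon n)$ operations at $\Order(1)$ deamortized cost per operation, which also yields the stated $\Order(\log^\epsilon n)$ construction time. If you want to repair your write-up, replacing the dynamic fusion node by this ``stale arrays plus sub-word update buffer plus periodic incremental rebuild'' mechanism is the missing idea; the tree decomposition you start with is then unnecessary.
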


Then we can store $\ell$, $s_\phi$, and $s_\psi$ in differential form, and 
obtain their values via $sum$. The same can be done with $e$, provided we fix
the fact that it can contain negative values by storing 
$e[i]+2^{\lceil \log n\rceil}\cdot i$ (this
works for constant-time $sum$, yet not for $search$).
Operations $p$ and $w_{\phi/\psi}$ are then solved via $search$ on
$\ell$ and $s$, respectively. Moreover we can handle $\pm 1$ changes in
the subtrees in constant time as well. In addition, we can store $m[i]-e[i-1]$
and $M[i]-e[i-1]$, which depend only on the subtree, and reconstruct the
values in constant time using $sum$ on $e$, which eliminates the problem of
propagating changes in $e[i]$ to $m[i+1,k]$ and $M[i+1,k]$. Local changes to
$m[i]$ or $M[i]$ can be applied directly.

\subsection{Cartesian trees}

Our second tool is the Cartesian tree~\cite{Vui80,Sad06a}. A Cartesian tree
for an array $B[1,k]$ is a binary tree in which the root node 
stores the minimum value $B[\mu]$, and the left and the right subtrees are 
Cartesian trees for $B[1,\mu-1]$ and $B[\mu+1,k]$, respectively. If there 
exist more than one minimum value position, then $\mu$ is the leftmost.
Thus the tree shape has enough information to determine the 
position of the leftmost minimum in any range $[i,j]$. As it is a binary
tree of $k$ nodes, a Cartesian tree can be represented within $2k$ bits 
using parentheses and the bijection with general trees. It can be built in
$\Order(k)$ time.

We build Cartesian trees for $m[1,k]$ and for $M[1,k]$ (this one taking
maxima). Since $2k=\Order(\sqrt{\log n})$, universal tables let us answer in 
constant time any query of the form $r(i,j)$ and $R(i,j)$, as these depend 
only on the tree shape as explained. 
All the universal tables we will use on Cartesian trees take
$\Order(2^{\Order(\sqrt{\log n})}\cdot\polylog(n)) = o(n^\alpha)$ for any
constant $0<\alpha<1$.

We also use Cartesian trees to solve operations $f(i,d)$ and $b(i,d)$. 
However, these do not depend only on the tree shape, but on the actual values 
$m[i,k]$. We focus on $f(i,d)$ since $b(i,d)$ is symmetric. Following 
Lemma~\ref{lem:f}, we
first check whether $m[i] \le d \le M[i]$, in which case the answer is $i$. 
Otherwise, the answer is either the next $j$ such that $m[j] \le d$ (if
$d<m[i]$), or $M[j] \ge d$ (if $d>M[i]$). Let us focus on the case $d<m[i]$, 
as the other is symmetric. By Lemma~\ref{lem:lrmsearch}, the answer belongs 
to $lrm(i)$, where the sequence is $m[1,k]$.

%
%

\begin{lemma}
Let $C$ be the Cartesian tree for $m[1,k]$. Then $lrm(i)$ is the sequence of 
nodes of $C$ in the upward path from $i$ to the root, which are reached from 
the left child.
\end{lemma}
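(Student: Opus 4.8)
The plan is to reformulate both sides of the claimed identity as explicit sets of positions and prove they coincide, dealing with equal $m$-values by passing to the strict total order induced by the lexicographic keys $\kappa(p) = (m[p],\,p)$. Under this order the entries have distinct keys, and the min-Cartesian tree $C$ with its leftmost-minimum convention is precisely the ordinary (distinct-key) Cartesian tree for the $\kappa(p)$: in each recursively formed range $[a,b]$ the subtree root is the unique position minimizing $\kappa$ over $[a,b]$, which is exactly the leftmost minimum of $m$ on $[a,b]$. So I would first restate the goal as: the tail $\langle j_1,j_2,\ldots\rangle$ of $lrm(i)$ (dropping $j_0=i$) equals the set of positions $v>i$ with $\kappa(v)=\min_{[i,v]}\kappa$, listed by increasing $v$.

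Next I would record the standard interval characterization of ancestry: for $v\ne i$, $v$ is a proper ancestor of $i$ iff $\kappa(v)=\min_{[\min(i,v),\max(i,v)]}\kappa$. This is a routine induction on the recursive definition of $C$ — the range root $\mu$ realizes the argmin of the whole range, $i$ and $v$ fall into the same child subtree exactly when the argmin of the interval between them is not $\mu$, and one recurses. Specializing, $v$ is an ancestor of $i$ \emph{reached from its left child} iff $i$ lies in the left subtree of $v$, i.e. iff $v>i$ and $\kappa(v)=\min_{[i,v]}\kappa$; call these the left-minima of $i$. This identifies the right-hand side of the lemma with $\langle i\rangle$ followed by the left-minima of $i$.

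Then I would match this to $lrm(i)$. Translating its definition into keys, every position strictly between $j_r$ and $j_{r+1}$ has $m$-value $\ge m[j_r]$, hence key $>\kappa(j_r)$, while $m[j_{r+1}]<m[j_r]$ gives $\kappa(j_{r+1})<\kappa(j_r)$; thus $j_{r+1}$ is the nearest position right of $j_r$ with smaller key, and the keys strictly decrease along the sequence. A short induction then shows $\kappa(j_r)=\min_{[i,j_r]}\kappa$ for all $r\ge 1$, so each $j_r$ is a left-minimum of $i$. Conversely, every left-minimum $v>i$ is some $j_r$, because the positions at which a new running minimum of $\kappa$ is attained while scanning rightward from $i$ are produced one-for-one by the jumps of $lrm$. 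Finally, I would check the orderings agree: if $v,v'$ are left-minima with $v'$ a proper ancestor of $v$, then $v$ must lie in the left subtree of $v'$ (otherwise $v$'s subtree, lying to the right of $v'$, could not contain $i<v'$), so $i<v<v'$; hence left-minima encountered from $i$ toward the root appear in strictly increasing position, matching $j_1<j_2<\cdots$.

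The main obstacle is getting the tie-breaking exactly right. On one side $lrm$ skips equal values (they are ``$\ge$'', not ``$<$''); symmetrically, under the leftmost-minimum rule a later position with equal $m$-value is a descendant of $i$, not an ancestor, so it should not appear. The key-order device makes both phenomena the single statement $\kappa(v)>\kappa(i)$ for $v>i$ with $m[v]=m[i]$, so I would explicitly confirm the equal-value case against the definition before concluding that the two sequences are identical.
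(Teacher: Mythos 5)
Your proof is correct, and it establishes the same underlying identification as the paper --- the ancestors of $i$ reached from their left child are exactly the positions $v>i$ at which a new strict running minimum of $m$ is attained --- but it gets there by a different route. The paper argues directly on the tree: it walks the upward path from $i$, observes that ancestors reached from the right (and their left subtrees) lie to the left of position $i$, that ancestors $j$ reached from the left lie to the right of $i$ with $m[j]$ strictly smaller (strictness coming from the leftmost-minimum convention), and that the positions between consecutive such $j$ are right-descendants with values $\ge m[j]$, hence skipped by $lrm$. You instead tie-break globally with the lexicographic keys $\kappa(p)=(m[p],p)$, invoke the standard range-minimum characterization of Cartesian-tree ancestry to identify left-reached ancestors with the set $\{v>i : \kappa(v)=\min_{[i,v]}\kappa\}$, and separately show $lrm(i)$ enumerates exactly that set in increasing order. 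The paper's argument is shorter and needs no auxiliary lemma, but it is terse about why equal values cannot appear among the left-reached ancestors and why the enumeration order matches; your key device makes both the tie-breaking and the order-agreement checks explicit and mechanical, at the cost of importing the ancestry-iff-range-minimum fact (which you correctly note is itself a routine induction). Both are complete proofs of the lemma.
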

\begin{proof}
The left and right children of node $i$ contain values not smaller than $i$. 
All the nodes in the upward path are equal to or smaller than $i$. Those 
reached from the right must be at the left of position $i$, as they must be 
either to the left or to the right of all the nodes already seen, and $i$ has 
been seen. Their left children are also to the left of $i$. Ancestors $j$ 
reached from the left are strictly smaller than $i$ and, by the previous
argument, to the right of $i$, thus they belong to $lrm(i)$. Finally, the 
right descendants of those $j$ are not in $lrm(i)$ because they are 
after $j$ and equal to or larger than $m[j]$.
\end{proof}

The Cartesian tree can have precomputed $lrm(i)$ for each $i$, as this 
depends only on the tree shape, and thus are stored in universal tables. This 
is the sequence of positions in $m[1,k]$ that must be considered. We can then 
binary search this sequence, using the technique described to retrieve any 
desired $m[j]$, to compute $f(i,d)$ in $\Order(\log k)=\Order(\log\log n)$ time.

\subsection{Complete trees}

We arrange a complete binary tree on top of the $n[1,k]$ values, so that each
node of the tree records $(i)$ one leaf where the subtree minimum is
attained, and $(ii)$ the number of times the minimum arises in its subtree.
This tree is arranged in heap order and requires $\Order(\log^{3/2} n)$
bits of space.

A query $n(i,j)$ is answered essentially as in Section~\ref{sec:dynamic}: We
find the $\Order(\log k)$ nodes that cover $[i,j]$, find the minimum $m[\cdot]$
value among the leaves stored in $(i)$ for each covering node (recall we have
constant-time access to $m$), and add up the number of times (field $(ii)$)
the minimum of $m[i,j]$ occurs. This takes overall $\Order(\log k)$ time.

A query $r(i,j,t)$ is answered similarly, stopping at the node where the
left-to-right sum of the fields $(ii)$ reaches $t$, and then going down to
the leaf $x$ where $t$ is reached. Then the $t$-th occurrence of the minimum
in subtrees $i$ to $j$ occurs within the $x$-th subtree.

When an $m[i]$ or $n[i]$ value changes, we must update the upward path towards
the root of the complete tree, using the update formula for $n(v)$ given in
Section~\ref{sec:dynamic}. This is also sufficient when $e[i]$ changes: Although
this implicitly changes all the $m[i+1,k]$ values, the local subtree data 
outside the ancestors of $i$ are unaffected. Then the root $n(v)$ value will
become an $n[i']$ value at the parent of the current range min-max tree node
(just as the minimum of $m[1,k]$, maximum of $M[1,k]$, excess $e[k]$, etc.,
which can be computed in constant time as we have seen).

Since these operations take time $\Order(\log k) = \Order(\log\log n)$ time,
the time complexity of {\degree}, {\child}, and {\childrank} is 
$\Order(\log n)$. Update operations ({\ins} and {\del}) also require
$\Order(\log n)$ time, as we may need to update $n[\cdot]$ for one node per
tree level. However, as we see later, it is possible to achieve
time complexity $\Order(\log n / \log\log n)$ for {\ins} and {\del} for all
the other operations. Therefore, we might choose not to support operations
$n(i,j)$ and $r(i,j,t)$ to retain the lower update complexity. In this case,
operations {\degree}, {\child}, and {\childrank} can only be implemented
naively using {\firstchild}, {\nextsibling}, and {\parent}.

\subsection{Updating Cartesian trees}

We already solved some simple cases of update, but not yet how to maintain
Cartesian trees.
When a value $m[i]$ or $M[i]$ changes (by $\pm 1$), the Cartesian trees
might change their shape. Similarly, a $\pm 1$ change in $e[i]$ induces a change
in the effective value of $m[i+1,k]$ and $M[i+1,k]$. We store $m$ and $M$ in
a way independent of $e$, but the Cartesian trees are built upon the
actual values of $m$ and $M$. Let us focus on $m$, as $M$ is similar.
If $m[i]$ decreases by 1, we need to determine if $i$ should go higher in
the tree. We compare $i$ with its Cartesian tree parent $j = Cparent(i)$ and, 
if $(a)$ $i < j$ and $m[i]-m[j]=0$, or if $(b)$ $i>j$ and $m[i]-m[j]=-1$, we 
must carry out a rotation with $i$ and $j$. Figure~\ref{fig:mdecr} shows the
two cases. As it can be noticed, case $(b)$ may propagate the rotations
towards the new parent of $i$, as it generates a new distance $d-1$ that is
smaller than before.

In order to carry out those propagations in constant time, we store an array
$d[1,k]$, so that $d[i] = m[i]-m[Cparent(i)]$ if this is $\le k+2$, and $k+2$
otherwise. Since $d[1,k]$ requires $\Order(k\log k) = \Order(\sqrt{\log n}\,
\log\log n) = o(\log n)$ bits of space, it can be manipulated in constant time
using universal tables: With $d[1,k]$ and the current Cartesian tree as input,
a universal table can precompute the outcome of the changes in $d[\cdot]$ and
the corresponding sequence of rotations triggered by the decrease of $m[i]$
for any $i$, so we can obtain in constant time the new Cartesian tree and the 
new table $d[1,k]$. The limitation of values up to $k+2$ is necessary for the 
table fitting in a machine word, and its consequences will be discussed soon.

\begin{figure}[bt]
\centerline{\includegraphics[width=0.67\textwidth]{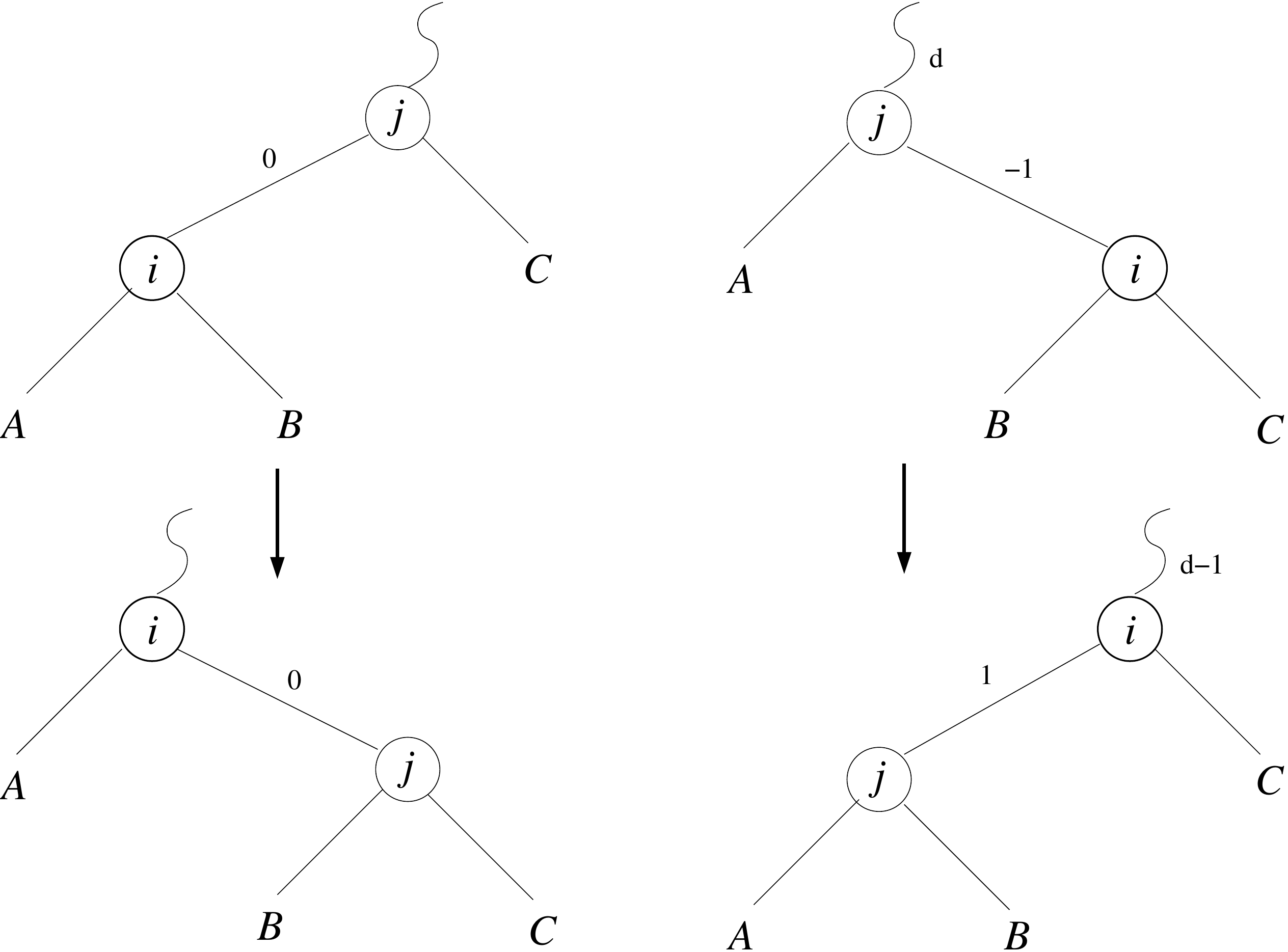}}
\caption{Rotations between $i$ and its parent $j$ when $m[i]$ decreases
by 1. The edges between any $x$ and its parent are labeled with $d[x] = m[x] -
m[Cparent(x)]$, if these change during the rotation. The $d[\cdot]$ values
have already been updated. On the left, when $i<j$, on the right, when $i>j$.}
\label{fig:mdecr}
\end{figure} 

Similarly, if $m[i]$ increases by 1, we must compare $i$ with its two children:
$(a)$ the difference with its left child cannot fall below 1 and $(b)$ the
difference with its right child cannot fall below 0. Otherwise we must carry 
out rotations as well, depicted in Figure~\ref{fig:mincr}. While it might seem 
that case $(b)$ can propagate rotations upwards (due to the $d-1$ at the
root), this is not the case because $d$ had just been increased as $m[i]$ grew
by 1. In case both $(a)$ and $(b)$ arise simultaneously, we must apply the
rotation corresponding to $(b)$ and then that of $(a)$. No further propagation
occurs. Again, universal tables can precompute all these updates.

\begin{figure}[bt]
\centerline{\includegraphics[width=0.67\textwidth]{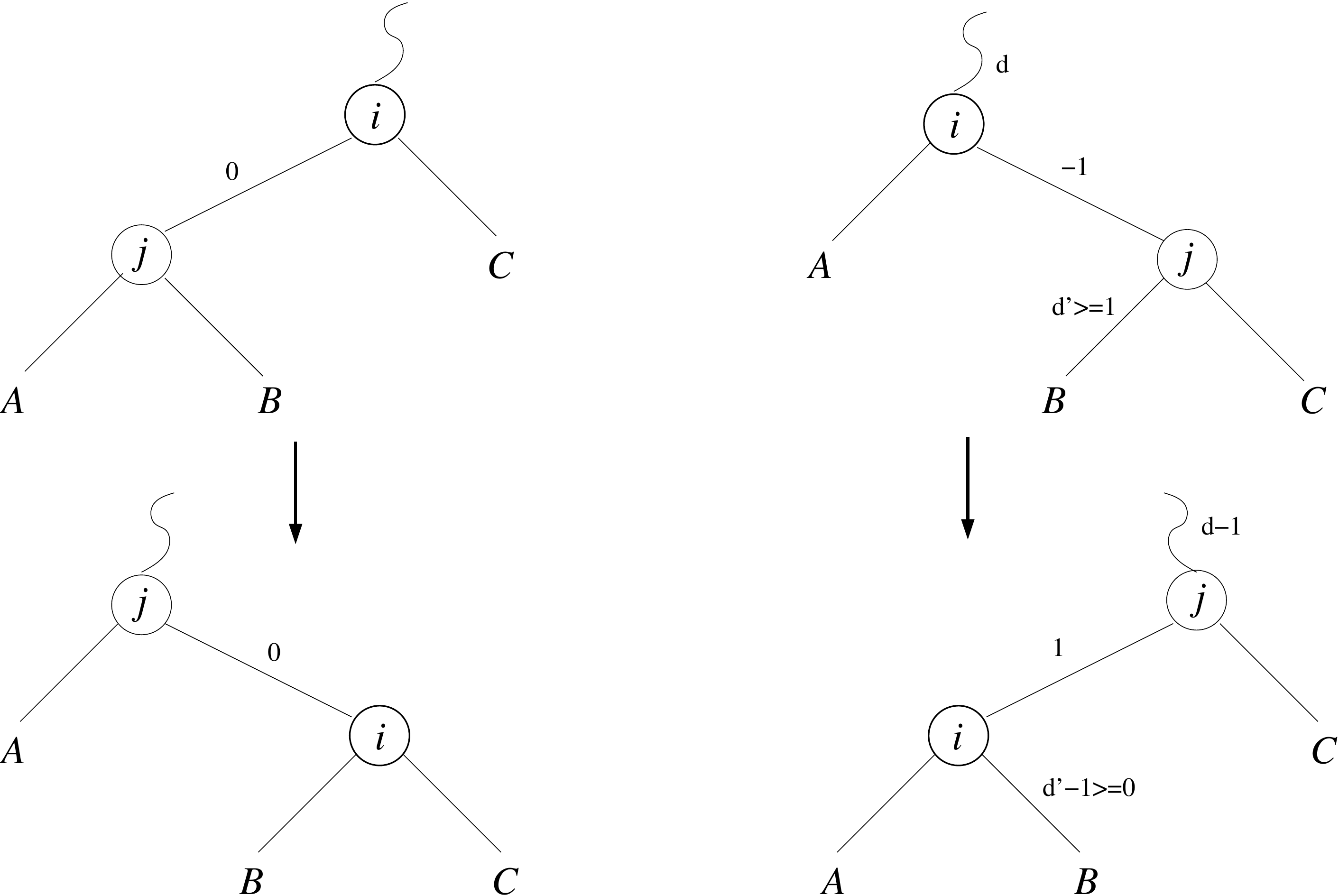}}
\caption{Rotations between $i$ and its children when $m[i]$ increases by 1. 
The edges between any $x$ and its parent are labeled with $d[x] = m[x] -
m[Cparent(x)]$, if these change during the rotation. The $d[\cdot]$ values
have already been updated. On the left (right), when the edge to the left 
(right) child becomes invalid after the change in $d[\cdot]$.}
\label{fig:mincr}
\end{figure} 

For changes in $e[i]$, the universal tables have precomputed the effect of 
carrying out all the changes in $m[i+1,k]$ , updating all the necessary 
$d[1,k]$ values and the Cartesian tree. This is equivalent to precomputing the 
effect of a sequence of $k-i$ successive changes in $m[\cdot]$.

Our array $d[1,k]$ distinguishes values between 0 and $k+2$. As the changes to
the structure of the Cartesian tree only depend on whether $d[i]$ is 0, 1, or
larger than 1, and all the updates to $d[i]$ are by $\pm 1$ per operation, we
have sufficient information in $d[\cdot]$ to correctly predict any change in 
the Cartesian tree shape for the next $k$ updates. We refresh table $d[\cdot]$ 
fast enough to ensure that no value of $d[\cdot]$ is used for more than $k$ 
updates without recomputing it, as then its imprecision could cause a flaw. 
We simply recompute cyclically the cells of $d[\cdot]$, one per update. 
That is, at the $i$-th 
update arriving at the node, we recompute the cell $i'=1+(i~\textrm{mod}~k)$, 
setting again $d[i'] = \min(k+2,m[i']-m[Cparent(i')])$; note that $Cparent(i')$
is computed from the Cartesian tree shape in constant time via table lookup. 
Note the values of $m[\cdot]$ are always up to date because we do not keep them
in explicit form but with $e[i-1]$ subtracted (and in turn $e$ is not 
maintained explicitly but via partial sums).

\subsection{Handling splits and merges}
\label{sec:splits}

In case of splits or merges of segments or internal range min-max tree nodes,
we must insert or delete children in a node. 
To maintain the range min-max tree dynamically, we use Fleischer's data
structure~\cite{Fleischer96}. This is an $(a,2b)$-tree (for $a \le 2b$)
storing $n$ numeric keys in the leaves, and each leaf is a bucket storing at 
most $2\log_a n$ keys. It supports constant-time insertion and deletion of a 
key once its location in a leaf is known.

Each leaf owns a {\em cursor}, which is a pointer to a tree node. This cursor
traverses the tree upwards, looking for nodes that should be split, moving one 
step per insertion received at the leaf. When the cursor reaches the root, the 
leaf has received at most $\log_a n$ insertions and thus it is split. Both new 
leaves are born with their cursor at their common parent.
In addition some edges must be marked. Marks are considered when splitting
nodes (see Fleischer~\cite{Fleischer96} for details).
The insertion steps are as follows:
\begin{enumerate}
\item Insert the new key into the leaf $B$. Let $v$ be the current node where
the cursor of $B$ points.
\item If $v$ has more than $b$ children, split it into $v_1$ and $v_2$, and
unmark all edges leaving from those nodes.  If the parent of $v$ has more than 
$b$ children, mark the edges to $v_1$ and $v_2$.
\item If $v$ is not the root, set the cursor to the parent of $v$. Otherwise,
split $B$ into two halves, and let the cursor of both new buckets point to 
their common parent.
\end{enumerate}

To apply this to our data structure, let $a=\sqrt{\log n}, b = 2\sqrt{\log
n}$.
Then the height of the tree is $\Order(\log n/\log \log n)$, and each leaf
should store $\Theta(\log n/\log \log n)$ keys. Instead our structure stores
$\Theta(\log^2 n/\log \log n)$ bits in each leaf. If Fleischer's structure
handles $\Order(\log n)$-bit numbers, it turns out that the leaf size is the
same in both cases. The difference is that our insertions are bitwise, whereas
Fleischer's insertions are number-wise (that is, in packets of $\Order(\log
n)$
bits). Therefore we can use their same structure, yet the cursor will return
to the leaf $\Theta(\log n)$ times more frequently than necessary. Thus we
only split a leaf when the cursor returns to it {\em and} it has actually
exceeded size $2L$. This means leaves can actually reach size $L' = 2L +
\Order(\log n / \log\log n) = 2L(1+\Order(1/\log n))$, which is not a problem.
Marking and unmarking of children edges is easily handled in constant time by
storing a bit-vector of length $2b$ in each node.

Fleischer's update time is constant. Ours is $\Order(\sqrt{\log n})$ because,
if we split a node into two, we fully reconstruct all the values in those two 
nodes and their parent. This can be done in $\Order(k) = \Order(\sqrt{\log n})$
time, as the structure of Lemma~\ref{lem:partialsums}, the Cartesian trees, 
and the complete trees can be built in linear time. Nevertheless this time is 
dominated by the $\Order(\log n / \log\log n)$ cost of inserting a bit at the 
leaf.

Deletion of children of internal nodes may make the node arity fall below $a$.
This is handled as in Fleischer's structure, by deamortized global rebuilding.
This increases only the sublinear size of the range min-max tree; the leaves
are not affected. As a consequence, our tree arities are in the range
$1 \le k \le 4\sqrt{\log n}$.

Deletions at leaves, however, are handled as before, ensuring that they have
always between $L$ and $L'$ bits. This may cause some abrupt growth in their
length. The most extreme case arises when merging an underflowing leaf of $L-1$
bits with its sibling of length $L$. In this case the result is of size $2L-1$,
close to overflowing, it cannot be split, and the cursor may be far from the
root. This is not a problem, however, as we have still sufficient time before
the merged leaf reaches size $L'$.


\subsection{The final result}

We have obtained the following result.

\begin{lemma}
For a 0,1 vector of length $2n$, there exists a data structure
using $2n+\Order(n \log\log n/\log n)$ bits supporting
$\fwd$ and $\bwd$ in $\Order(\log n)$ time, 
and updates and all other queries except {\degree}, {\child}, and {\childrank},
in $\Order(\log n/\log\log n)$ time.
Alternatively, {\degree}, {\child}, {\childrank}, and updates can be handled
in $\Order(\log n)$ time.
\end{lemma}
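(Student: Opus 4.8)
The plan is to assemble the three node-level gadgets built in this section into one B-tree and then bound each operation by summing per-node costs over the $\Order(\log n/\log\log n)$ levels. The container is the $(a,2b)$-tree with $a=\sqrt{\log n}$, $b=2\sqrt{\log n}$ and leaf segments of length $L$ to $2L$ for $L=\log^2 n/\log\log n$; the space accounting already done yields $2n+\Order(n\log\log n/\log n)$ bits. The key observation, already established, is that each primitive on the 0,1 vector — $\sumw$, $\rank$, $\select$, $\rmqi$, $\RMQI$, $\fwd$, $\bwd$, $\degree$, $\child$, $\childrank$, which together cover the functionality of Table~\ref{tab:ops} — reduces to $\Order(\log n/\log\log n)$ groups of consecutive-sibling operations, one group per level (by Lemma~\ref{lem:range} and the grouping of the $\Order(\sqrt{\log n}\cdot\log n/\log\log n)$ touched nodes into $\Order(\log n/\log\log n)$ sibling runs). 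Thus each primitive costs the number of levels times the worst per-node operation it invokes.

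First I would record the per-node costs from the supporting structures. The dynamic partial-sum structure of Lemma~\ref{lem:partialsums} serves $p(i)$ and $w_{\phi/\psi}(i)$, and absorbs $\pm1$ changes, in $\Order(1)$ time. The Cartesian trees on $m[1,k]$ and $M[1,k]$ serve $r(i,j)$ and $R(i,j)$ in $\Order(1)$ time (these depend only on the tree shape), serve $f(i,d)$ and $b(i,d)$ in $\Order(\log k)=\Order(\log\log n)$ time by binary searching the precomputed $lrm$ sequence, and are re-shaped under any $\pm1$ change of $m$, $M$, or $e$ in $\Order(1)$ time using the universal tables over $d[1,k]$. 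The complete trees over $n[1,k]$ serve $n(i,j)$ and $r(i,j,t)$, and are maintained, in $\Order(\log k)=\Order(\log\log n)$ time.

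The query bounds then follow by multiplication. Operations $\sumw$, $\rank$, $\select$ use only $p$ and $w_{\phi/\psi}$, and $\rmqi$/$\RMQI$ use only $r$/$R$, all $\Order(1)$ per level, so these run in $\Order(\log n/\log\log n)$. The searches $\fwd$ and $\bwd$ invoke $f$ or $b$ once per level at $\Order(\log\log n)$, which multiplies to $\Order(\log n)$. For updates, rewriting a leaf segment and recomputing its stored values costs $\Order(L/\log n)=\Order(\log n/\log\log n)$; propagating the induced changes upward costs $\Order(1)$ per ancestor for the partial sums and Cartesian trees, and splits/merges follow Fleischer's scheme with node reconstruction in $\Order(\sqrt{\log n})=o(\log n/\log\log n)$ time, dominated by the leaf work. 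Hence, when the complete trees are not maintained, updates stay within $\Order(\log n/\log\log n)$.

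The only place the bound genuinely rises — and the point I would treat most carefully — is the complete trees over $n[1,k]$, whose maintenance costs $\Order(\log\log n)$ per affected range-min-max-tree level and therefore $\Order(\log n)$ per update. This gives exactly the stated dichotomy: keeping the complete trees makes $\degree$, $\child$, and $\childrank$ run in $\Order(\log n)$ (their $n(i,j)$ and $r(i,j,t)$ calls cost $\Order(\log\log n)$ per level) at the price of $\Order(\log n)$ updates; dropping them restores $\Order(\log n/\log\log n)$ updates, with $\degree$, $\child$, $\childrank$ then obtained naively from $\firstchild$, $\nextsibling$, and $\parent$. Finally, the deamortized handling of changes in $\lceil\log n\rceil$ carries over verbatim from Section~\ref{sec:dynamic}, so all bounds persist as $n$ varies, completing the two variants of the statement.
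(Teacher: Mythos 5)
Your proposal is correct and follows essentially the same route as the paper: the lemma is stated there as the summary of the whole section's construction (B-tree of arity $\Theta(\sqrt{\log n})$, partial sums for $p/w$, Cartesian trees for $r/R$ and the $\Order(\log\log n)$-time $f/b$, complete trees for $n(i,j)/r(i,j,t)$, Fleischer's scheme for splits), and your per-level cost accounting, including the $\Order(\log n)$ bound for $\fwd/\bwd$ and the dichotomy on whether the complete trees over $n[1,k]$ are maintained, matches the paper's exactly.
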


The complexity of {\fwd} and {\bwd} is not completely satisfactory, as we
have reduced many operators to those. To achieve better complexities, we note 
that most operators that reduce to {\fwd} and {\bwd} actually reduce to the 
less general
operations {\findclose}, {\findopen}, and {\enclose} on parentheses. Those 
three operations can be supported in time $\Order(\log n / \log\log n)$ by
adapting the technique of Chan et al.~\cite{CHLS07}. They use a tree of similar
layout as ours: leaves storing $\Theta(\log^2 n / \log\log n)$ parentheses and
internal nodes of arity $k = \Theta(\sqrt{\log n})$, where 
Lemma~\ref{lem:partialsums} is used to store seven arrays of numbers 
recording information on matched and unmatched parentheses on the children.
Those are updated in constant time upon parenthesis insertions and deletions,
and are sufficient to support the three operations. They report $\Order(n)$
bits of space because they do not use a mechanism like the one we describe
in Section~\ref{sec:memory} for the leaves; otherwise their space would be $2n +
\Order(n\log\log n / \log n)$ as well. Note, on the other hand, that they do 
not achieve the times we offer for $\lca$ and related operations.

\nop{
 for a small constant $d$. We show now that those particular
cases can be made more efficient.

\begin{lemma}
For a 0,1 vector $P$, $\fwd(P,\pi,i,d)$ and $\bwd(P,\pi,i,d)$ can be computed 
in $\Order(d \log n/\log\log n)$ time.
\end{lemma}
\begin{proof}
\margin{Seems not to work for $d=0$.}
Let $s = E[i]$ and $t = s+d$. To compute $\fwd(i,d)$ for $t \le s$ (i.e., 
$d \le 0$, the other case is analogous), we will make up to $1-d$ passes of 
cost $\Order(\log n / \log\log n)$, ensuring that $d$ increases at each pass.
Each pass consists in covering the range $[i,2n-1]$ with 
$\Order(\log n / \log\log n)$ range min-max tree nodes. Using $r(\cdot,\cdot)$ 
on each node we can find the leftmost node $v$ with minimum value $m \le t$. 
Now we traverse down the range min-max tree from $v$ to its leftmost leaf $l_m$
with excess $m$, using $r(1,k)$ on each node in the downward path. If $m=t$ the
answer is $l_m$. Otherwise we know that the answer is in the range $[i,l_m]$. 
We cover that range with $\Order(\log n/\log\log n)$ nodes, just as for solving
$\rmqi(i,l_m)$, but instead of looking for the range minimum, we find the 
leftmost covering node $u'$ with minimum value $m' < s$ (which must exist since $l_m < t \le s$ is in the range). This can
be done sequentially over the covering nodes using $r(\cdot,\cdot)$. Once $u'$
is found, we obtain its leftmost leaf $l_{m'}$ with excess $m'<s$. The next
pass will use $i'=l_{m'}$ and $d' = t-m'>t-s=d$. Solving $\bwd(i,d)$ is 
analogous.
\end{proof}
}

This completes the main result of this section, Theorem~\ref{th:dyn}.

\subsection{Updating whole subtrees}
\label{sec:subtrees}

We face now the problem of attaching and detaching whole subtrees. Now we
assume $\log n$ is fixed to some sufficiently large value, for example $\log n
= w$, the width of the systemwide pointers. Hence, no matter the size of the
trees, they use segments of the same length, and the times are a function of $w$
and not of the actual tree size.

Now we cannot use Fleischer's data structure~\cite{Fleischer96}, because a
detached subtree could have dangling cursors pointing to the larger tree it 
belonged. As a result, the time complexity for insert or delete changes to 
$\Order(\log^{3/2} n / \log\log n)$.
To improve it, we change the degree of nodes in the range
min-max tree from $\Theta(\sqrt{\log n})$ to $\Theta(\log^\epsilon n)$ for
any fixed constant $\epsilon > 0$. This makes the complexity of $\ins$ and
$\del$ $\Order(\frac{1}{\epsilon}\log^{1+\epsilon} n/\log\log n) =
\Order(\log^{1+\epsilon} n)$, and multiplies all 
query time complexities by the constant $\Order(1/\epsilon)$.

First we consider attaching a tree $T_1$ to another tree $T_2$, that is,
$T_1$ becomes a subtree rooted at a node $v$ of $T_2$.  Here $v$ can be
either an internal node or a leaf.  Let $P_1[0,2n_1-1]$ and $P_2[0,2n_2-1]$
be the BP sequence of $T_1$ and $T_2$, respectively.  Then this attaching
operation corresponds to creating a new BP sequence
$P' = P_2[0,p] P_1[0,2n_1-1] P_2[p+1,2n_2-1]$ where $p$ and $p+1$ are positions
of parentheses for siblings of the root of $T_1$ in the new tree if $v$
is an internal node, or $p$ and $p+1$ are the positions for $v$ if $v$ is a leaf.

If $p$ and $p+1$ belong to the same segment, we cut the segment into two, say
$P_l = P[l,p]$ and $P_r = P[p+1,r]$.  If the length of $P_l$ ($P_r$)
is less than $L$, we concatenate it to the left (right) segment of it.
If its length exceeds $2L$, we split it into two.
We also update the upward paths from $P_l$ and $P_r$ to the root of the
range min-max tree for $T_2$ to reflect the changes done at the leaves.

Now we merge the range min-max trees for $T_1$ and $T_2$ as follows.
Let $h_1$ be the height of the range min-max tree of $T_1$,
and $h_2$ be the height of the {\lca}, say $v$, between $P_l$ and $P_r$
in the range min-max tree of $T_2$.
If $h_2>h_1$ then can simply concatenate the root of $T_1$ at the right of
the ancestor of $P_l$ of height $h_1$, then split the node if it has
overflowed, and finish.

If $h_2 \le h_1$, we divide $v$ into $v_l$ and $v_r$, so that the rightmost
child of $v_l$ is an ancestor of $P_l$ and the leftmost child of $v_r$ is
an ancestor of $P_r$. We do not yet care about $v_l$ or $v_r$ being too
small. We repeat the process on the parent of $v$ until reaching the height
$h_2=h_1+1$. Let us call $u$ the ancestor where this height is reached (we 
leave for later the case where we split the root of $T_2$ without reaching 
the height $h_1+1$).

Now we add $T_1$ as a child of $u$, between the child ancestor of $P_l$ and
that ancestor of $P_r$. All the leaves have the same depth, but the
ancestors of $P_l$ and of $P_r$ at heights $h_2$ to $h_1$ might be underfull
as we have cut them arbitrarily. We glue the ancestor of height $h$ of $P_l$ 
with the leftmost node of height $h$ of $T_1$, and that of $P_r$ with the 
rightmost node of $T_1$, for all $h_2 \le h \le h1$. Now there are no 
underfull nodes, but they can have overflowed. We verify the node sizes in
both paths, from height $h=h_2$ to $h_1+1$, splitting them as necessary. At
height $h_2$ the node can be split into two, adding another child to its
parent, which can thus be split into three, adding in turn two children to its
parent, but from there on nodes can only be split into three and add two more
children to their parent. Hence the overall process of fixing arities takes
time $\Order(\frac{1}{\epsilon}\log^{1+\epsilon} n/\log\log n)$.

If node $u$ does not exist, then $T_1$ is not shorter than $T_2$. In this case 
we have divided $T_2$ into a left and right part. Let $h_2$ be the height of
$T_2$. We attach the left part of $T_2$ to the 
leftmost node of height $h_2$ in $T_1$, and the right part of $T_2$ to the
rightmost node of height $h_2$ in $T_1$. Then we fix arities in both paths
analogously as before.

Detaching is analogous as well. After splitting the leftmost and rightmost
leaves of the area to be detached, let $P_l$ and $P_r$ the leaves of $T$ 
preceding and following the leaves that will be detached. We split the
ancestors of $P_l$ and $P_r$ until reaching their $\lca$, let it be $v$.
Then we can form a new tree with the detached part and remove it from the
original tree $T$. Again, the paths from $P_l$ and $P_r$ to $v$ may contain
underfull nodes. But now $P_l$ and $P_r$ are consecutive leaves, so we can
merge their ancestor paths up to $v$ and then split as necessary. 

Similarly, the leftmost and rightmost path of the detached tree may contain
underfull nodes. We merge each node of the leftmost (rightmost) path with its
right (left) sibling, and then split if necessary. The root may contain as
few as two children. Overall the process takes
$\Order(\log^{1+\epsilon} n)$ time.

\section{Improving dynamic compressed sequences}
\label{sec:sequences}

The techniques we have developed along the paper are of independent interest.
We illustrate this point by improving the best current results on sequences of 
numbers with $sum$ and $search$ operations, dynamic compressed bitmaps, and 
their many byproducts.

\subsection{Codes, Numbers, and Partial Sums}

We prove now Lemma~\ref{lem:dynpartialsums} on sequences of codes and
partial sums, this way improving previous results by M\"akinen and Navarro 
\cite{MN08} and matching lower bounds \cite{PD06}.

Section~\ref{sec:faster} shows how to maintain a dynamic bitmap $P$
supporting various operations in time $\Order(\log n / \log\log n)$, 
including insertion and deletion of bits (parentheses in $P$). This bitmap $P$
will now be the concatenation of the (possibly variable-length) codes 
$x_i$. We will ensure that each leaf contains a sequence of whole codes 
(no code is split at a leaf boundary). As these are of $\Order(\log n)$ bits, 
we only need to slightly adjust the lower limit $L$ to enforce this: After 
splitting a leaf of length $2L$, one of the two new leaves might be of size 
$L-\Order(\log n)$.

We process a leaf by chunks of $b = \frac{1}{2}\log n$ bits: A universal table 
(easily computable in $\Order(\sqrt{n}\,\polylog(n))$ time and space)
can tell us how many whole codes are there in the next $b$ bits, how much 
their $f(\cdot)$ values add up to, and where the last complete code ends (assuming we 
start reading at a code boundary). Note that the first code could be 
longer than $b$, in which case the table lets us advance zero positions. In 
this case we decode the next code directly. Thus in constant time (at most two 
table accesses plus one direct decoding) we advance in the traversal by at 
least $b$ bits. If we surpass the desired position with the table we reprocess 
the last $\Order(\log n)$ codes using a second table that advances by chunks
of $\Order(\sqrt{\log n})$ bits, and finally process the last
$\Order(\sqrt{\log n})$ codes directly. Thus in time $\Order(\log n /
\log\log n)$ we can access a given code in a leaf (and subsequent ones in 
constant time each), sum the $f(\cdot)$ values up to some position, and find the 
position where a given sum $s$ is exceeded. 
We can also easily modify a code or insert/delete codes, by shifting all
the other codes of the leaf in time $\Order(\log n / \log\log n)$.

In internal nodes of the range min-max tree we will use the structure of 
Lemma~\ref{lem:partialsums} to maintain the number of codes stored below the 
subtree of 
each child of the node. This allows determining in constant time the child to 
follow when looking for any code $x_i$, thus access to any codes $x_i \ldots
x_j$ is supported in time $\Order(\log n / \log\log n + j-i)$. 

When a code is inserted/deleted at a leaf, we must increment/decrement 
the number of codes in the subtree of the ancestors up to the root; this is
supported in constant time by Lemma~\ref{lem:partialsums}. Splits and merges
can be caused by indels and by updates. They force the recomputation of their
whole parent node, and Fleischer's technique is used to ensure a constant
number of splits/merges per update. Note we are inserting not individual bits
but whole codes of $\Order(\log n)$ bits. This can easily be done, but now
$\Order(\log n / \log\log n)$ insertions/updates can double the size of a
leaf, and thus we must consider splitting the leaf every time the cursor
returns to it (as in the original Fleischer's proposal, not every $\log n$ 
times as when inserting parentheses), and we
must advance the cursor upon insertions {\em and} updates.

Also, we must allow leaves of sizes between $L$ and $L'=3L$ (but still split
them as soon as they exceed $2L$ bits). In this way, after a merge produces a
leaf of size $2L-1$, we still have time to carry out $L = \Order(\log n /
\log\log n)$ further insertions before the cursor reaches the root and splits
the leaf. (Recall that if the merge produces a leaf larger than $2L$ we can
immediately split it, so $2L$ is the worst case we must handle.)

For supporting $sum$ and $search$ we also maintain at each node the sum of the
$f(\cdot)$ values of the codes stored in the subtree of each child. Then we can
determine in constant time the child to follow for $search$, and the sum of
previous subtrees for $sum$. However, insertions, deletions and updates must 
alter the upward sums only by $\Order(\log n)$ so that the change can be 
supported by Lemma~\ref{lem:partialsums} within the internal nodes in constant 
time.

\subsection{Dynamic bitmaps}
\label{sec:dynbitmaps}

Apart from its general interest, handling a dynamic bitmap in compressed 
form is useful for
maintaining satellite data for a sample of the tree nodes. A dynamic bitmap 
$B$ could mark which nodes are sampled, so if the sampling is sparse enough we 
would like $B$ to be compressed. A \rank\ on this bitmap would give the 
position in a dynamic array where the satellite information 
for the sampled nodes would be stored. This bitmap would be accessed by 
preorder (\preorderrank) on the dynamic tree. That is, node $v$ is sampled iff 
$B[\preorderrank(v)]=1$, and if so, its data is at position 
$\rank_1(B,\preorderrank(v))$ in the dynamic array of satellite data.
When a tree node is inserted or deleted, we need to insert/delete its 
corresponding bit in $B$.

In the following we prove the next lemma, which improves and indeed simplifies
previous results \cite{CHLS07,MN08}; then we explore several byproducts.%
\footnote{Very recently, He and Munro \cite{HM10} achieved results similar to
Lemma~\ref{lem:bitmap}, Theorem~\ref{thm:seqs} and Theorem~\ref{thm:fmindex}
(excluding attachment and detachment of sequences), independently and with a
different technique. The results differ in the space redundancy, which
in their case is $\Order(1/\sqrt{\log n})$ times the sequence, versus our
$\Order(\log\log n / \log n)$.}

\begin{lemma} \label{lem:bitmap}
We can store any bitmap $B[0,n-1]$ within $nH_0(B)+\Order(n\log\log n/\log n)$
bits of space, while supporting the operations $\rank$, $\select$, $\ins$,
and $\del$, all in time $\Order(\log n / \log\log n)$. We can also support
attachment and detachment of contiguous bitmaps within time 
$\Order(\log^{1+\epsilon} n)$ for any constant $\epsilon>0$, yet now $\log n$
is a maximum fixed value across all the operations.
\end{lemma}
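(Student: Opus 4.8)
The plan is to reduce this to the dynamic code-sequence machinery of Lemma~\ref{lem:dynpartialsums}. I would cut $B$ into blocks of $\Theta(\log n)$ consecutive bits, keeping each block length in a range $[b,2b]$ for a target $b=\Theta(\log n)$ whose constant is chosen small enough that the encoding/decoding tables below occupy only $\Order(\sqrt n\,\polylog n)$ bits. Each block $B_i$, of length $\ell_i$ with $c_i$ ones, is represented by a self-delimiting code $x_i$ whose fixed-length prefix of $\Order(\log\log n)$ bits records $\ell_i$ and $c_i$, followed by the rank (offset) of $B_i$ among the ${\ell_i \choose c_i}$ bit-strings of length $\ell_i$ having $c_i$ ones, written in $\lceil\log{\ell_i\choose c_i}\rceil$ bits. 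From the prefix one table gives $|x_i|$ (so the code is self-delimiting and its boundary is found in constant time) and a second maps the offset back to the $\ell_i$ raw bits, both in constant time; this matches the hypotheses of Lemma~\ref{lem:dynpartialsums}.

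For the space bound I would invoke the standard Vandermonde-type inequality $\prod_i {\ell_i \choose c_i} \le {n \choose m}$, where $m$ is the total number of ones (valid even with variable $\ell_i$, since $\sum\ell_i=n$ and $\sum c_i=m$). Together with the entropy bound ${n\choose m}\le 2^{nH_0(B)}$ this gives $\sum_i \log{\ell_i\choose c_i} \le nH_0(B)$, while the $\Order(\log\log n)$-bit prefixes over $\Order(n/\log n)$ blocks add $\Order(n\log\log n/\log n)$ bits and the ceilings add $\Order(n/\log n)$. Hence the total code length is $\Lambda = \sum_i|x_i| = nH_0(B) + \Order(n\log\log n/\log n)$. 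Lemma~\ref{lem:dynpartialsums} stores the codes in $\Lambda(1+o(1))$ bits, where the $o(1)$ is the $\Order(\log\log n/\log n)$ leaf-layout redundancy of Section~\ref{sec:memory}; since $H_0(B)\le 1$ forces $\Lambda=\Order(n)$, the multiplicative slack $\Order(\Lambda\log\log n/\log n)$ is itself $\Order(n\log\log n/\log n)$, so the total stays $nH_0(B)+\Order(n\log\log n/\log n)$.

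For the queries I would equip each internal range min-max tree node with two partial-sum structures of Lemma~\ref{lem:partialsums}, one over the logical lengths $\ell_i$ and one over the popcounts $c_i$ (both $\Order(\log n)$-valued, so updates and indels move them within the allowed range), extending the single-$f$ framework exactly as Section~\ref{sec:faster} does for $s_\phi,s_\psi,e$. Then $\rank_1(B,i)$ descends via $search$ on the $\ell$-sums to the block holding position $i$, reads the accumulated $c$-sum of the preceding blocks, and finishes by decoding that one block and counting within it by table lookup; $\select_1$ is symmetric using $search$ on the $c$-sums (adding back the $\ell$-sum of preceding blocks), and $\rank_0,\select_0$ follow from $\ell_i-c_i$. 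Insertion or deletion of a bit touches one block, which is decoded, modified, and re-encoded as a code update/indel in Lemma~\ref{lem:dynpartialsums}; if the block leaves $[b,2b]$ it is split or merged with a neighbor, i.e.\ $\Order(1)$ code indels. All of this is $\Order(\log n/\log\log n)$.

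Finally, attachment and detachment of contiguous bitmaps reduce to attaching or detaching a contiguous range of whole codes, so after splitting the (at most two) boundary blocks to align the cut to a block boundary, I would invoke the subtree attach/detach procedure of Section~\ref{sec:subtrees} on the underlying tree, in $\Order(\log^{1+\epsilon} n)$ time; this is precisely why $\log n$ must be frozen to a system-wide maximum here, so all trees share the same segment length and table parameters. The main obstacle I anticipate is not any single operation but the simultaneous space accounting: one must check that the per-block prefix overhead stays in the lower-order term, that the combinatorial bound delivers $nH_0(B)$ with no constant blow-up, and — most delicately — that the $(1+o(1))$ multiplicative redundancy of Lemma~\ref{lem:dynpartialsums} is absorbed rather than inflating an already $\Theta(n)$-sized entropy term, which succeeds only because $nH_0(B)\le n$.
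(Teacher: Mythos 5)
Your proposal is correct and follows essentially the same route as the paper's proof: variable-length Raman-et-al.\ style $(b_i,c_i,o_i)$ chunks stored as self-delimiting codes via Lemma~\ref{lem:dynpartialsums}, the Vandermonde/entropy bound for the space, $f_b$/$f_c$ partial sums for $\rank$/$\select$, and the Section~\ref{sec:subtrees} machinery for attach/detach. The only (immaterial) difference is the block-length discipline --- you keep lengths in $[b,2b]$, whereas the paper allows $1\le b_i\le b$ with the invariant $b_i+b_{i+1}>b$ --- and both equally avoid the chunk-realignment problem that the paper identifies as the obstacle to using fixed-length chunks.
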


To achieve zero-order entropy space, we use Raman et al.'s 
$(c,o)$ encoding \cite{RRR02}: The bits are grouped into small chunks of 
$b=\frac{\log n}{2}$ bits, and each chunk is represented by two components: 
the {\em class} $c_i$, which is the number of bits set, and the {\em offset}
$o_i$, which is an identifier of that chunk within those of the same class. 
Raman et al.\ show that, while the $|c_i|$ lengths add up to 
$\Order(n\log\log n/\log n)$ extra bits, the $|o_i| = \left\lceil \log {b
\choose c_i} \right\rceil$ components add up to $nH_0(B)+\Order(n/\log n)$ bits.

We plan to store whole chunks in leaves of the range min-max tree. A problem is
that the insertion or even deletion of a single bit in Raman et al.'s 
representation can up to double the size of the compressed representation of 
the segment, because it can change all the alignments. This occurs for example 
when moving from $0^b~1^b~0^b~1^b\ldots $ to $10^{b-1}~01^{b-1}~10^{b-1}~01^{b-1} 
\ldots$, where we switch from all $c_i=0/b$ and $|o_i|=0$, to all
$c_i=1/b-1$, and $|o_i|=\lceil \log b \rceil$. This problem can be dealt
with (laboriously) on binary trees \cite{MN08,GN08}, but not on our $k$-ary 
tree, because Fleischer's scheme does not allow leaves being partitioned often
enough.

We propose a different solution that ensures that an insertion cannot make the
leaf's physical size grow by more than $\Order(\log n)$ bits. Instead of
using the same $b$ value for all the chunks, we allow any $1 \le b_i \le b$.
Thus each chunk is represented by a triple $(b_i,c_i,o_i)$, where $o_i$ is
the offset of this chunk among those of length $b_i$ having $c_i$ bits set.
To ensure $\Order(n\log\log n/\log n)$ space overhead over the entropy, we
state the invariant that any two consecutive chunks $i$, $i+1$ must satisfy
$b_i + b_{i+1} > b$. Thus there are $\Order(n/b)$ chunks and the overhead of
the $b_i$ and $c_i$ components, representing each with $\lceil \log(b+1)
\rceil$ bits, is
$\Order(n\log b/b)$. It is also easy to see that the inequality \cite{Pag01}
$\sum |o_i| = \sum \lceil \log {b_i \choose c_i} \rceil = 
\log \Pi {b_i \choose c_i} + \Order(n/\log n) \le
\log {n \choose m} + \Order(n/\log n) = nH_0(B) + \Order(n/\log n)$ holds,
where $m$ is the number of 1s in the bitmap.

To maintain the invariant, the insertion of a bit us processed as follows.
We first identify the chunk $(b_i,c_i,o_i)$ where the bit must be 
inserted, and compute its new description $(b_i',c_i',o_i')$. If $b_i' >
b$, we split the chunk into two, $(b_l,c_l,o_l)$ and $(b_r,c_r,o_r)$, 
for $b_l,b_r = b_i'/2 \pm 1$. Now we check left and 
right neighbors $(b_{i-1},c_{i-1},o_{i-1})$ and $(b_{i+1},c_{i+1},o_{i+1})$
to ensure the invariant on consecutive chunks holds. If $b_{i-1}+b_l \le b$
we merge these two chunks, and if $b_r+b_{i+1} \le b$ we merge these
two as well. Merging is done in constant time by obtaining the plain bitmaps, 
concatenating them, and reencoding them, using universal tables (which we must
have for all $1 \le b_i \le b$). Deletion of a bit is analogous; we remove the
bit and then consider the conditions $b_{i-1} + b_i' \le b$ and
$b_i'+b_{i+1} \le b$. It is easy to see that no insertion/deletion can increase
the encoding by more than $\Order(\log n)$ bits.

Now let us consider {\em codes} $x_i = (b_i,c_i,o_i)$. These are clearly
constant-time self-delimiting and $|x_i| = \Order(\log n)$, so we can
directly use Lemma~\ref{lem:dynpartialsums} to store them in a range min-max
tree within $n'+\Order(n'\log\log n' / \log n')$ bits, where
$n' = nH_0(B)+\Order(n\log\log n/\log n)$ is the size of our compressed 
representation. Since $n' \le n + \Order(n\log\log n/\log n)$, we have
$\Order(n'\log\log n'/\log n') = \Order(n\log\log n/ \log n)$ and
the overall space is as promised in the lemma. We must only take care of
checking the invariant on consecutive chunks when merging leaves, which takes
constant time.

Now we use the $sum$/$search$ capabilities of Lemma~\ref{lem:dynpartialsums}.
Let $f_b(b_i,c_i,o_i) = b_i$ and $f_c(b_i,c_i,o_i) = c_i$. As both are always 
$\Order(\log n)$, we can have $sum$/$search$ support on them. With $search$ on 
$f_b$ we can reach the code containing the $j$th bit of the original sequence, 
which is key for accessing an arbitrary bit. For supporting $\rank$ we need to 
descend using $search$ on $f_b$, and accumulate the $sum$ on the $f_c$ values 
of the left siblings as we descend. For supporting $\select$ we descend using 
$search$ on $f_c$, and accumulate the $sum$ on the $f_b$ values. Finally, for
insertions and deletions of bits we first access the proper position, and then
implement the operation via a constant number of updates, insertions, and 
deletions of codes (for updating, splitting, and merging our triplets). 
Thus we implement all the operations within time $\Order(\log n / \log\log n)$.

We can also support attachment and detachment of contiguous bitmaps, by
applying essentially the same techniques developed in
Section~\ref{sec:subtrees}. We can have a bitmap $B'[0,n'-1]$ and 
insert it between $B[i]$ and $B[i+1]$, or we can detach any $B[i,j]$ from $B$
and convert it into a separate bitmap that can be handled independently. The
complications that arise when cutting the compressed segments at arbitrary
positions are easily handled by splitting codes. Zero-order compression is 
retained as it is due to the sum of the local entropies of the chunks, which 
are preserved (small resulting segments after the splits are merged as usual). 

\subsection{Sequences and Text Indices}

We now aim at maintaining a sequence $S[0,n-1]$ of symbols over an alphabet 
$[1,\sigma]$, so that we can insert and delete symbols, and also compute
symbol $\rank_c(S,i)$ and $\select_c(S,i)$, for $1\le c\le \sigma$. This has
in particular applications to labeled trees: We can store the sequence $S$
of the labels of a tree in preorder, so that $S[\preorderrank(i)]$ is the
label of node $i$. Insertions and deletions of nodes must be accompanied with 
insertions and deletions of their labels at the corresponding preorder
positions, and this can be extended to attaching and detaching subtrees. 
Then we not only have easy access to the label of each node, but can also
use $\rank$ and $\select$ on $S$ to find the $r$-th descendant node labeled 
$c$, or compute the number of descendants labeled $c$. If the balanced 
parentheses represent the tree in DFUDS format~\cite{BenDemMunRamRamRao05}, 
we can instead find the first child of a node labeled $c$ using $\select$.

We divide the sequence into chunks of maximum size $b=\frac{1}{2}\log_\sigma n$
symbols and store them using an extension of the $(c_i,o_i)$ encoding for 
sequences \cite{FerManMakNav07}. Here $c_i = (c_i^1,\ldots,c_i^\sigma)$,
where $c_i^a$ is the number of occurrences of character $a$ in the chunk.
For this code to be of length $\Order(\log n)$ we need 
$\sigma=\Order(\log n / \log\log n)$; more stringent conditions will arise 
later. To this code we add the $b_i$ component as in 
Section~\ref{sec:dynbitmaps}. This takes 
$nH_0(S)+\Order(\frac{n \sigma \log\log n}{\log n})$ bits of space.
In the range min-max tree nodes, which we again assume to hold 
$\Theta(\log^\epsilon n)$ children for some constant $0<\epsilon<1$, instead of
a single $f_c$ function as in Section~\ref{sec:dynbitmaps}, we must store one 
$f_a$ function for each $a \in [1,\sigma]$, requiring extra space 
$\Order(\frac{n\sigma\log\log n}{\log n})$.
Symbol \rank\ and \select\ are easily carried out by considering the proper
$f_a$ function. Insertion and deletion of symbol $a$ is carried out in the 
compressed sequence as before, and only $f_b$ and $f_a$ sums must be 
incremented/decremented along the path to the root.

In case a leaf node splits or merges, we must rebuild the partial sums for
all the $\sigma$ functions $f_a$ (and the single function $f_b$) of a node, 
which requires $\Order(\sigma \log^\epsilon n)$ time. In 
Section~\ref{sec:splits} we have shown how to limit the number of
splits/merges to one per operation, thus we can handle all the operations
within $\Order(\log n / \log\log n)$ time as long as 
$\sigma = \Order(\log^{1-\epsilon} n / \log\log n)$. 
This, again, greatly simplifies the solution by Gonz\'alez and Navarro
\cite{GN08}, which used a collection of searchable partial sums with indels. 

Up to here, the result is useful for small alphabets only. Gonz\'alez and 
Navarro \cite{GN08} handle larger alphabets by using a multiary wavelet 
tree (Section~\ref{sec:mwt}). Recall this is a complete $r$-ary tree of height
$h = \lceil \log_r \sigma \rceil$ that stores a string over alphabet $[1,r]$ 
at each node. It solves all the operations (including insertions and deletions)
by $h$ applications of the analogous operation on the sequences over alphabet 
$[1,r]$. 

Now we set $r = \log^{1-\epsilon} n/\log\log n$, and use the small-alphabet 
solution to handle the sequences stored at the wavelet tree nodes. The height 
of the wavelet tree is 
$h = \Order\left(1+\frac{\log\sigma}{(1-\epsilon)\log\log n}\right)$.
The zero-order entropies of the small-alphabet sequences add up to that of
the original sequence and the redundancies add up to 
$\Order\left(\frac{n\log\sigma}{(1-\epsilon)\log^\epsilon n \log\log n}\right)$.
The operations time is
$\Order\left(\frac{\log n}{\log\log n}\left(1 + \frac{\log\sigma}{(1-\epsilon)\log\log n}\right)\right)$. 
By slightly altering
$\epsilon$, we achieve the first part of Theorem~\ref{thm:seqs}, where the 
$\Order(\sigma\log^\epsilon n)$ term owes to representing the wavelet tree 
itself, which has $\Order(\sigma/r)$ nodes.

For the second part, the arity of the nodes fixed to $\Theta(\log^\epsilon n)$ allows us attach
and detach substrings in time $\Order(r\log^{1+\epsilon} n)$ on a sequence with
alphabet size $r$. This has to be carried out on each of the $\Order(\sigma/r)$
wavelet tree nodes, reaching overall complexity 
$\Order(\sigma\log^{1+\epsilon} n)$.

The theorem has immediate application to the handling of compressed
dynamic text collections, construction of compressed static text collections
within compressed space, and construction of the Burrows-Wheeler transform
(BWT) within compressed space. We state them here for completeness; for their
derivation refer to the original articles \cite{MN08,GN08}. 

The first result refers to maintaining a collection of texts within high-entropy
space, so that one can perform searches and also insert and delete texts. Here 
$H_h$ refers to the $h$-th order empirical entropy of a sequence, see 
e.g.~Manzini \cite{Man01}. We use sampling step $\log_\sigma n \log\log n$
to achieve it.

\begin{theorem} \label{thm:fmindex}
There exists a data structure for handling a collection $\mathcal{C}$ of
texts over an alphabet $[1,\sigma]$ within size
$nH_h(\mathcal{C})+o(n\log\sigma)+\Order(\sigma^{h+1}\log n + m\log n+w)$
bits, simultaneously for all $h$. Here $n$ is the length of the concatenation
of $m$ texts, $\mathcal{C}=0\ T_1 0\ T_2 \cdots$ $0\ T_m$, and we assume that
$\sigma=o(n)$ is the alphabet size and $w=\Omega(\log n)$ is the machine word
size under the RAM model. The structure supports counting of the occurrences
of a pattern $P$ in 
$\Order(|P|\frac{\log n}{\log\log n} (1+\frac{\log \sigma}{\log\log n}))$ time, and
inserting and deleting a text $T$ in
$\Order(\log n + |T|\frac{\log n}{\log\log n} (1+\frac{\log \sigma}{\log\log n}))$ time. 
After counting, any occurrence can be located in time 
$\Order(\frac{\log^2 n}{\log\log n}(1 + \frac{\log\log n}{\log\sigma}))$. Any
substring of length $\ell$ from any $T$ in the collection can be displayed in
time $\Order(\frac{\log^2 n}{\log\log n} (1+ \frac{\log\log n}{\log\sigma}) +
\ell\frac{\log n}{\log\log n}(1+\frac{\log\sigma}{\log\log n}))$.
For $h \le (\alpha \log_\sigma n)-1$, for any constant $0<\alpha<1$, the space
complexity simplifies to $nH_h(\mathcal{C})+o(n\log\sigma)+\Order(m\log n+w)$ bits.
\end{theorem}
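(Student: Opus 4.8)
The plan is to instantiate the dynamic compressed sequence of Theorem~\ref{thm:seqs} as the Burrows-Wheeler transform (BWT) of the concatenated collection $\mathcal{C}$, and then layer the standard dynamic FM-index machinery of M\"akinen and Navarro~\cite{MN08} and Gonz\'alez and Navarro~\cite{GN08} on top of it. Concretely, I would store the BWT of $\mathcal{C}$ as a dynamic sequence $S$ over $[1,\sigma]$ using Theorem~\ref{thm:seqs}, so that $\rank_c(S,i)$, $\select_c(S,i)$, $\ins$, and $\del$ are all supported in time $\Order(\frac{\log n}{\log\log n}(1+\frac{\log\sigma}{\log\log n}))$. The zero-order-entropy bound of that representation, applied chunk-wise to the BWT, is exactly what yields the $nH_h(\mathcal{C})+o(n\log\sigma)$ space simultaneously for all $h$: the BWT sorts the characters by their following context, so the sum of the local zero-order entropies of the length-$h$ contexts equals $nH_h$ up to the $\Order(\sigma^{h+1}\log n)$ term that absorbs the context boundaries and the class components of the encoding.

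First I would implement \emph{counting}: a backward search for a pattern $P$ maintains a suffix-array interval and narrows it with one $\rank$ step (plus a lookup in the cumulative-count array over $[1,\sigma]$) per symbol of $P$, read right to left. Each step costs one sequence operation, so the total is $\Order(|P|\frac{\log n}{\log\log n}(1+\frac{\log\sigma}{\log\log n}))$. Tracking the $m$ text boundaries (the terminators in $\mathcal{C}=0\ T_1 0\ T_2\cdots$) costs $\Order(m\log n)$ bits, the cumulative-count and entropy-encoding overhead is absorbed into $\Order(\sigma^{h+1}\log n)$, and the $+w$ is the system-wide-pointer overhead inherited from the dynamic structure.

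Next I would handle \emph{insertion and deletion of a whole text} $T$ by the dynamic LF-mapping construction: inserting $T$ amounts to inserting its $|T|$ BWT characters one at a time, each located by an LF computation (a $\rank$) and realized by an $\ins$ on $S$, giving $\Order(|T|\frac{\log n}{\log\log n}(1+\frac{\log\sigma}{\log\log n}))$, with an additive $\Order(\log n)$ to refresh the global structures; deletion is symmetric. For \emph{locating} and \emph{displaying} I would keep a sampled suffix array: a dynamic bitmap from Lemma~\ref{lem:bitmap} marks the sampled text positions and a dynamic array of numbers (an instance of Lemma~\ref{lem:dynpartialsums}) stores the sampled values, so an unsampled position is resolved by at most $s=\log_\sigma n\log\log n$ LF steps until a sample is hit. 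Multiplying $s$ by the per-step cost gives $\Order(\frac{\log^2 n}{\log\log n}(1+\frac{\log\log n}{\log\sigma}))$ for locate, and the same bound plus the $\Order(\ell\frac{\log n}{\log\log n}(1+\frac{\log\sigma}{\log\log n}))$ extraction cost for display; the chosen sampling density keeps the sample structures within $o(n\log\sigma)$ bits, and the small-$h$ simplification follows because $\sigma^{h+1}\log n=o(n\log\sigma)$ once $h\le(\alpha\log_\sigma n)-1$.

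The main obstacle will be keeping the sampled suffix array consistent under insertions and deletions of texts: each LF-based character insertion shifts suffix-array positions, so the sampling must be re-anchored as in~\cite{MN08}, and one must argue that this re-sampling is deamortized to preserve the worst-case per-operation bounds without disturbing the space budget. Everything else is a direct substitution of Theorem~\ref{thm:seqs} into the existing frameworks, and the high-order-entropy accounting is the standard BWT argument; hence I would state the result and refer to~\cite{MN08,GN08} for the detailed derivation, exactly as the surrounding text does.
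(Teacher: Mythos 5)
Your proposal is correct and follows essentially the same route as the paper, which itself gives no self-contained proof but states the theorem as a direct consequence of plugging Theorem~\ref{thm:seqs} (the new dynamic sequence representation of the BWT) into the dynamic FM-index framework of \cite{MN08,GN08}, with the same sampling step $\log_\sigma n\log\log n$ you use. Your accounting of the per-operation costs and of the $nH_h+\Order(\sigma^{h+1}\log n)$ space via local zero-order entropies of BWT contexts matches the intended derivation.
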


The second result refers to the construction of the most succinct self-index
for text within the same asymptotic space required by the final structure.
This is tightly related to the construction of the BWT, which has many
applications.

\begin{theorem}
The Alphabet-Friendly FM-index \cite{FerManMakNav07}, as well as the 
BWT \cite{BWT}, of a text $T[0,n-1]$ over an alphabet of size
$\sigma$, can be built using $nH_h(T)+o(n\log\sigma)$ bits, simultaneously for
all $h \le (\alpha \log_\sigma n)-1$ and any constant $0<\alpha<1$, in time
$\Order(n\frac{\log n}{\log\log n} (1+\frac{\log \sigma}{\log\log n}))$.
\end{theorem}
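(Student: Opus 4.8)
The plan is to build the Burrows-Wheeler transform incrementally while keeping it represented, at all times, as a dynamic compressed sequence via Theorem~\ref{thm:seqs}, and to read off the static Alphabet-Friendly FM-index once the whole text has been consumed. First I would instantiate the structure of Theorem~\ref{thm:seqs} over the alphabet $[1,\sigma]$, so that the current sequence $L$ (the BWT of the text suffix processed so far) supports $\rank_c$, $\select_c$, $\ins$, and $\del$ in time $\Order(\frac{\log n}{\log\log n}(1+\frac{\log\sigma}{\log\log n}))$. The crucial observation is that this structure is internally a multiary wavelet tree whose bitmaps are stored with the chunk-wise zero-order encoding of Section~\ref{sec:dynbitmaps}, i.e.\ exactly the layout of the static Alphabet-Friendly FM-index \cite{FerManMakNav07}.

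Next I would scan $T$ from right to left and prepend one symbol at a time using the standard LF-driven BWT-by-insertion procedure. I maintain a position $p$ and the cumulative-count array $C$ (kept as a small searchable partial-sums structure over the $\sigma$ symbols); each step inserts $T[i]$ into $L$ at position $p$ via $\ins$ and then updates $p \leftarrow C[T[i]] + \rank_{T[i]}(L,p)$. Thus each of the $n$ steps costs one $\rank$, one $\ins$, and an $\Order(\log\sigma)$ update of $C$, all dominated by the per-operation cost of Theorem~\ref{thm:seqs}, giving the claimed total time $\Order(n\frac{\log n}{\log\log n}(1+\frac{\log\sigma}{\log\log n}))$. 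At termination $L$ is the BWT of $T$, already stored in the target representation; the BWT is read off directly, and the sparse suffix-array samples that complete the FM-index are added within the same $o(n\log\sigma)$ space and time budget.

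The real content is the space accounting. Although Theorem~\ref{thm:seqs} only advertises the global bound $nH_0(L)+o(n\log\sigma)$, its chunk-wise encoding achieves the sum of the \emph{local} zero-order entropies, and when $L$ is a BWT the analysis of Ferragina et al.\ \cite{FerManMakNav07} shows this finer quantity is $nH_h(T)+o(n\log\sigma)$ \emph{simultaneously} for all $h\le(\alpha\log_\sigma n)-1$; the misalignment between chunks and the $\le\sigma^{h+1}$ context boundaries costs only a lower-order term for $h$ in this range. Hence the final structure meets the stated bound for free. The hard part will be bounding the \emph{peak} space during the incremental build, since the intermediate sequences are BWTs of the suffixes $T[i,n-1]$, whose entropies must be shown not to exceed the final bound by more than a lower-order term; I would dispatch this exactly as M\"akinen and Navarro \cite{MN08} and Gonz\'alez and Navarro \cite{GN08} do, whose entropy bounds on partial BWTs transfer verbatim because our structure grows monotonically and stores the sequence in the identical zero-order-compressed form throughout. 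The BWT of $T$ \cite{BWT} is obtained as an immediate byproduct.
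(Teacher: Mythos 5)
Your proposal is correct and follows essentially the same route as the paper, which gives no self-contained proof but explicitly defers the derivation to the original constructions of M\"akinen--Navarro \cite{MN08} and Gonz\'alez--Navarro \cite{GN08}: one builds the BWT by right-to-left insertion into a dynamic compressed sequence, and the only new ingredient is replacing their sequence structure by that of Theorem~\ref{thm:seqs}, which slashes a $\Theta(\log\log n)$ factor from the time while the $H_h$ space accounting (local zero-order entropies summing to $nH_h$ for $h\le(\alpha\log_\sigma n)-1$, and the peak-space bound on partial BWTs) transfers verbatim from those works. Nothing further is needed.
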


On polylog-sized alphabets, we build the BWT in $o(n\log n)$ time. Even on a 
large alphabet $\sigma=\Theta(n)$, we build the BWT in $o(n\log^2 n)$ time.
This slashes by a $\log\log n$ factor the corresponding previous result
\cite{GN08}. Other previous results that focus in using little space are as
follows. Okanohara and Sadakane \cite{OS09} achieved optimal $\Order(n)$ 
construction time with $\Order(n \log \sigma \log\log_\sigma n)$ bits of extra 
space (apart from the $n\log\sigma$ bits of the sequence). Hon et 
al.~\cite{HonSadSun09}
achieve $\Order(n\log\log\sigma)$ time and $\Order(n\log\sigma)$ bits
of extra space.
Ours is the fastest construction within compressed space.

\section{Concluding remarks}\label{sec:conclusion}

We have proposed flexible and powerful data structures for the succinct 
representation of ordinal trees. For the static case, all the known operations 
are done in constant time using $2n+ \Order(n /\polylog(n))$ bits of space, for
a tree of $n$ nodes and a polylog of any degree. This significantly improves 
upon the redundancy of previous representations.
The core of the idea is the range min-max tree.
This simple data structure reduces all of
the operations to a handful of primitives, which run in constant time on
polylog-sized subtrees. It can be used in standalone form to obtain a simple 
and practical implementation that achieves $\Order(\log n)$ time for all the 
operations. We then show how constant time can be achieved by using the range 
min-max tree as a building block for handling larger trees.

The simple implementation using one range min-max tree has actually been 
implemented and compared with the state of the art over several real-life 
trees~\cite{ACNS10}. It has been
shown that it is by far the smallest and fastest representation in most cases,
as well as the one with widest coverage of operations. It requires around
2.37 bits per node and carries out most operations within the microsecond on a
standard PC.

For the dynamic case, there have been no data structures supporting several of
the usual tree operations. The data structures of this paper support all of the 
operations, including node insertion and deletion, in $\Order(\log n)$ time, 
and a variant supports most of them in $\Order(\log n/\log\log n)$ time, which
is optimal in the dynamic case even for a very reduced set of operations.
They are based on dynamic range min-max trees, and especially the former is
extremely simple and implementable. We expect a performance similar to that of 
the static version in practice. Their flexibility is illustrated by the fact that we can 
support much more complex operations, such as attaching and detaching whole 
subtrees.

Our work contains several ideas of independent interest. An immediate 
application to storing a dynamic sequence of numbers supporting operations
$sum$ and $search$ achieves optimal time $\Order(\log n / \log\log n)$.
Another application is the storage of dynamic compressed sequences achieving
zero-order entropy space and improving the redundancy of previous work. 
It also improves the times 
for the operations, achieving the optimal $\Order(\log n / \log\log n)$ for
polylog-sized alphabets. This in turn has several applications to compressed
text indexing.

P\v{a}tra\c{s}cu and Viola have recently shown that $n+n/w^{\Theta(c)}$ bits are
necessary to compute $\rank$ or $\select$ on bitmaps in time $\Order(t)$ in 
the worst case \cite{PV10}. This lower bound holds also in the subclass of
balanced bitmaps\footnote{M. P\v{a}tra\c{s}cu, personal communication.}
(i.e., those corresponding to balanced parenthesis sequences), which makes our 
redundancy on static trees optimal as well, at least for some of the 
operations: Since {\rank} or {\select} can be obtained from any of the
operations {\depth}, {\preorderrank}, {\postorderrank}, {\preorderselect},
{\postorderselect}, any balanced parentheses representation supporting
any of these operations in time $\Order(c)$ requires $2n+2n/w^{\Theta(c)}$
bits of space. Still, it would be good to show a lower bound for the 
more fundamental set of operations {\findopen}, {\findclose}, and {\enclose}.

On the other hand, the complexity $\Order(\log n/\log\log n)$ is known to be 
optimal for several basic dynamic tree operations, but not for all. It is also 
not clear if the redundancy $\Order(n/r)$ achieved for the dynamic trees, 
$r=\log n$ for the simpler structure and $r=\log\log n/\log n$ for the more 
complex one, is optimal to achieve the corresponding $\Order(r)$ operation 
times. 
Finally, it would be good to achieve $\Order(\log n / \log\log n)$ time for 
all the operations or prove it impossible.

\section*{Acknowledgments}

We thank Mihai P\v{a}tra\c{s}cu for confirming us the construction cost of his
aB-tree and rank/select data structure \cite{Pat08}. And we thank him again
for confirming us that the lower bound \cite{PV10} holds for balanced
sequences.

\bibliographystyle{plain}
\bibliography{compression2,string,mypapers,kai,grossi}

\end{document}